\newtheorem{theorem}{\bf Theorem}
\newtheorem{lemma}{\bf Lemma}
\newtheorem{definition}{\bf Definition}
\newcommand{\m}[1]{\mathbf{#1}_1^m}
\newcommand{\lo}[1]{\log_2\left(#1\right)}
\newcommand{\lon}[1]{\ln\left(#1\right)}
\newcommand{\mk}[1]{\mathbf{#1}_1^k}
\newcommand{\mn}[1]{\mathbf{#1}_1^n}
\newcommand{\nbd}[1]{\mathbf{#1}_{\text{nbd}, i}^n}
\newcommand{\typical}{ \mathcal{T}_{\epsilon, G}}
\newcommand{\typicalc}{ \mathcal{T}_{\epsilon, G}^c}
\newcommand{\pe}{{\langle P_e\rangle}}
\newcommand{\pei}{{\langle P_{e,i}\rangle}}
\title{The price of certainty: ``waterslide curves'' and the gap to capacity}
\author{Anant Sahai and Pulkit Grover \\ Wireless Foundations, Department of EECS
\\University of California at Berkeley, CA-94720, USA\\\{sahai,
pulkit\}@eecs.berkeley.edu}
\begin{document}
\maketitle

\begin{abstract}
The classical problem of reliable point-to-point digital communication
is to achieve a low probability of error while keeping the rate high
and the {\bf total power consumption} small. Traditional
information-theoretic analysis uses explicit models for the
communication channel to study the power spent in 
transmission. The resulting bounds are expressed using `waterfall'
curves that convey the revolutionary idea that unboundedly low
probabilities of bit-error are attainable using only finite transmit
power. However, practitioners have long observed that the decoder
complexity, and hence the total power consumption, goes up when
attempting to use sophisticated codes that operate close to the
waterfall curve.

This paper gives an explicit model for power consumption at an
idealized decoder that allows for extreme parallelism in
implementation. The decoder architecture is in the spirit of message
passing and iterative decoding for sparse-graph codes, but is further
idealized in that it allows for more computational power than is
currently known to be implementable. Generalized sphere-packing
arguments are used to derive lower bounds on the decoding power needed
for any possible code given only the gap from the Shannon limit and
the desired probability of error. As the gap goes to zero, the {\em
  energy per bit} spent in decoding is shown to go to infinity. This
suggests that to optimize total power, the transmitter should operate
at a power that is strictly above the minimum demanded by the Shannon
capacity.

The lower bound is plotted to show an unavoidable tradeoff between the
average bit-error probability and the total power used in transmission
and decoding. In the spirit of conventional waterfall curves, we call
these `waterslide' curves. The bound is shown to be order optimal by
showing the existence of codes that can achieve similarly shaped
waterslide curves under the proposed idealized model of decoding.
\end{abstract}

\IEEEpeerreviewmaketitle

\thanks{{\em Note: A preliminary version of this work with weaker
    bounds was submitted to ITW 2008 in Porto \cite{ITWpaper08}.} }

\section{Introduction}
As digital circuit technology advances and we pass into the era of
billion-transistor chips, it is clear that the fundamental limit on
practical codes is not any nebulous sense of ``complexity'' but the
concrete issue of power consumption.  At the same time, the proposed
applications for error-correcting codes continue to shrink in the
distances involved. Whereas earlier ``deep space communication''
helped stimulate the development of information and coding theory
\cite{Massey92DeepSpace, McElieceShannonLecture}, there is now an
increasing interest in communication over much shorter distances
ranging from a few meters \cite{HowardSchlegel} to even a few
millimeters in the case of inter-chip and on-chip communication
\cite{CheeBusPaper}.



The implications of power-consumption beyond transmit power have begun
to be studied by the community. The common thread in \cite{pagrawal,
  goldsmithbahai,goldsmithwicker, massaad, Vasudevan} is that the
power consumed in processing the signals can be a substantial fraction
of the total power. In~\cite{mobydick}, it is observed that within
communication networks, it is worth developing cross-layer schemes to
reduce the time that devices spend being active.  In~\cite{massaad},
an information-theoretic formulation is considered.  When the
transmitter is in the `on' state, its circuit is modeled as consuming
some fixed power in addition to the power radiated in the transmission
itself. Therefore, it makes sense to shorten the overall duration of a
packet transmission and to satisfy an average transmit-power
constraint by bursty signalling that does not use all available
degrees of freedom.  In~\cite{goldsmithbahai}, the authors take into
account a peak-power constraint as well, as they study the optimal
constellation size for uncoded transmission. A large constellation
requires a smaller `on' time, and hence less circuit power. However, a
larger constellation requires higher power to maintain the same
spacing of constellation points. An optimal constellation has to
balance between the two, but overall this argues for the use of higher
rates. However, none of these really tackle the role of the decoding
complexity itself.

In~\cite{manishAllerton07}, the authors take a more receiver-centric
view and focus on how to limit the power spent in sampling the signal
at the receiver. They point out that empirically for ultrawideband
systems aiming for moderate probabilities of error, this sampling cost
can be larger than the decoding cost! They introduce the ingenious
idea of adaptively puncturing the code at the receiver rather than at
the transmitter. They implicitly argue for the use of longer codes
whose rates are further from the Shannon capacity so that the decoder
has the flexibility to adaptively puncture as needed and thereby save
on total power consumption.

In~\cite{HowardSchlegel}, the authors study the impact of decoding
complexity using the metric of coding gain. They take an empirical
point of view using power-consumption numbers for certain decoder
implementations at moderately low probabilities of error. They observe
that it is often better to use no coding at all if the communication
range is low enough.

In this paper, we take an asymptotic approach to see if considering
decoding power has any fundamental implications as the average
probability of bit error tends to zero. In
Section~\ref{sec:capacityachieving}, we give an asymptotic formulation
of what it should mean to approach capacity when we must consider the
power spent in decoding in addition to that spent in transmission. We
next consider whether classical approaches to encoding/decoding such
as dense linear block codes and convolutional codes can satisfy our
stricter standard of approaching capacity and argue that they cannot.
Section~\ref{sec:sysmod} then focuses our attention on iterative
decoding by message passing and defines the system model for the rest
of the paper.

Section~\ref{sec:lowbound} derives general lower bounds to the
complexity of iterative decoders for BSC and AWGN channels in terms of
the number of iterations required to achieve a desired probability of
error at a given transmit power. These bounds can be considered
iterative-decoding counterparts to the classical sphere-packing bounds
(see e.g.~\cite{Gallager, csiszarkorner}) and are derived by
generalizing the delay-oriented arguments of \cite{PinskerNoFeedback,
  OurUpperboundPaper} to the decoding neighborhoods in iterative
decoding.  These bounds are then used to show that it is in principle
possible for iterative decoders to be a part of a weakly
capacity-achieving communication system. However, the power spent by
our model of an iterative decoder must go to infinity as the
probability of error tends to zero and so this style of decoding rules
out a strong sense of capacity-achieving communication systems.

We discuss related work in the sparse-graph-code context in
Section~\ref{sec:gapandrelated} and make precise the notion of gap to
capacity before evaluating our lower-bounds on the number of
iterations as the gap to capacity closes. We conclude in
Section~\ref{sec:conclusions} with some speculation and point out some
interesting questions for future investigation.

\section{Certainty-achieving codes} \label{sec:capacityachieving}
Consider a classical point-to-point AWGN channel with no fading. For
uncoded transmission with BPSK signaling, the probability of bit-error
is an exponentially decreasing function of the transmitted energy per
symbol. To approach certainty (make the probability of bit-error very
small), the transmitted energy per symbol must go to infinity. If the
symbols each carry a small number of bits, then this implies that
the transmit {\em power} is also going to infinity since the number of
symbols per second is a nonzero constant determined by the desired
rate of $R$ bits per second.

Shannon's genius in \cite{ShannonOriginalPaper} was to recognize that
while there was no way to avoid having the transmitted {\em energy} go
to infinity and still approach certainty, this energy could be
amortized over many bits of information. This meant that the
transmitted {\em power} could be kept finite and certainty could be
approached by paying for it using end-to-end delay (see
\cite{OurUpperboundPaper} for a review) and whatever implementation
complexity is required for the encoding and decoding. For a given
channel and transmit power $P_T$, there is a maximum rate $C(P_T)$
that can be supported. Turned around, this classical result is
traditionally expressed by fixing the desired rate $R$ and looking at
the required transmit power. The resulting ``waterfall curves'' are
shown\footnote{Since the focus of this paper is on average bit error
    probability, these curves combine the results of
    \cite{ShannonLossy, ShannonOriginalPaper} and adjust the required
    capacity by a factor of the relevant rate-distortion function
    $1-h_b(\pe)$.} in Figure~\ref{fig:waterfall1}. These sharp curves are
distinguished from the more gradual ``waterslide curves'' of uncoded
transmission.

\begin{figure}[htb]
\begin{center}
\includegraphics[scale=0.7]{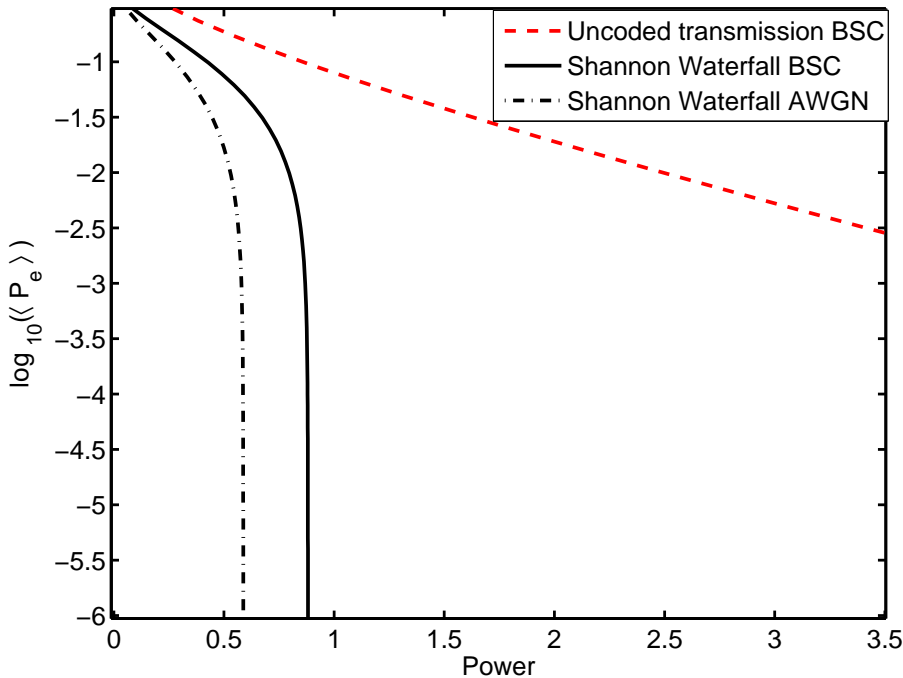}
\caption{The Shannon waterfalls: plots of $\log(\pe)$ vs
  required SNR (in dB) for a fixed rate-$1/3$ code
  transmitted using BPSK over an AWGN channel with hard decisions at
  the detector. A comparison is made with the rate-$1/3$ repetition
  code: uncoded transmission with the same bit repeated three
  times. Also shown is the waterfall curve for the average power
  constrained AWGN channel.}
\label{fig:waterfall1}
\end{center}
\end{figure}
Traditionally, a family of codes was considered capacity achieving if
it could support arbitrarily low probabilities of error at transmit
powers arbitrarily close to that predicted by capacity. The complexity
of the encoding and decoding steps was considered to be a separate and
qualitatively distinct performance metric. This makes sense when
the communication is long-range, since the ``exchange rate'' between
transmitter power and the power that ends up being delivered to the
receiver is very poor due to distance-induced attenuation.

\vspace{0.05in}

{\em In light of the advances in digital circuits and the need for
shorter-range communication, we propose a new way of formalizing what
it means for a coding approach to be ``capacity achieving'' using the
single natural metric: power.}

\subsection{Definitions} 
Assume the traditional information-theoretic model (see
e.g.~\cite{coverthomas, Gallager}) of fixed-rate discrete-time
communication with $k$ total information bits, $m$ channel uses, and
the rate of $R = \frac{k}{m}$ bits per channel use. As is traditional,
the rate $R$ is held constant while $k$ and $m$ are allowed to become
asymptotically large. $\pei$ is the average probability of bit error
on the $i$-th message bit and $\pe = \frac{1}{k} \sum_i \pei$ is used
to denote the overall average probability of bit error. No
restrictions are assumed on the codebooks aside from those required by
the channel model. The channel model is assumed to be indexed by the
power used in transmission. The encoder and decoder are assumed to be
physical entities that consume power according to some model that can
be different for different codes.

Let $\xi_T P_T$ be the actual power used in transmission and let $P_C$
and $P_D$ be the power consumed in the operation of the encoder and
decoder respectively. $\xi_T$ is the exchange rate (total path-loss)
that connects the power spent at the transmitter to the received power
$P_T$ that shows up at the receiver. In the spirit of
\cite{Vasudevan}, we assume that the goal of the system designer is to
minimize some weighted combination $P_{total} = \xi_T P_T + \xi_C P_C
+ \xi_D P_D$ where the vector $\vec{\xi} > 0$. The weights can be
different depending on the application\footnote{For example, in an
  RFID application, the power used by the tag is actually supplied
  wirelessly by the reader. If the tag is the decoder, then it is
  natural to make $\xi_D$ even larger than $\xi_T$ in order to account
  for the inefficiency of the power transfer from the reader to the
  tag. One-to-many transmission of multicast data is another example
  of an application that can increase $\xi_D$. The $\xi_D$ in that
  case should be increased in proportion to the number of receivers
  that are listening to the message.} and $\xi_T$ is tied to the
distance between the transmitter and receiver as well as the
propagation environment.

For any rate $R$ and average probability of bit error $\pe > 0$, we
assume that the system designer will minimize the weighted combination
above to get optimized $P_{total}(\vec{\xi}, \pe, R)$ as well as
constituent $P_T(\vec{\xi}, \pe, R), P_C(\vec{\xi}, \pe, R),$ and
$P_D(\vec{\xi}, \pe, R)$.

\begin{definition} \label{def:certainty}
The {\em certainty} of a particular encoding and decoding system is
the reciprocal of the average probability of bit error.  
\end{definition}
\vspace{0.1in}

\begin{definition} \label{def:weakcertaintyachieving}
An encoding and decoding system at rate $R$ bits per second is {\em
weakly certainty achieving} if $\liminf_{\pe \rightarrow 0} P_T(\vec{\xi},
\pe, R) < \infty$ for all weights $\vec{\xi} > 0$.
\end{definition}
\vspace{0.1in}

If an encoder/decoder system is not weakly certainty achieving, then
this means that it does not deliver on the revolutionary promise of
the Shannon waterfall curve from the perspective of transmit
power. Instead, such codes encourage system designers to pay for
certainty using unbounded transmission power.

\begin{definition} \label{def:strongcertaintyachieving}
An encoding and decoding system at rate $R$ bits per second is {\em
strongly certainty achieving} if $\liminf_{\pe \rightarrow 0}
P_{total}(\vec{\xi}, \pe, R) \neq \infty$ for all weights $\vec{\xi} >
0$. 
\end{definition}
\vspace{0.1in}

A strongly certainty-achieving system would deliver on the full spirit
of Shannon's vision: that certainty can be approached at finite total
power just by accepting longer end-to-end delays and amortizing the
total energy expenditure over many bits. The general distinction
between strong and weak certainty-achieving systems relates to how the
decoding power $P_D(\vec{\xi}, \pe, R)$ varies with the probability of
bit-error $\pe$ for a fixed rate $R$. Does it have waterfall or
waterslide behavior? For example, it is clear that uncoded
transmission has very simple encoding/decoding\footnote{All that is
  required is the minimum power needed to sample the received signal
  and threshold the result.} and so $P_D(\vec{\xi}, \pe, R)$ has a
waterfall behavior.

\begin{definition} \label{def:capacityachieving}
A \{weakly$|$strongly\} certainty-achieving system at rate $R$ bits per
second is also {\em \{weakly$|$strongly\} capacity achieving} if
\begin{equation} \label{eqn:ourcapacityachieving}
\liminf_{\xi_C, \xi_D \rightarrow \vec{0}} \liminf_{\pe \rightarrow 0}
P_T(\vec{\xi}, \pe, R) = C^{-1}(R)
\end{equation}
where $C^{-1}(R)$ is the minimum transmission power that is
predicted by the Shannon capacity of the channel model.
\end{definition}
\vspace{0.1in}

This sense of capacity achieving makes explicit the sense in which we
should consider encoding and decoding to be {\em asymptotically free},
but not actually free. The traditional approach of modeling encoding
and decoding as being actually free can be recovered by swapping the
order of the limits in \eqref{eqn:ourcapacityachieving}.

\begin{definition} \label{def:traditionalcapacityachieving}
An encoding and decoding system is considered {\em traditionally
  capacity achieving} if 
\begin{equation} \label{eqn:oldcapacityachieving}
\liminf_{\pe \rightarrow 0} \liminf_{\xi_C, \xi_D \rightarrow \vec{0}}
P_T(\vec{\xi}, \pe, R) = C^{-1}(R). 
\end{equation} 
where $C^{-1}(R)$ is the minimum transmission power that is predicted
by the Shannon capacity of the channel model.
\end{definition}
\vspace{0.1in}

By taking the limit $(\xi_C, \xi_D) \rightarrow 0$ for a fixed
probability of error, this traditional approach makes it impossible to
capture any fundamental tradeoff with complexity in an asymptotic
sense.

The conceptual distinction between the new
\eqref{eqn:ourcapacityachieving} and old
\eqref{eqn:oldcapacityachieving} senses of capacity-achieving systems
parallels Shannon's distinction between zero-error capacity and
regular capacity \cite{ShannonZeroError}. If $C(\epsilon, d)$ is the
maximum rate that can be supported over a channel using end-to-end
delay $d$ and average probability of error $\epsilon$, then
traditional capacity $C = \lim_{\epsilon \rightarrow 0} \lim_{d
  \rightarrow \infty} C(\epsilon, d)$ while zero-error capacity $C_0 =
\lim_{d \rightarrow \infty} \lim_{\epsilon \rightarrow 0} C(\epsilon,
d)$. When the limits are taken together in some balanced way, then we
get concepts like anytime capacity \cite{ControlPartI,
  OurUpperboundPaper}. It is known that $C_0 < C_{any} < C$ in
general and so it is natural to wonder whether any codes are capacity
achieving in the new stricter sense of
Definition~\ref{def:capacityachieving}.

\subsection{Are classical codes capacity achieving?}


\subsubsection{Dense linear block codes with nearest-neighbor decoding} 

Dense linear fixed-block-length codes are traditionally capacity
achieving under ML decoding \cite{Gallager}. To understand whether
they are weakly certainty achieving, we need a model for the encoding
and decoding power. Let $m$ be the block length of the code. Each
codeword symbol requires $mR$ operations to encode and it is
reasonable to assume that each operation consumes some energy. Thus,
the encoding power is $O(m)$. Meanwhile, a straightforward
implementation of ML (nearest-neighbor) decoding has complexity 
exponential in the block-length and thus it is reasonable to assume
that it consumes an exponential amount of power as well.

The probability of error for ML decoding drops exponentially with $m$
with an exponent that is bounded above by the sphere-packing exponent
$E_{sp}(R)$ \cite{Gallager}. An exponential reduction in the
probability of error is thus paid for using an exponential increase in
decoding power. Consequently, it is easy to see that the certainty
return on investments in decoding power is only polynomial. Meanwhile,
the certainty return on investments in transmit power is exponential
even for uncoded transmission. So no matter what the values are for
$\xi_{D} > 0$, in the high-certainty limit of very low probabilities
of error, an optimized communication system built using dense
linear block codes will be investing ever increasing amounts in
transmit power.

A plot of the resulting waterslide curves for both transmit power and
decoding power are given in
Figure~\ref{fig:waterslideblockML}. Following tradition, the
horizontal axes in the plots are given in normalized SNR units for
power. Notice how the optimizing system invests heavily in additional
transmit power to approach low probabilities of error.

\begin{figure}[htb]
\begin{center}
\includegraphics[scale=0.7]{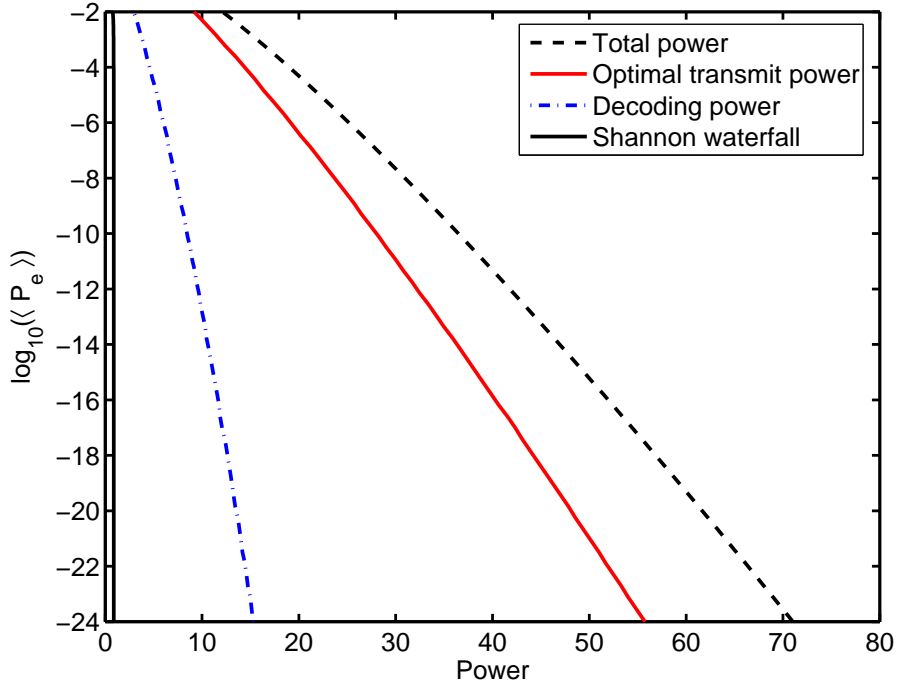}
\caption{The waterslide curves for transmit power, decoding power, and
  the total power for dense linear block-codes of rate $R=1/3$
  under brute-force ML decoding. It is assumed that the normalized energy
  required per operation at the decoder is $E=0.3$ and that it takes
  $2^{mR}\times mR$ operations per channel output to decode using
  nearest-neighbor search for a block length of $m$
  channel uses.}   
\label{fig:waterslideblockML}
\end{center}
\end{figure}

\subsubsection{Convolutional codes under Viterbi decoding} 
For convolutional codes, there are two decoding algorithms, and hence
two different analyses. (See \cite{ForneyML, ForneySeq} for details)
For Viterbi decoding, the complexity per-bit is exponential in the
constraint length $R L_c$ bits. The error exponents with the constraint
length of $L_c$ channel uses are upper-bounded
in~\cite{viterbi}, and this bound is given parametrically by
\begin{equation} \label{eqn:convolutionalexponent}
E_{conv}(R, P_T) = E_0(\rho, P_T) \,\,;\,\,  R = \frac{E_0(\rho, P_T)}{\rho}
\end{equation}
where $E_0$ is the Gallager function \cite{Gallager} and $\rho >
0$. The important thing here is that just as in dense linear block
codes, the certainty return on investments in decoding power is only
polynomial, albeit with a better polynomial than linear block-codes
since $E_{conv}(R,P_T)$ is higher than the sphere-packing bound for
block codes \cite{Gallager}. Thus, an optimized communication system
built using Viterbi decoding will also be investing ever increasing
amounts in transmit power. Viterbi decoding is not weakly certainty
achieving.

A plot of the resulting waterslide curves for both transmit power and
decoding power is given in Figure~\ref{fig:waterslideviterbi}. Notice
that the performance in Figure~\ref{fig:waterslideviterbi} is better
than that of Figure~\ref{fig:waterslideblockML}. This reflects the
superior error exponents of convolutional codes with respect to their
computational parameter --- the constraint length. 

\begin{figure}[htb]
\begin{center}
\includegraphics[scale=0.7]{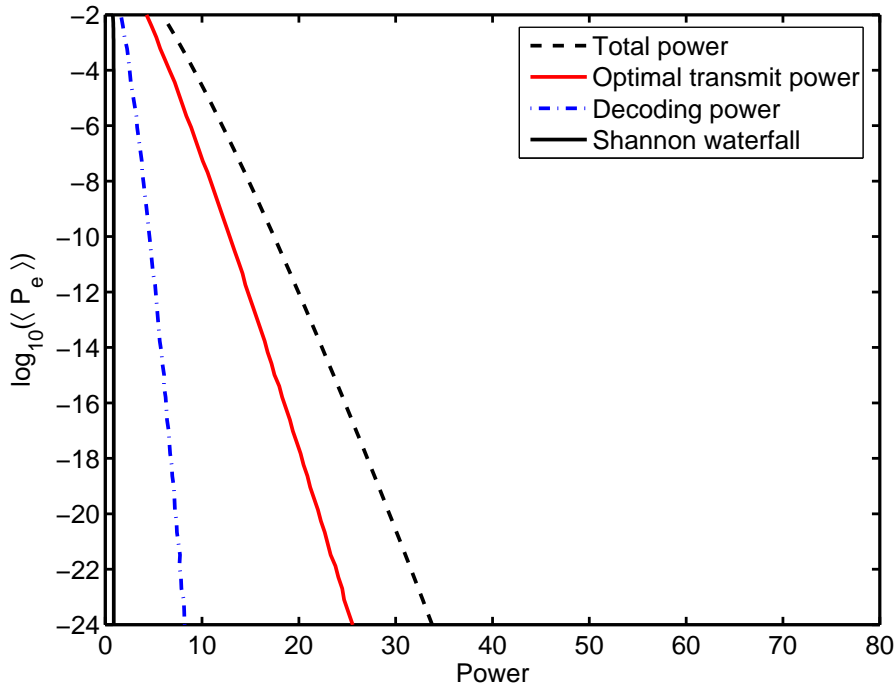}
\caption{The waterslide curves for transmit power, decoding power, and
  the total power for convolutional codes of rate $R=1/3$ used with
  Viterbi decoding. It is assumed that the normalized energy required per
  operation at the decoder is $E=0.3$ and that it takes
  $2^{L_cR}\times L_cR$ operations per channel output to decode using
  Viterbi search for a constraint length of $L_c$ channel uses.}
\label{fig:waterslideviterbi}
\end{center}
\end{figure}

\subsubsection{Convolutional codes under magical sequential decoding} \label{sec:firstmagic}

For convolutional codes with sequential decoding, it is shown
in~\cite{JacobsBerlekamp} that the average number of guesses must
increase to infinity if the message rate exceeds the \textit{cut-off
  rate}, $E_0(1)$. However, below the cut-off rate, the average number
of guesses is finite. Each guess at the decoder costs $L_c R$
multiply-accumulates and we assume that this means that average
decoding power also scales as $O(L_c)$ since at least one guess is
made for each received sample.

For simplicity, let us ignore the issue of the cut-off rate and
further assume that the decoder magically makes just one guess and
always gets the ML answer. The convolutional coding error exponent
\eqref{eqn:convolutionalexponent} still applies, and so the system's
certainty gets an exponential return for investments in decoding
power. It is now no longer obvious how the optimized-system will
behave in terms of transmit power.

For the magical system, the encoder power and decoder power are both
linear in the constraint-length. Group them together with the  
path-loss and normalize units to get a single effective term $\gamma
L_C$. The goal now is to minimize
\begin{equation} \label{eqn:magicperformance}
P_T + \gamma L_c 
\end{equation}
over $P_T$ and $L_C$ subject to the probability of error constraint
that $\ln \frac{1}{\pe} = E_{conv}(R,P_T) \frac{L_c}{R}$. Since we are
interested in the limit of $\ln \frac{1}{\pe} \rightarrow \infty$, it
is useful to turn this around and use Lagrange multipliers. A little
calculation reveals that the optimizing values of $P_T$ and $L_c$ must
satisfy the balance condition
\begin{equation} \label{eqn:magicbalance}
E_{conv}(R,P_T) = \gamma L_C \frac{\partial E_{conv}(R, P_T)}{\partial P_T}
\end{equation}
and so (neglecting integer-effects) the optimizing constraint-length
is either $1$ (uncoded transmission) or 
\begin{equation} \label{eqn:magicoptimalconstraint}
L_c = \frac{1}{\gamma} E_{conv}(R,P_T)/\frac{\partial E_{conv}(R, P_T)}{\partial P_T}.
\end{equation}

To get ever lower values of $\pe$, the transmit power $P_T$ must
therefore increase unboundedly unless the ratio
$E_{conv}(R,P_T)/\frac{\partial E_{conv}(R, P_T)}{\partial P_T}$
approaches infinity for some finite $P_T$. Since the convolutional
coding error exponent \eqref{eqn:convolutionalexponent} does not go to
infinity at a finite power, this requires $\frac{\partial E_{conv}(R,
  P_T)}{\partial P_T}$ to approach zero. For AWGN style channels, this
only occurs\footnote{There is a slightly subtle issue here. Consider
  random codes for a moment. The convolutional random-coding error
  exponent is flat at $E_0(1,P_T)$ for rates $R$ below the
  computational cutoff rate. However, that flatness with rate $R$ is
  not relevant here. For any fixed constellation, the $E_0(1,P_T)$ is
  a strictly monotonically increasing function of $P_T$, even though
  it asymptotes at a non-infinite value. This is not enough since the
  derivative with transmit power still tends to zero only as $P_T$
  goes to infinity.}  as $P_T$ approaches infinity and thus the gap
between $R$ and the capacity gets large.

\begin{figure}[htb]
\begin{center}
\includegraphics[scale=0.7]{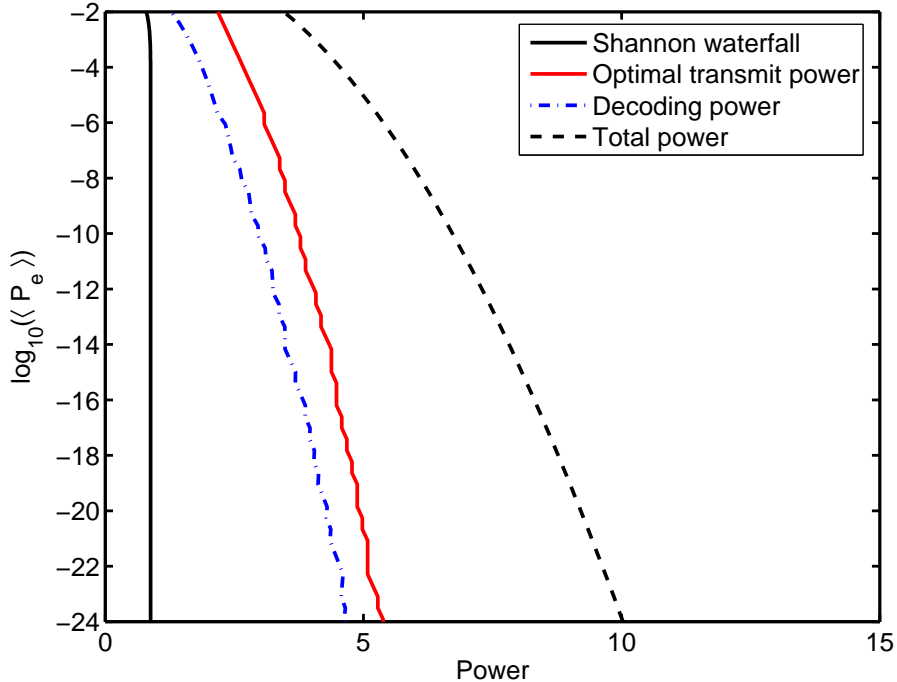}
\caption{The waterslide curves for transmit power, decoding power, and
  the total power for convolutional codes of rate $R=1/3$ used with
  ``magical'' sequential decoding. It is assumed that the normalized energy
  required per operation at the decoder is $E=0.3$ and that the 
  decoding requires just $L_cR$ operations per channel output.} 
\label{fig:waterslidemagic}
\end{center}
\end{figure}

The resulting plots for the waterslide curves for both transmit power
and encoding/decoding power are given in
Figure~\ref{fig:waterslidemagic}. Although these plots are much better
than those in Figure~\ref{fig:waterslideviterbi}, the surprise is that
even such a magical system that attains an error-exponent with
investments in decoding power is unable to be weakly certainty
achieving at any rate. Instead, the optimizing transmit power goes to
infinity.

\subsubsection{Dense linear block codes with magical syndrome
    decoding}

It is well known that linear codes can be decoded by looking at the
syndrome of the received codeword \cite{Gallager}. Suppose that we had
a magical syndrome decoder that could use a free lookup table to
translate the syndrome into the ML corrections to apply to the
received codeword. The complexity of the decoding would just be the
complexity of computing the syndrome. For a dense random linear block
code, the parity-check matrix is itself typically dense and so the
per-channel-output complexity of computing each bit of the syndrome is
linear in the block-length. This gives rise to behavior like that of
magical sequential decoding above and is illustrated in
Figure~\ref{fig:waterslideblockmagic}.

\begin{figure}[htb]
\begin{center}
\includegraphics[scale=0.7]{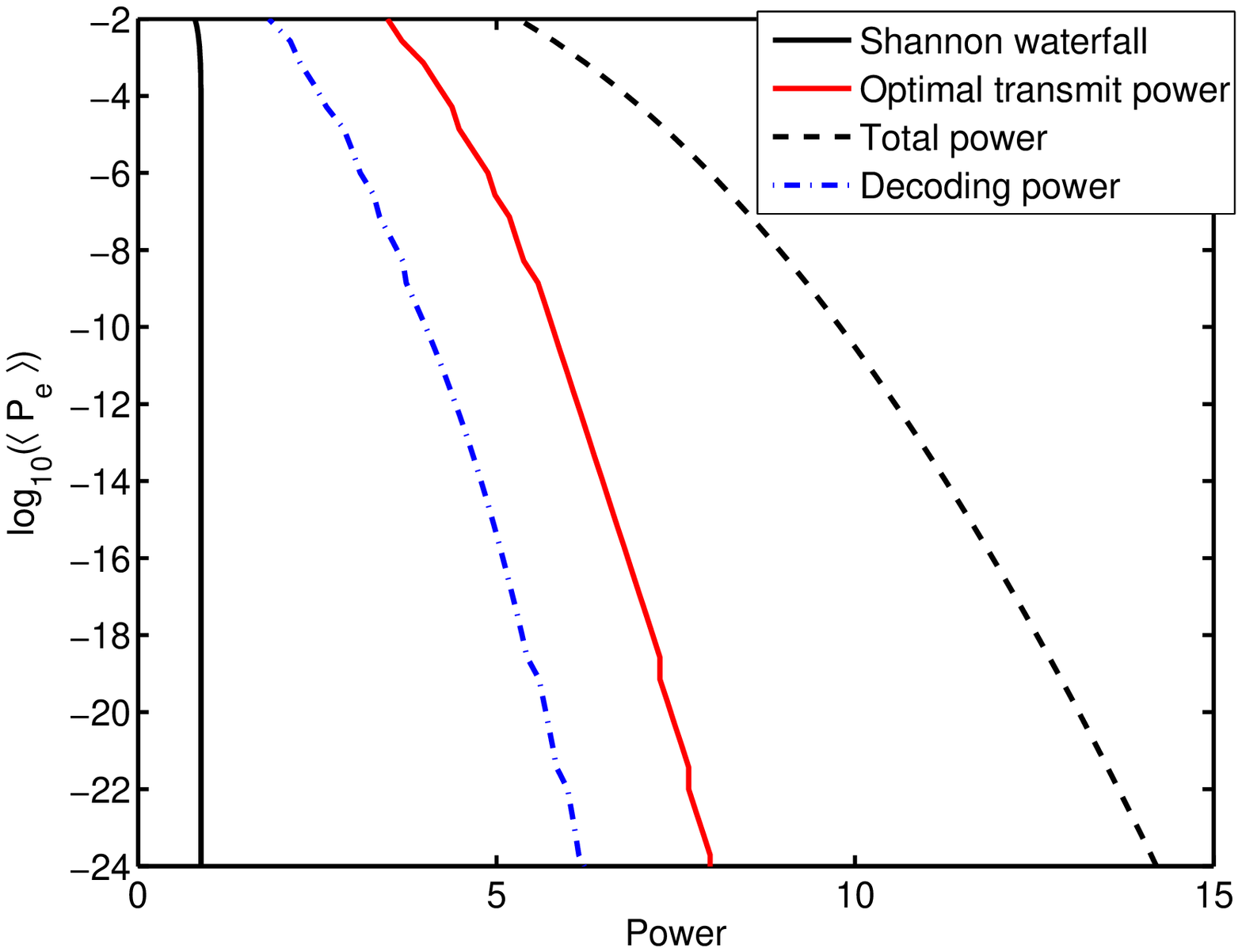}
\caption{The waterslide curves for transmit power, decoding power, and
  the total power for dense linear block-codes of rate $R=1/3$
  under magical syndrome decoding. It is assumed that the normalized
  energy required per operation at the decoder is $E=0.3$ and that the 
  decoding requires just $(1-R) mR$ operations per channel output to
  compute the syndrome.} 
\label{fig:waterslideblockmagic}
\end{center}
\end{figure}

\vspace{0.1in}

From the above discussion, it seems that in order to have even a
weakly certainty-achieving system, the certainty-return for
investments in encoding/decoding power must be faster than
exponential!

\section{Parallel iterative decoding: a new hope} \label{sec:sysmod}

The unrealistic magical syndrome decoder suggests a way forward. If
the parity-check matrix were sparse, then it would be possible to
compute the syndrome using a constant number of operations per
received symbol. If the probability of error dropped with
block-length, that would give rise to an infinite-return on
investments in decoder power. This suggests looking in the direction
of LDPC codes \cite{gallagerThesis}. While magical syndrome decoding
is unrealistic, many have observed that message-passing decoding gives
good results for such codes while being implementable
\cite{ModernCodingTheory}.

Upon reflection, it is clear that parallel iterative decoding based on
message passing holds out the potential for {\em super-exponential}
improvements in probability of error with decoding power. This is
because messages can reach an exponential-sized neighborhood in only a
small number of iterations, and large-deviations thinking suggests
that there is the possibility for an exponential reduction in the
probability of error with neighborhood size. In fact, exactly this
sort of double-exponential reduction in the probability of error under
iterative decoding has been shown to be possible for regular LDPCs
\cite[Theorem 5]{Lentmaier05}.

To make all this precise, we need to fix our model of the problem and
of an implementable decoder. Consider a point-to-point communication
link. An information sequence $\mk{B}$ is encoded into $2^{mR}$ codeword
symbols $\m{X}$, using a possibly randomized encoder.  The observed
channel output is $\m{Y}$. The information sequences are assumed to
consist of iid fair coin tosses and hence the rate of the code is
$R=k/m$. Following tradition, both $k$ and $m$ are considered to be
very large. We ignore the complexity of doing the encoding under the
hope that encoding is simpler than decoding.\footnote{For certain
  LDPC-codes, it is shown in \cite{RichardsonUrbankeEncoding} that
  encoding can be made to have complexity linear in the block-length
  for a certain model of encoding. In our context, linear complexity
  means that the complexity per data bit is constant and thus this
  does not require power at the encoder that grows with either the
  block length or the number of decoder iterations. We have not yet
  verified if the complexity of encoding is linear under our
  computational model.}

Two channel models are considered: the BSC and the power-constrained
AWGN channel. The true channel is always denoted $P$. The underlying
AWGN channel has noise variance $\sigma_P^2$ and the average received
power is denoted $P_T$ so the received SNR is
$\frac{P_T}{\sigma_P^2}$. Similarly, we assume that the BSC has
crossover probability $p$. We consider the BSC to have resulted from
BPSK modulation followed by hard-decision detection on the AWGN
channel and so $p =
\mathcal{Q}\left(\sqrt{\frac{P_T}{\sigma_P^2}}\right)$.

For maximum generality, we do not impose any {\em a priori} structure
on the code itself. Instead, inspired by \cite{TarokhVLSI, Shanbhag,
  HotChannel1, HotChannel2}, we focus on the parallelism of the
decoder and the energy consumed within it. We assume that the decoder
is physically made of computational nodes that pass messages to each
other in parallel along physical (and hence unchanging) wires. A
subset of nodes are designated `message nodes' in that each is
responsible for decoding the value of a particular message bit.
Another subset of nodes (not necessarily disjoint) has members that
are each initialized with at most one observation of the received
channel-output symbols. There may be additional computational nodes
that are just there to help decode.

The implementation technology is assumed to dictate that each
computational node is connected to at most $\alpha+1 > 2$ other
nodes\footnote{In practice, this limit could come from the number of
  metal layers on a chip. $\alpha = 1$ would just correspond to a big
  ring of nodes and is uninteresting for that reason.} with
bidirectional wires. No other restriction is assumed on the topology
of the decoder. In each iteration, each node sends (possibly
different) messages to all its neighboring nodes. {\bf No restriction
  is placed on the size or content of these messages except for the
  fact that they must depend on the information that has reached the
  computational node in previous iterations.} If a node wants to
communicate with a more distant node, it has to have its message
relayed through other nodes. No assumptions are made regarding the
presence or absence of cycles in this graph. The neighborhood size at
the end of $l$ iterations is denoted by $n \leq \alpha^{l+1}$. We
assume $m\gg n$. Each computational node is assumed to consume a fixed
$E_{node}$ joules of energy at each iteration.

Let the average probability of bit error of a
code be denoted by $\pe_P$ when it is used over channel $P$. The goal
is to derive a lower bound on the neighborhood size $n$ as a function
of $\pe_P$ and $R$. This then translates to a lower bound on the
number of iterations which can in turn be used to lower bound the
required decoding power.

Throughout this paper, we allow the encoding and decoding to be
randomized with all computational nodes allowed to share a common pool
of common randomness. We use the term `average probability of error'
to refer to the probability of bit error averaged over the channel
realizations, the messages, the encoding, and the decoding.

\section{Lower bounds on decoding complexity: iterations and power}
\label{sec:lowbound}
In this section, lower bounds are stated on the computational
complexity for iterative decoding as a function of the gap from
capacity. These bounds reveal that the decoding neighborhoods must
grow unboundedly as the system tries to approach capacity. We assume
the decoding algorithm is implemented using the iterative technology
described in Section~\ref{sec:sysmod}. The resulting bounds are then
optimized numerically to give plots of the optimizing transmission and
decoding powers as the average probability of bit error goes to
zero. For transmit power, it is possible to evaluate the limiting
value as the system approaches certainty. However, decoding power is
shown to diverge to infinity for the same limit. This shows that the
lower bound does not rule out weakly capacity-achieving schemes, but
strongly capacity-achieving schemes are impossible using
Section~\ref{sec:sysmod}'s model of iterative decoding.

\subsection{Lower bounds on the probability of error in terms of
  decoding neighborhoods} \label{sec:basicbounds}

The main bounds are given by theorems that capture a local
sphere-packing effect. These can be turned around to give a family of
lower bounds on the neighborhood size $n$ as a function of
$\pe_P$. This family is indexed by the choice of a hypothetical
channel $G$ and the bounds can be optimized numerically for any
desired set of parameters. 

\begin{theorem} \label{thm:basicBSCbound}
Consider a BSC with crossover probability $p < \frac{1}{2}$. Let $n$
be the maximum size of the decoding neighborhood of any individual
bit. The following lower bound holds on the average probability of bit
error. 
\begin{equation}
\label{eq:peip}
\pe_P \geq  \sup_{C^{-1}(R) < g \leq \frac{1}{2}}\frac{h_b^{-1}(\delta(G))}{2}
2^{-nD\left(g||p\right)}\left(\frac{p(1-g)}{g(1-p)}\right)^{\epsilon \sqrt{n}}
\end{equation}
where $h_b(\cdot{})$ is the usual binary entropy
function, $D(g||p) = g\lo{\frac{g}{p}} + (1-g)\lo{\frac{1-g}{1-p}}$ is
the usual KL-divergence, and  
\begin{eqnarray}
\delta(G)&=& 1- \frac{C(G)}{R} \label{eqn:deltadefinition}\\
\text{where } C(G) &=& 1 - h_b(g) \nonumber\\
\text{and }\epsilon &=& \sqrt{\frac{1}{K(g)}\lo{\frac{2}{h_b^{-1}(\delta(G))}}} \label{eqn:epsilondef}\\
\text{where }K(g) &=& \inf_{0<\eta<1-g} \frac{D(g+\eta ||g)}{\eta^2}. \label{eqn:Kdef}
\end{eqnarray}
\end{theorem}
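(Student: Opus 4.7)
The plan is to prove this by a sphere-packing-style converse that compares the true channel $P$ to a \emph{hypothetical} BSC $G$ with crossover probability $g$ chosen so that $C(G) < R$, i.e.\ $\delta(G) > 0$. The intuition is that if the same code were used on the noisier $G$ the average bit error would be forced up by Fano's inequality, and this lower bound can be transferred back to a lower bound on the error over $P$ by controlling the likelihood ratio $P(\nbd{Y})/G(\nbd{Y})$ on each bit's decoding neighborhood. I would organize the argument in four steps.

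First, I would apply Fano's inequality to the entire code under channel $G$. Chaining $H(B_1^k\mid \hat{B}_1^k) \leq \sum_i H(B_i\mid \hat{B}_i) \leq \sum_i h_b(\pei_G) \leq k\,h_b(\pe_G)$ (the last step by concavity of $h_b$) with $I(B_1^k;\hat{B}_1^k) \leq m\, C(G)$ gives $k(1-h_b(\pe_G)) \leq m\, C(G)$, and hence $\pe_G \geq h_b^{-1}(\delta(G))$. Second, I would show that in the length-$n$ neighborhood of any single bit the number of BSC-errors under $G$ concentrates around $gn$. Taking $\typical$ to be the set of neighborhood outputs for which the number of flips from the transmitted codeword lies in $[gn-\epsilon\sqrt{n},\,gn+\epsilon\sqrt{n}]$, a Chernoff bound combined with the curvature estimate $D(g+\eta\|g)\geq K(g)\eta^2$ that defines $K(g)$ in \eqref{eqn:Kdef} gives $\Pr_G[\typicalc] \leq 2\cdot 2^{-K(g)\epsilon^2}$. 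The choice of $\epsilon$ in \eqref{eqn:epsilondef} is tuned precisely to make this at most $h_b^{-1}(\delta(G))/2$, so that for each bit $i$,
\begin{equation*}
\mathbb{E}_G\!\left[\Pr[\text{err}_i\mid \nbd{Y}]\,\mathbf{1}_{\typical}\right] \;\geq\; \pei_G - \tfrac{1}{2}h_b^{-1}(\delta(G)).
\end{equation*}

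Third, I would do an explicit change of measure on any typical $\nbd{y}$. Because the BSC likelihood ratio depends only on the Hamming weight $k$ of the error pattern within the neighborhood,
$\log_2 (P/G) = -nD(g\|p) + (k-gn)\log_2\frac{p(1-g)}{g(1-p)}$,
and since $p<g<\tfrac12$ the coefficient of $(k-gn)$ is negative. The ratio is therefore minimized over the typical window at $k = gn+\epsilon\sqrt{n}$, giving
\begin{equation*}
\frac{P(\nbd{y})}{G(\nbd{y})} \;\geq\; 2^{-nD(g\|p)}\left(\frac{p(1-g)}{g(1-p)}\right)^{\epsilon\sqrt{n}}.
\end{equation*}
Multiplying the previous per-bit inequality by this ratio gives a lower bound on $\pei_P$; averaging over $i$ and invoking the Fano bound $\pe_G \geq h_b^{-1}(\delta(G))$ from step one produces the bound for a single $g$, and taking the supremum over admissible $g \in (C^{-1}(R),\,\tfrac12]$ yields \eqref{eq:peip}.

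The main obstacle is the joint tuning of $\epsilon$ in the typicality and change-of-measure steps: the measure-change overhead grows like $2^{\epsilon\sqrt{n}\cdot|\log_2\frac{p(1-g)}{g(1-p)}|}$ while the concentration gain is only $2^{-K(g)\epsilon^2}$, so $\epsilon$ cannot grow with $n$ without spoiling the error exponent. The choice \eqref{eqn:epsilondef} is the unique sweet spot (up to constants) that simultaneously keeps the change-of-measure penalty sub-exponential in $n$ and makes the typicality shortfall a fixed constant fraction of the Fano lower bound. A secondary but important subtlety is that Fano bounds the \emph{average} $\pe_G$ while the change of measure is per-bit; to handle this I would keep the inequality in per-bit form through steps two and three and average only at the end, using that $n$ is the \emph{maximum} neighborhood size so that the factor $2^{-nD(g\|p)}$ applies uniformly over $i$.
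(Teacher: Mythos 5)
Your proposal follows the paper's overall sphere-packing template (Fano under a worse test channel $G$, then a change of measure back to $P$ on the length-$n$ neighborhood), but it implements the averaging-over-bits step differently, and the difference is worth noting. The paper builds a \emph{nonlinear} map $f(\delta)$ from $\Pr_G(\text{event})$ to a lower bound on $\Pr_P(\text{event})$ in which the typicality threshold $\epsilon(\delta)$ adapts to the event probability; it must then prove $f$ is convex-$\cup$ so Jensen's inequality moves the per-bit/per-message average through $f$. You instead fix $\epsilon$ once and for all using the a priori Fano bound $\pe_G \geq h_b^{-1}(\delta(G))$, which makes the inequality $\pei_P \geq L\,(\pei_G - \tfrac12 h_b^{-1}(\delta(G)))$ \emph{affine} in $\pei_G$ (with constant $L = 2^{-nD(g\|p)}(\tfrac{p(1-g)}{g(1-p)})^{\epsilon\sqrt n}$); averaging over $i$ is then immediate and no convexity lemma is needed. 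Both routes produce the identical final bound, but yours is arguably cleaner because it sidesteps the fairly delicate computation in the paper's Lemma~8 that verifies $f''\geq 0$. (One should still note, as you partly do, that the affine inequality is harmless even for bits with $\pei_G < \tfrac12 h_b^{-1}(\delta(G))$, since the lower bound is then nonpositive and trivially true, so the sum still works.)

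There is one small error you should fix. You define the typical set as a \emph{two-sided} band $n_z \in [gn-\epsilon\sqrt n,\,gn+\epsilon\sqrt n]$ and write $\Pr_G[\typicalc]\leq 2\cdot 2^{-K(g)\epsilon^2}$. With the $\epsilon$ of \eqref{eqn:epsilondef}, $2^{-K(g)\epsilon^2}=\tfrac12 h_b^{-1}(\delta(G))$, so your two-sided bound gives $\Pr_G[\typicalc]\leq h_b^{-1}(\delta(G))$, which is a factor of $2$ too large for the claim you need and would force a slightly larger $\epsilon$ than in the theorem. The fix is to use only the \emph{upper}-tail typical set $\{n_z \leq gn + \epsilon\sqrt n\}$, as the paper does. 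Since $p<g$ makes $\log_2\tfrac{p(1-g)}{g(1-p)}<0$, the likelihood ratio $P/G$ is \emph{decreasing} in $n_z$, so only the upper tail is dangerous for the change of measure; the lower tail has an even more favorable ratio and never needs to be excluded. With the one-sided set, $\Pr_G[\typicalc]\leq 2^{-K(g)\epsilon^2} = \tfrac12 h_b^{-1}(\delta(G))$ exactly, and the rest of your argument goes through as written, recovering \eqref{eq:peip} with the stated $\epsilon$.
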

\begin{proof}
See Appendix~\ref{app:peip}.
\end{proof}
\vspace{0.1in}

\begin{theorem}
\label{thm:basicAWGNbound}
For the AWGN channel and the decoder model in
Section~\ref{sec:sysmod}, let $n$ be the  
maximum size of the decoding neighborhood of any individual
message bit. The following lower bound holds on the average
probability of bit error.
\begin{equation} \label{eq:lbnawgn}
\pe_P \geq  \sup_{\sigma_G^2:\;C(G) < R}
\frac{h_b^{-1}\left(\delta(G)\right)}{2}
\exp\left(-nD(\sigma_G^2||\sigma_P^2)
- \sqrt{n} \left(\frac{3}{2} + 2 \lon{\frac{2}{h_b^{-1}(\delta(G))}}\right) 
\left(\frac{\sigma_G^2}{\sigma_P^2}-1\right)\right)
\end{equation}
where $\delta(G) = 1-C(G)/R$, the capacity
$C(G)=\frac{1}{2}\lo{1+\frac{P_T}{\sigma_G^2}}$, and the KL divergence
$D(\sigma_G^2||\sigma_P^2) = \frac{1}{2}\left[\frac{\sigma_{G}^2}{\sigma_P^2}-1
    -\ln\left(\frac{\sigma_{G}^2}{\sigma_P^2}\right) \right]$. 

The following lower bound also holds on the average probability of bit
error
\begin{equation} \label{eq:lbnawgnNumeric}
\pe_P \geq  \sup_{\sigma_G^2 > \sigma_P^2 \mu(n) :\;C(G) < R}
\frac{h_b^{-1}\left(\delta(G)\right)}{2}
\exp\left(-nD(\sigma_G^2||\sigma_P^2)
- \frac{1}{2} \phi(n, h_b^{-1}\left(\delta(G)\right)) \left(\frac{\sigma_G^2}{\sigma_P^2}-1\right)\right),
\end{equation}
where 
\begin{eqnarray} 
 \mu(n) &=& \frac{1}{2}(1 + \frac{1}{T(n)+1} + \frac{4T(n) +
   2}{nT(n)(1 + T(n))}) \label{eqn:mudef}\\
\text{where }
    T(n)&=& -W_L(-\exp(-1)(1/4)^{1/n}) \label{eqn:Tdef} \\
\text{and }
    W_L(x) \text{ solves }x &=& W_L(x)\exp(W_L(x))  \label{eqn:Wdef} \\
\text{while satisfying }W_L(x) &\leq& -1 \mbox{ }\forall x \in
    [-\exp(-1),0], \nonumber
\end{eqnarray}
and 
\begin{equation}
\phi(n,y) = -n (W_L\left(-\exp(-1)(\frac{y}{2})^{\frac{2}{n}}\right)+1).
\label{eqn:phidef}
\end{equation}
The $W_L(x)$ is the transcendental Lambert $W$ function
\cite{LambertWRef} that is defined implicitly by the relation \eqref{eqn:Wdef}
above.
\end{theorem}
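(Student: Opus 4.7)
The plan is to follow the same skeleton as the proof of Theorem~\ref{thm:basicBSCbound}---Fano's inequality on a hypothetical channel $G$, followed by a change of measure back to the true channel $P$ restricted to the decoding neighborhood---and replace the BSC-specific sphere-packing kernel by a sharp Gaussian concentration step. Fix a hypothetical AWGN channel $G$ with noise variance $\sigma_G^2 > \sigma_P^2$ such that $C(G) < R$, and set $\delta(G) = 1 - C(G)/R > 0$. Because the decoded estimate of the $i$-th bit depends only on the $n$-sample neighborhood $\nbd{Y}$, the whole argument reduces to comparing the two product Gaussian laws on that neighborhood.

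First I would establish, via Fano's inequality together with a pigeonhole/worst-bit argument analogous to the BSC case, that there exists at least one message bit $i^\star$ whose per-bit error probability under channel $G$ satisfies $\pei_G \geq h_b^{-1}(\delta(G))$. This uses only that $R > C(G)$, the iid equiprobable-message assumption, and the fact that the decoded value of bit $i^\star$ is a function of the $n$ neighborhood samples alone. Next I would carry out the change of measure. Writing the log-likelihood ratio on the neighborhood,
\begin{equation*}
\Lambda_n = \ln\frac{dG^n}{dP^n}(\nbd{Y}) = \tfrac{n}{2}\lon{\tfrac{\sigma_P^2}{\sigma_G^2}} + \left(\tfrac{1}{2\sigma_P^2}-\tfrac{1}{2\sigma_G^2}\right)\sum_{j=1}^{n}(Y_j-X_j)^2,
\end{equation*}
so that $\mathbb{E}_G[\Lambda_n] = nD(\sigma_G^2\|\sigma_P^2)$, I would partition the neighborhood outcomes into a typical set $\typical$ (under $G$) on which $\Lambda_n$ is upper-bounded by $nD(\sigma_G^2\|\sigma_P^2)$ plus a $\sqrt{n}$ fluctuation, and its complement $\typicalc$. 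On $\typical$ the bounded likelihood ratio gives
\begin{equation*}
\pei_P \geq e^{-\max_{\typical}\Lambda_n}\bigl(\pei_G - \Pr_G(\typicalc)\bigr).
\end{equation*}
Choosing $\typical$ so that $\Pr_G(\typicalc) \leq \tfrac{1}{2}h_b^{-1}(\delta(G))$ preserves $\tfrac{1}{2}h_b^{-1}(\delta(G))$ of the Fano lower bound and accounts for the prefactor $h_b^{-1}(\delta(G))/2$ that appears in both claimed inequalities.

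Everything then reduces to quantifying the upper tail of the chi-squared-like sum $\sum_j (Y_j - X_j)^2$ under $G$. For bound \eqref{eq:lbnawgn}, a Chernoff/Bernstein-type bound on this sum, tilted at the natural exponential parameter $\tfrac{1}{2\sigma_G^2}-\tfrac{1}{2\sigma_P^2}$ and Taylor-expanded, yields the explicit $\sqrt{n}$ correction $\bigl(\tfrac{3}{2}+2\lon{2/h_b^{-1}(\delta(G))}\bigr)(\sigma_G^2/\sigma_P^2-1)$. For the tighter bound \eqref{eq:lbnawgnNumeric} I would refuse to Taylor-expand: optimize the Chernoff threshold exactly, impose the same tail-probability budget $\tfrac{1}{2}h_b^{-1}(\delta(G))$, and solve the resulting transcendental stationarity equation in closed form. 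It turns out to be of the form $xe^{x} = -e^{-1}(y/2)^{2/n}$ with $y = h_b^{-1}(\delta(G))$, whose root on the $W_L \leq -1$ branch produces exactly the exponent $\tfrac{1}{2}\phi(n,h_b^{-1}(\delta(G)))(\sigma_G^2/\sigma_P^2-1)$ from \eqref{eqn:phidef}; the feasibility constraint $\sigma_G^2 > \sigma_P^2\mu(n)$ with $\mu(n)$ as in \eqref{eqn:mudef} is precisely the condition that this branch returns a meaningful (positive-exponent) value.

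The main obstacle is the bookkeeping in this final step: arranging the tail-probability budget, the prefactor, and the exponent so that the transcendental equation produced by the Chernoff optimization can be recognized as a defining relation of the Lambert $W$ function, and verifying that $W_L \leq -1$ is indeed the physically correct branch (the one selecting an upper-tail threshold above the $G$-mean). The high-level structure---Fano on $G$, change of measure to $P$ on the neighborhood, Gaussian concentration---is identical to the BSC proof; the only genuinely new work is the Gaussian concentration analysis and the Lambert-$W$ manipulation needed to pass from the crude exponent in \eqref{eq:lbnawgn} to the tight numerical exponent in \eqref{eq:lbnawgnNumeric}.
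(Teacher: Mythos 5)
Your high-level plan (Fano on a harder hypothetical Gaussian channel $G$, then a change of measure on the $n$-sample decoding neighborhood back to $P$, with a typical set under $G$ and a Chernoff/$\chi^2$ tail bound, sharpened via an exact Chernoff solve expressed through the Lambert $W$ function) is the right one and matches the paper's strategy. But there is a structural gap in how you pass from channel $G$ to channel $P$. You propose a pigeonhole/worst-bit argument: find one bit $i^\star$ with $\pei_G \geq h_b^{-1}(\delta(G))$ and then map that one bit's error probability to a bound under $P$. That only bounds $\max_i \pei_P$, whereas the theorem asserts a bound on the \emph{average} $\pe_P = \frac{1}{k}\sum_i \pei_P$. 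From a bound on a single bit you would only get $\pe_P \geq \frac{1}{k} f(h_b^{-1}(\delta(G)))$, which is weaker by a vanishing factor $1/k$. The paper avoids this by proving that the change-of-measure map $\delta \mapsto f(\delta)$ (respectively $f_L$) is \emph{convex-$\cup$ and increasing}, and then applying Jensen's inequality to the triple average over bits, messages, and common randomness: $\pe_P \geq \frac{1}{k}\sum_i f(\pei_G) \geq f\!\left(\frac{1}{k}\sum_i \pei_G\right) \geq f(h_b^{-1}(\delta(G)))$. Establishing this convexity is a genuine piece of work — especially for the Lambert-$W$ version $f_L$, whose second derivative has an ambiguous sign in general.

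This is also where your interpretation of the constraint $\sigma_G^2 > \sigma_P^2\mu(n)$ goes wrong. It is not, as you describe it, a ``feasibility constraint'' ensuring the chosen $W_L$-branch ``returns a meaningful (positive-exponent) value'' — the branch $W_L\leq -1$ is determined automatically and the exponent $\phi(n,y)$ is already nonnegative for any $\sigma_G^2 > \sigma_P^2$. Rather, $\mu(n)$ is precisely the threshold above which one can verify $f_L''\geq 0$, i.e., it is the regime in which $f_L$ is provably convex-$\cup$ so that Jensen can be applied. Without the convexity/Jensen step the bound as stated on the average $\pe_P$ does not follow, and without recognizing $\mu(n)$ as the convexity condition you would not know why the supremum in \eqref{eq:lbnawgnNumeric} is restricted the way it is. (A smaller quibble: for \eqref{eq:lbnawgn} the paper does not literally Taylor-expand the Chernoff bound; it dominates the $\sqrt{1+\widetilde{\epsilon}}$ prefactor by an exponential $\exp(K\widetilde{\epsilon})$ for all $\widetilde{\epsilon}\geq 3/\sqrt{n}$, which is what produces the $\tfrac{3}{2}+2\lon{2/\delta}$ coefficient. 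The intent is similar but the mechanism is a uniform domination, not a local expansion.)
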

\begin{proof}
See Appendix~\ref{app:peiawgn}.
\end{proof}

The expression \eqref{eq:lbnawgnNumeric} is better for plotting bounds
when we expect $n$ to be moderate while \eqref{eq:lbnawgn} is more
easily amenable to asymptotic analysis as $n$ gets large.

\subsection{Joint optimization of the weighted total power}
\label{sec:joint}
Consider the total energy spent in transmission. For transmitting $k$
bits at rate $R$, the number of channel uses is $m = k/R$. If each
transmission has power $\xi_T P_T$, the total energy used in
transmission is  $\xi_T P_T m$. 

At the decoder, let the number of iterations be $l$. Assume that each
node consumes $E_{node}$ joules of energy in each iteration. The
number of computational nodes can be lower bounded by the number $m$
of received channel outputs.
\begin{equation}
E_{dec}\geq E_{node}\times m\times l.
\end{equation} 
This gives a lower bound of $P_D \geq E_{node} l$ for decoder
power. There is no lower bound on the encoder complexity and so the
encoder is considered free.  This results in the following bound for
the weighted total power
\begin{equation}
P_{total}\geq \xi_T P_T + \xi_D E_{node} \times l.
\end{equation} 
Using $l \geq \frac{\lo{n}}{\lo{\alpha}}$ as the natural lower bound
on the number of iterations given a desired maximum neighborhood size,
\begin{eqnarray}
\nonumber P_{total} &\geq & \xi_T P_T +\frac{\xi_D E_{node} \lo{n}}{\lo{\alpha}}\\
& \propto & \frac{P_T}{\sigma_P^2} + \gamma \lo{n}
\label{eq:totavg}
\end{eqnarray} 
where $\gamma = \frac{\xi_D E_{node}}{\sigma_P^2 \xi_T \lo{\alpha}}$
is a constant that summarizes all the technology and environmental
terms. The neighborhood size $n$ itself can be lower bounded by
plugging the desired average probability of error into
Theorems~\ref{thm:basicBSCbound} and \ref{thm:basicAWGNbound}.

It is clear from~\eqref{eq:totavg} that for a given rate $R$ bits per
channel use, if the transmit power $P_T$ is extremely close to that
predicted by the channel capacity, then the value of $n$ would have to
be extremely large. This in turn implies that there are a large number
of iterations and thus it would require high power consumption at the
decoder. Therefore, the optimized encoder has to transmit at a power
larger than that predicted by the Shannon limit in order to decrease
the power consumed at the decoder. Also, from~\eqref{eq:peip}, as
$\pe\rightarrow 0$, the required neighborhood size $n \rightarrow
\infty$. This implies that for any fixed value of transmit power, the
power consumed at the decoder diverges to infinity as the probability
of error converges to zero. Hence the total power consumed must
diverge to infinity as the probability of error converges to
zero. This immediately rules out the possibility of having a strongly
certainty-achieving code using this model of iterative decoding. The
price of certainty is infinite power. The only question that remains
is whether the optimal transmitter power can remain bounded or not.

The optimization can be performed numerically once the exchange rate
$\xi_T$ is fixed, along with the technology parameters $E_{node},
\alpha, \xi_C, \xi_D$. Figures \ref{fig:waterfall} and
\ref{fig:waterslideawgn} show the total-power waterslide curves for
iterative decoding assuming the lower bounds.\footnote{The
  order-of-magnitude choice of $\gamma = 0.3$ was made using the
  following numbers. The energy cost of one iteration at one node
  $E_{node} \approx 1$pJ (optimistic extrapolation from the reported
  values in \cite{manishAllerton07, HowardSchlegel}), path-loss $\xi_T
  \approx 86 dB$ corresponding to a range in the tens of meters,
  thermal noise energy per sample $\sigma_P^2 \approx 4 \times
  10^{-21}$J from $kT$ with $T$ around room temperature, and
  computational node connectivity $\alpha = 4$.}  These plots show the
effect of changing the relative cost of decoding. The waterslide
curves become steeper as decoding becomes cheaper and the plotted
scale is chosen to clearly illustrate the double-exponential
relationship between decoder power and probability of error.

\begin{figure}[htb]
\begin{center}
\includegraphics[scale=0.7]{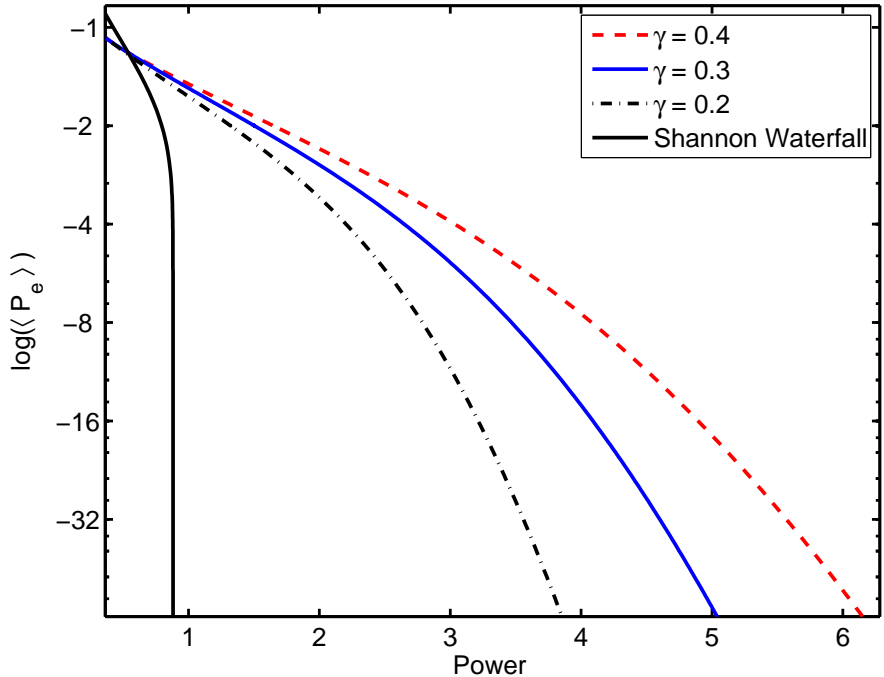}
\caption{The BSC Waterslides: plots of $\log(\pe)$ vs bounds on
  required total power for any fixed rate-$1/3$ code transmitted over
  an AWGN channel using BPSK modulation and hard decisions. $\gamma =
  \xi_D E_{node}/(\xi_T \sigma_P^2 \lo{\alpha})$ denotes the
  normalized energy per node per iteration in SNR units. Total power
  takes into account the transmit power as well as the power consumed
  in the decoder. The Shannon limit is a universal lower bound for all
  $\gamma$.}
\label{fig:waterfall}
\end{center}
\end{figure}

\begin{figure}[htb]
\begin{center}
\includegraphics[scale=0.7]{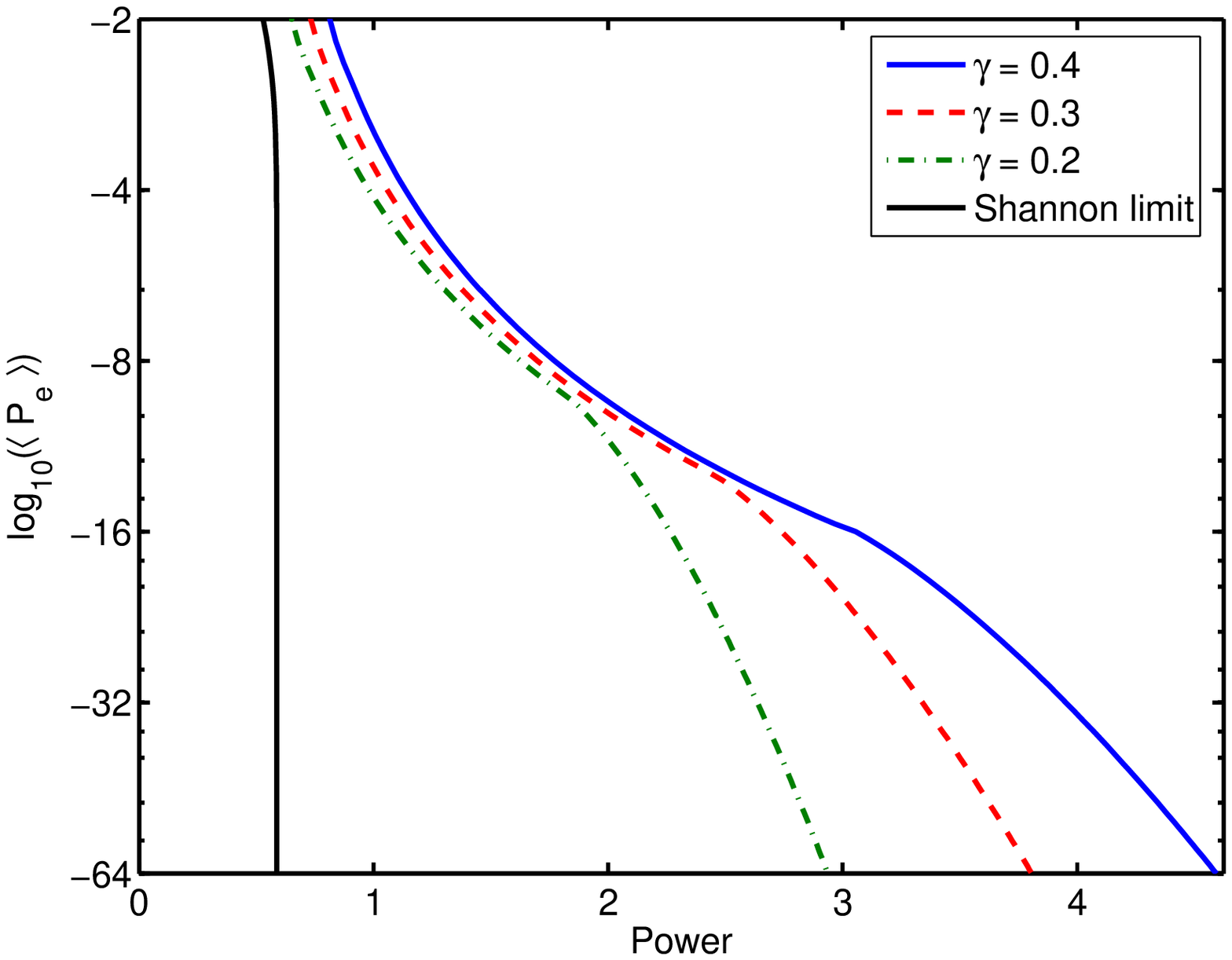}
\caption{The AWGN Waterslides: plots of $\log(\pe)$ vs bounds on
  required total power for any fixed rate-$1/3$ code transmitted over
  an AWGN channel. The initial segment where all the waterslide curves
  almost coincide illustrates the looseness of the bound since that
  corresponds to the case of $n = 1$ or when the bound suggests that
  uncoded transmission could be optimal. However, the probability of
  error is too optimistic for uncoded transmission.}
\label{fig:waterslideawgn}
\end{center}
\end{figure}

Figure~\ref{fig:upperbound} fixes the technology parameters and breaks
out the optimizing transmit power and decoder power as two separate
curves. It is important to note that only the weighted total power
curve is a true bound on what a real system could achieve. The
constituent $P_T$ and $P_D$ curves are merely indicators of what the
qualitative behaviour would be if the true tradeoff behaved like the
lower bound.\footnote{This doesn't mean that the bound is useless
  however. A lower bound on the transmit power can be computed once
  any implementable scheme exists. Simply look up where the
  bounded total power matches the implementable scheme. This will
  immediately give rise to lower bounds on the optimal transmit and
  decoding powers.} The optimal transmit power approaches a finite
limit as the probability of error approaches $0$. This limit can be
calculated directly by examining \eqref{eq:peip} for the BSC and
\eqref{eq:lbnawgn} for the AWGN case. 

\begin{figure}[htb]
\begin{center}
\includegraphics[scale=0.7]{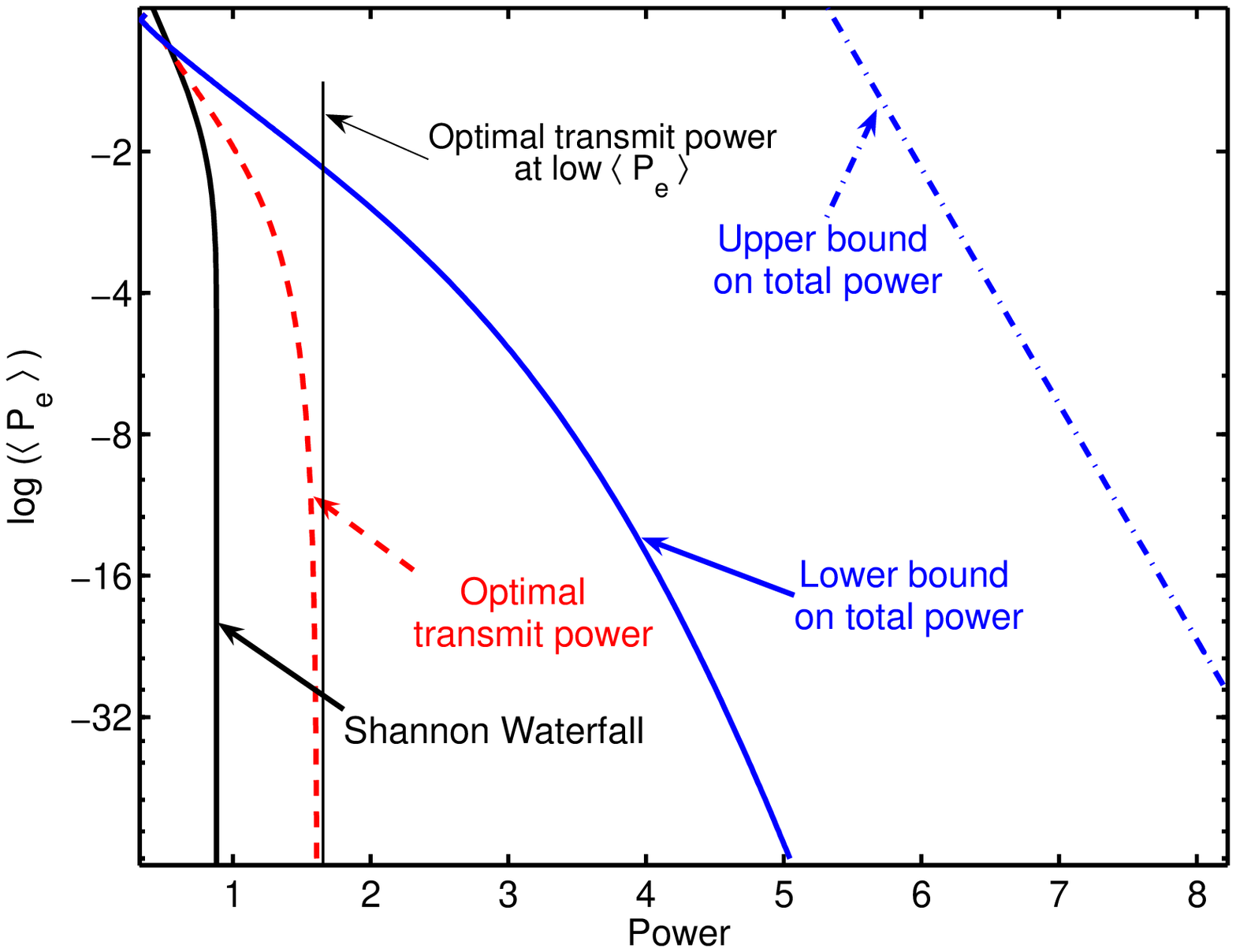}
\caption{The BSC Waterslide curve for $\gamma=0.3$, $R = 1/3$. An
  upper bound (from Section~\ref{sec:upbound}) that is parallel to the
  lower bound is also shown along with the heuristically optimal
  transmit power. This transmit power is larger than that predicted by
  the Shannon limit for small probabilities of error. This suggests
  that the transmitter has to make accommodations for the decoder
  complexity in order to minimize the total power consumption.}
\label{fig:upperbound}
\end{center}
\end{figure}

To compute this limit, recall that the goal is to optimize
$\frac{P_T}{\sigma_P^2} + \gamma \lo{n}$ over $P_T$ so as to satisfy a
probability of error constraint $\pe$, where the probability of error
is tending to zero. Instead of constraining the probability of error
to be small, it is just as valid to constrain $\gamma \log \log
\frac{1}{\pe}$ to be large. Now, take the logarithm of both sides of
\eqref{eq:peip} (or similarly for
\eqref{eq:lbnawgn}). It is immediately clear that the
only order $n$ term is the one that multiplies the divergence. Since
$n \rightarrow \infty$ as $\pe \rightarrow 0$, this term will dominate
when a second logarithm is taken. Thus, we know that the bound on the
double logarithm of the certainty $\gamma \log \log \frac{1}{\pe}
\rightarrow \gamma \lo{n} + \gamma \log f(R,\frac{P_T}{\sigma_P^2})$
where $f(R,\frac{P_T}{\sigma_P^2}) = D(G||P)$ is the divergence
expression involving the details of the channel. It turns out that $G$
approaches $C^{-1}(R)$ when $\pe \rightarrow 0$ since the divergence
is maximized there.

Optimizing for $\zeta = \frac{P_T}{\sigma_P^2}$ by taking derivatives
and setting to zero gives:
\begin{equation} \label{eqn:poweroffsetequation}
f(R,\zeta) / \frac{\partial f(R,\zeta)}{\partial \zeta} =
\gamma.
\end{equation}
It turns out that this has a unique root $\zeta(R,\gamma)$ for all
rates $R$ and technology factors $\gamma$ for both the BSC and the
AWGN channel. 

The key difference between \eqref{eqn:magicbalance} and
\eqref{eqn:poweroffsetequation} is that no term that is related to the
neighborhood-size or number of iterations has survived in
\eqref{eqn:poweroffsetequation}. This is a consequence of the
double-exponential\footnote{In fact, it is easy to verify that
  anything faster than double-exponential will also work.} reduction
in the probability of error with the number of iterations and the fact
that the transmit power shows up in the outer and not the inner
exponential.

To see if iterative decoding allows weakly capacity-achieving codes,
we take the limit of $\xi_D \rightarrow 0$ which implies $\gamma
\rightarrow 0$. \eqref{eqn:poweroffsetequation} then suggests that we
need to solve $f(R,\zeta) / \frac{\partial f(R,\zeta)}{\partial \zeta}
= 0$ which implies that either the numerator is zero or the
denominator becomes infinite. For AWGN or BSC channels, the slope of
the error exponent $f(R,\zeta)$ is monotonically decreasing as the SNR
$\zeta \rightarrow \infty$ and so the unique solution is where
$f(R,\zeta) = D(C^{-1}(R)||P_T) = 0$. This occurs at $P_T = C^{-1}(R)$
and so the lower bounds of this section do not rule out weakly
capacity-achieving codes.

In the other direction, as the $\gamma$ term gets large, the $P_T(R,
\gamma)$ increases. This matches the intuition that as the relative
cost of decoding increases, more power should be allocated to the
transmitter. This effect is plotted in
Figure~\ref{fig:gammavspower}. Notice that it becomes insignificant
when $\gamma$ is very small (long-range communication) but becomes
non-negligible whenever the $\gamma$ exceeds $0.1$.

\begin{figure}[htb]
\begin{center}
\includegraphics[scale=0.7]{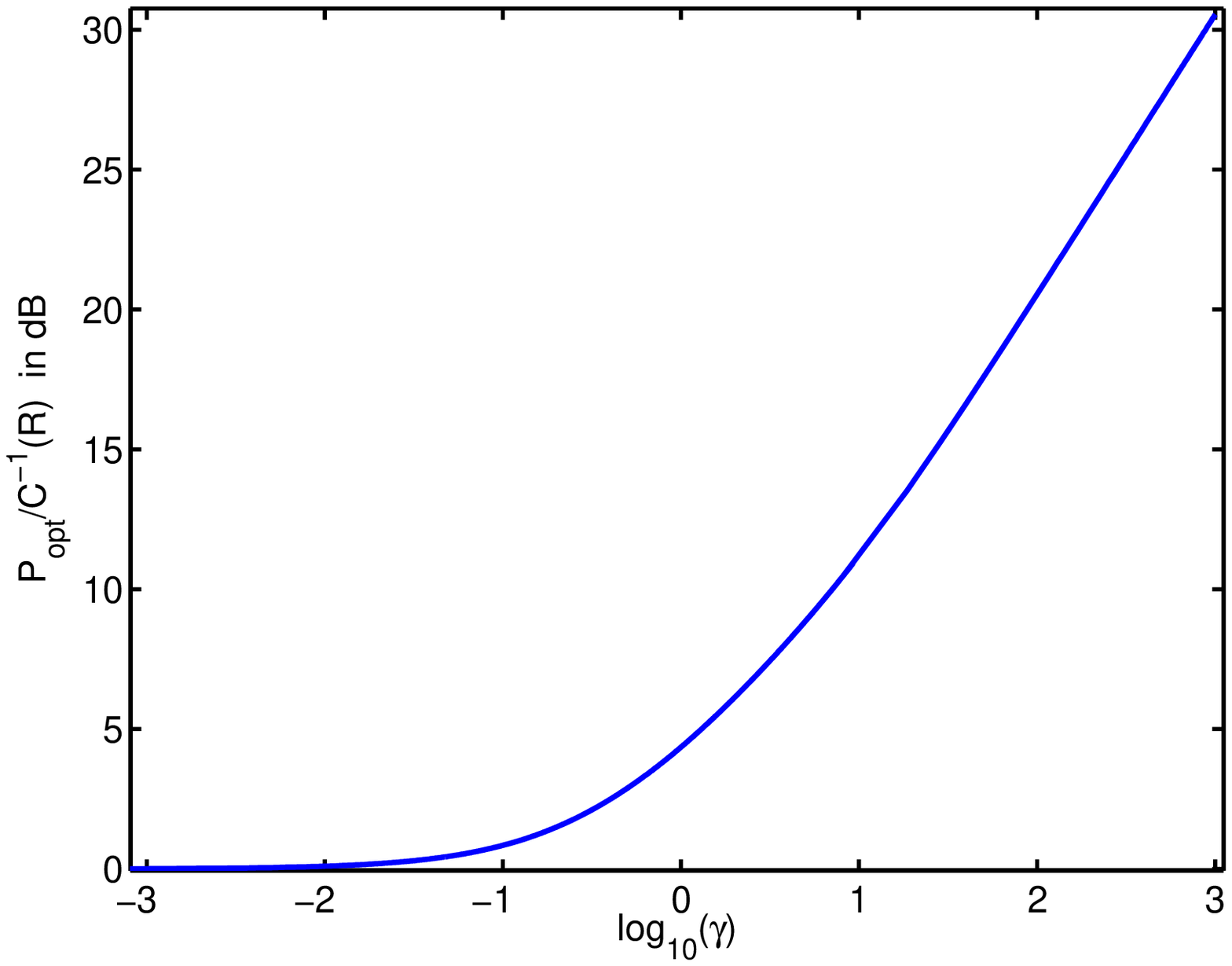}
\caption{The impact of $\gamma$ on the heuristically predicted optimum
  transmit power for the BSC used at $R=\frac{1}{3}$. The plot shows
  the gap from the Shannon prediction in a factor sense.}
\label{fig:gammavspower}
\end{center}
\end{figure}

Figure~\ref{fig:ratevspower} illustrates how the effect varies with
the desired rate $R$. The penalty for using low-rate codes is quite
significant and this gives further support to the lessons drawn from
\cite{massaad, goldsmithbahai} with some additional intuition
regarding why it is fundamental. The error exponent governing the
probability of error as a function of the neighborhood size is limited
by the sphere-packing bound at rate $0$ -- this is finite and the only
way to increase it is to pay more transmit power. However, the
decoding power is proportional to the number of received samples and
this is larger at lower rates.

\begin{figure}[htb]
\begin{center}
\includegraphics[scale=0.7]{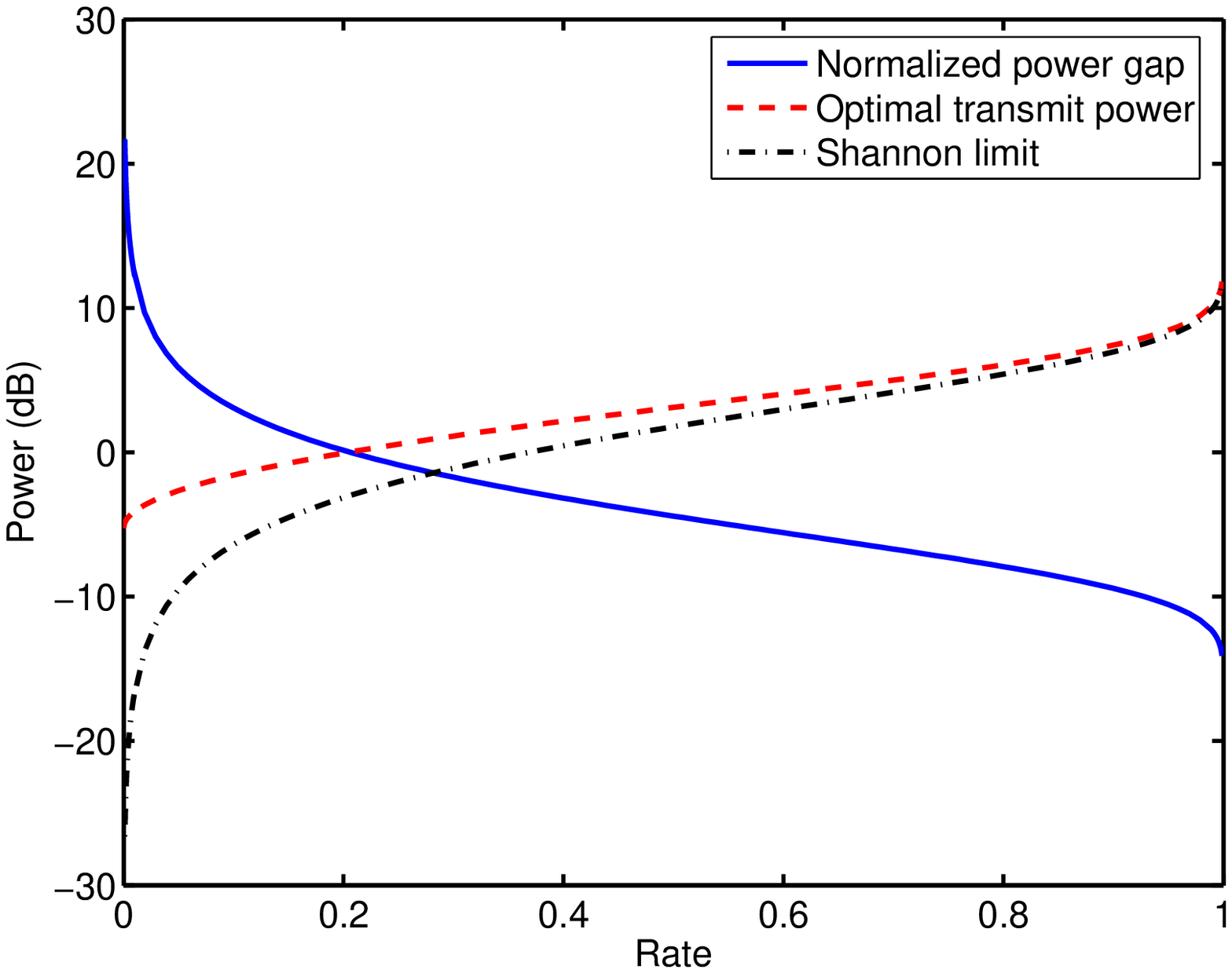}
\caption{The impact of rate $R$ on the heuristically predicted optimum
  transmit power for $\gamma = 0.3$. The plot shows the Shannon
  minimum power, our predictions, and the ratio of the difference between the 
  two to the Shannon minimum. Notice that the predicted extra power is
  very substantial at low data rates.}
\label{fig:ratevspower}
\end{center}
\end{figure}

Finally, the plots were all made assuming that the neighborhood size
$n$ could be chosen arbitrarily and the number of iterations could be
a real number rather than being restricted to integer values. This is
fine when the desired probability of error is low, but it turns out
that this integer effect cannot be neglected when the tolerable
probability of error is high. This is particularly significant when
$\gamma$ is large. To see this, it is useful to consider the boundary
between when uncoded transmission is optimal and when coding might be
competitive. This is done in Figure~\ref{fig:uncoded} where the
minimum $\gamma \lo{\alpha}$ power required for the first decoder
iteration is instead given to the transmitter. Once $\gamma > 10$, it
is hard to beat uncoded transmission unless the desired probability of
error is very low indeed.

\begin{figure}[htb]
\begin{center}
\includegraphics[scale=0.7]{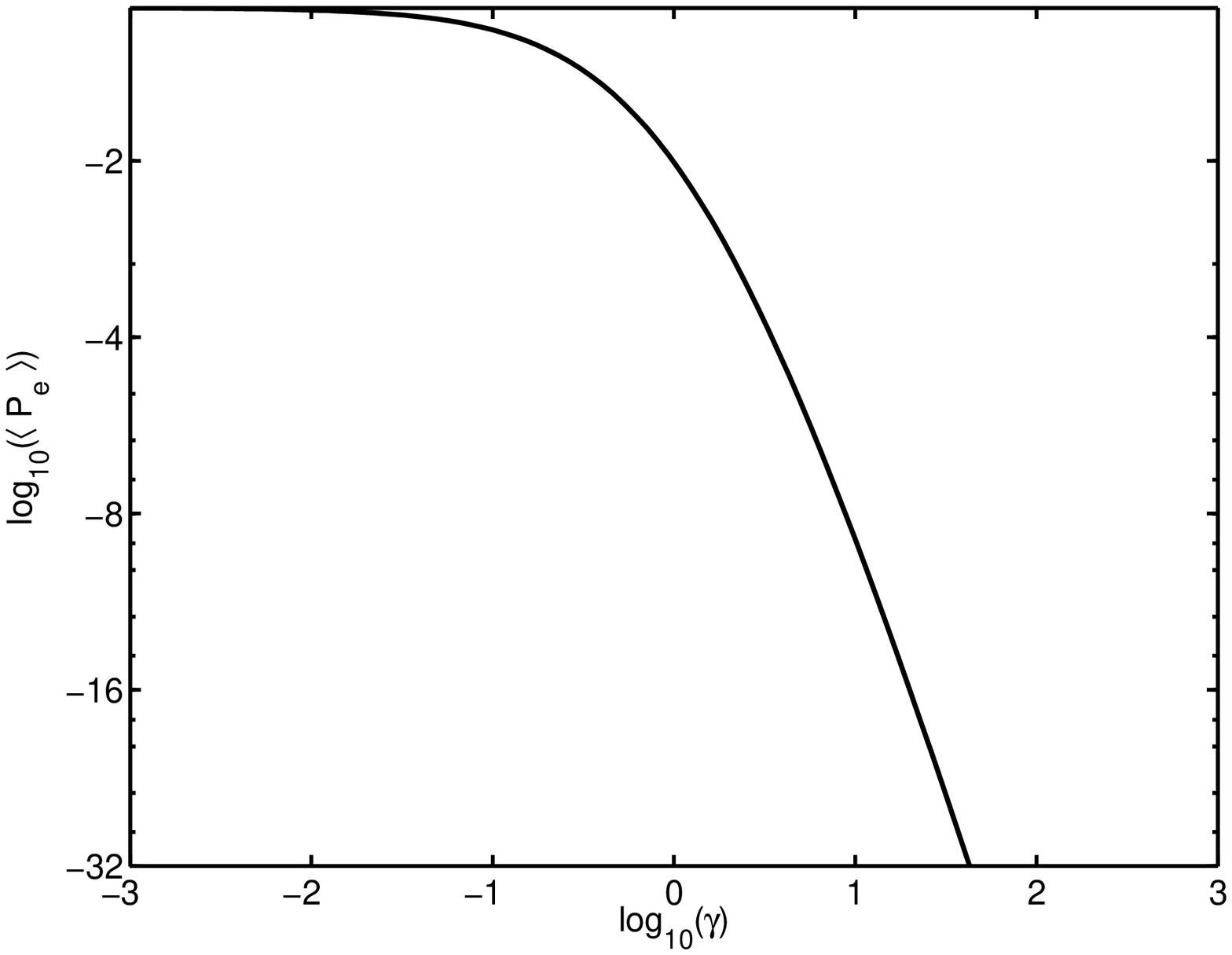}
\caption{The probability of error below which coding could potentially
  be useful. This plot assumes an AWGN channel used with BPSK
  signaling and hard-decision detection, target message rate
  $R=\frac{1}{3}$, and an underlying iterative-decoding architecture
  with $\alpha = 3$. This plot shows what probability of error would
  be achieved by uncoded transmission (repetition coding) if the
  transmitter is given extra power beyond that predicted by Shannon
  capacity. This extra power corresponds to that required to run one
  iteration of the decoder. Once $\gamma$ gets large, there is
  effectively no point in doing coding.}
\label{fig:uncoded}
\end{center}
\end{figure}


\subsection{Upper bounds on complexity}
\label{sec:upbound}

It is unclear how tight the lower bounds given earlier in this section
are. The most shocking aspect of the lower bounds is that they predict
a double exponential improvement in probability of error with the
number of iterations. This is what is leading to the potential for
weakly capacity-achieving codes. To see the order-optimality of the
bound in principle, we will ``cheat'' and exploit the fact that our
model for iterative decoding in Section~\ref{sec:sysmod} does not
limit either the size of the messages or the computational power of
each node in the decoder. This allows us to give upper bounds on the
number of iterations required for a given performance.
\begin{theorem}
\label{thm:upperbd}
There exists a code of rate $R<C$ such that the required neighborhood
size to achieve $\pe$ average probability of error is upper bounded by  
\begin{equation}
n\leq \frac{\lo{\frac{1}{\pe}}}{E_r(R)} 
\end{equation}
where $E_r(R)$ is the random-coding error exponent for the
channel \cite{Gallager}. The required number of iterations to achieve
this neighborhood size is bounded above by 
\begin{equation} \label{eqn:looseiterationbound}
l - 2 \leq 2 \frac{\lo{n}}{\lo{\alpha}}.
\end{equation}
\end{theorem}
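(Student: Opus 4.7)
The plan is to establish both inequalities by combining a classical random-coding existence argument with an explicit tree-structured realization of maximum-likelihood decoding within the model of Section~\ref{sec:sysmod}. The key observation is that this model places no restriction on the size of inter-node messages or on the computation performed within a single node, so any centralized decoder can be emulated by a logarithmic-depth tree of message-passing nodes. The two bounds in the statement decouple cleanly: the first relates the neighborhood size $n$ to $\pe$ via the random-coding exponent, and the second relates the iteration count $l$ to $n$ via the depth of the emulating tree.

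For the bound $n\leq \lo{1/\pe}/E_r(R)$, I would invoke Gallager's random-coding theorem \cite{Gallager} to assert the existence of a length-$n$ block code of rate $R<C$ whose ML decoder achieves average bit-error probability at most $2^{-nE_r(R)}$. Setting this quantity equal to $\pe$ and solving for $n$ yields the stated bound. Each message bit of such a code has a decoding neighborhood consisting of the $n$ channel outputs of its own block, so the maximum neighborhood size of the induced iterative decoder is at most $n$.

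For the iteration bound, I would describe a concrete decoder topology that realizes this ML rule. For each block of $n$ channel outputs, build a balanced $\alpha$-ary tree of computational nodes whose $n$ leaves are each initialized with one channel observation and whose single root will eventually hold the decoded bits. In the first $\lceil \lo{n}/\lo{\alpha}\rceil$ iterations, each internal node concatenates the messages received from its $\alpha$ children into one larger message and forwards it up to its parent; this is permissible precisely because no bound is imposed on message size. After these iterations the root holds all $n$ observations and performs ML decoding in a single computational step, producing the $nR$ decoded bits. These bits then propagate back down the tree to the designated message nodes in another $\lceil \lo{n}/\lo{\alpha}\rceil$ iterations, giving a total of at most $2\lo{n}/\lo{\alpha}+2$ iterations once boundary effects are absorbed into the additive constant.

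The main subtlety, rather than a real obstacle, is conceptual: the construction exploits exactly the two features of Section~\ref{sec:sysmod}'s model that we deliberately refused to constrain, namely unlimited message size and unlimited per-node computation. This is what the theorem's preamble calls ``cheating,'' and it is what makes the resulting bound meaningful only as a matching order-of-magnitude counterpart to the lower bounds of Section~\ref{sec:lowbound}, confirming the double-exponential relationship between $\pe$ and $l$ up to constants. A genuinely implementable companion bound would require replacing the central ML step by message passing whose per-node work scales modestly with $n$, which is a separate and much harder question outside the scope of Theorem~\ref{thm:upperbd}.
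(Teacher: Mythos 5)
Your proposal is essentially the same as the paper's proof: both invoke Gallager's length-$n$ random code under ML decoding to get $\pe \leq 2^{-nE_r(R)}$, then lay out an $\alpha$-ary tree of depth roughly $\log_\alpha n$ over the $n$ observations and use the fact that the model imposes no bound on message size or per-node computation to push all observations together and apply centralized ML, giving an iteration count of $2\log_\alpha n + O(1)$. The only cosmetic difference is the tree protocol: you aggregate up to the root, decode there, and broadcast the answer back down, whereas the paper floods (every node forwards everything it has seen to all neighbors at every iteration, so that after $2d$ iterations, $d$ being the depth, every node has all $n$ observations and can decode locally) — both yield the same bound.
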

\begin{proof}
This ``code'' is basically an abuse of the definitions. We simply use
a rate-$R$ random code of length $n$ from \cite{Gallager} where each
code symbol is drawn iid. Such random codes if decoded using ML
decoding satisfy
\begin{equation} \label{eq:pebit}
\pe_P\leq \pe_{\text{block}}\leq \exp(-nE_r(R)).
\end{equation}

The decoder for each bit needs at most $n$ channel-output symbols to
decode the block (and hence any particular bit).

Now it is enough to show an upper bound on the number of iterations,
$l$. Consider a regular tree structure imposed on the code with a
branching factor of $\alpha$ and thus overall degree $\alpha+1$. Since
the tree would have $\alpha^d$ nodes in it at depth $d$, a required
depth of $d = \frac{\lo{n}}{\lo{\alpha}} + 1$ is sufficient to
guarantee that everything within a block is connected.

Designate some subset of computational nodes as responsible for
decoding the individual message bits. At each iteration, the
``message'' transmitted by a node is just the complete list of its own
observation plus all the messages that node has received so
far. Because the diameter of a tree is no more than twice its depth,
at the end of $2d$ iterations, all the nodes will have received all
the values of received symbols in the neighborhood. They can then each
ML decode the whole block, with average error probability given
by~\eqref{eq:pebit}. The result follows. 
\end{proof}

For both the AWGN channel and BSC, this bound recovers the basic
behavior that is needed to have the probability of error drop
doubly-exponentially in the number of iterations. For the BSC, it is
also clear that since $E_r(R) = D(C^{-1}(R)||p)$ for rates $R$ in the
neighborhood of capacity, the upper and lower bounds essentially agree
on the asymptotic neighborhood size when $\pe \rightarrow 0$. The only
difference comes in the number of iterations. This is at most a factor
of $2$ and so has the same effect as a slightly different $\xi_D$ in
terms of the shape of the curves and optimizing transmit power.

We note here that this upper bound points to the fact that the
decoding model of Section~\ref{sec:sysmod} is too powerful rather than
being overly constraining. It allows free computations at each node
and unboundedly large messages. This suggests that the lower bounds
are relevant, but it is unclear whether they are actually attainable
with any implementable code. We delve further into this in
Section~\ref{sec:conclusions}.


\section{The gap to capacity and related work} \label{sec:gapandrelated}

Looking back at our bounds of Theorems \ref{thm:basicBSCbound} and
\ref{thm:basicAWGNbound}, they seem to suggest that a certain minimum
number ($\log_{\alpha} f(R,P_T)$) of iterations are required and after
that, the probability of error can drop doubly exponentially with
additional iterations. This parallels the result of \cite[Theorem
  5]{Lentmaier05} for regular LDPCs that essentially implies that
regular LDPCs can be considered weakly {\em certainty-achieving}
codes. However, our bounds above indicate that iterative decoding
might be compatible with weakly {\em capacity-achieving} codes as
well. Thus, it is interesting to ask how the complexity behaves if we
operate very close to capacity. Following tradition, denote the
difference between the channel capacity $C(P)$ and the rate $R$ as the
$gap=C(P)-R$.

Since our bounds are general, it is interesting to compare them with
the existing specialized bounds in the vicinity of capacity. After
first reviewing a trivial bound in Section~\ref{sec:baselinebound}  to
establish a baseline, we review some key results in the literature in
Section~\ref{sec:relatedwork}. Before we can give our results, we take
another look at the waterfall curve in Figure~\ref{fig:waterfall1} and
notice that there are a number of ways to approach the Shannon
limit. We discuss our approach in Section~\ref{sec:gap} before giving
our lower bounds to the number of iterations in
Section~\ref{sec:capacityvicinity}.

\subsection{The trivial bound for the BSC} \label{sec:baselinebound}

Given a crossover probability $p$, it is important to note that there
exists a semi-trivial bound on the neighborhood size that only depends
on the $\pe$. Since there is at least one configuration of the
neighborhood that will decode to an incorrect value for this bit, it
is clear that 
\begin{equation} \label{eqn:trivialbound}
\pe \geq p^n.
\end{equation}
This implies that the number of computational iterations for a code
with maximum decoding degree $\alpha+1$ is lower bounded by $\frac{\log
  \log \frac{1}{\pe} - \log \log \frac{1}{p}}{\log \alpha}$. This
trivial bound does not have any dependence on the capacity and so does
not capture the fact that the complexity should increase inversely as
a function of $gap$ as well.

\subsection{Prior work} \label{sec:relatedwork}
There is a large literature relating to codes that are specified by
sparse graphs. The asymptotic behavior as these codes attempt to
approach Shannon capacity is a central question in that
literature. For regular LDPC codes, a result in Gallager's
Ph.D.~thesis~\cite[Pg. 40]{gallagerThesis} shows that the average
degree of the graph (and hence the average number of operations per
iteration) must diverge to infinity in order for these codes to
approach capacity even under ML decoding. It turns out that it is not
hard to specialize our Theorem~\ref{thm:basicBSCbound} to regular LDPC
codes and have it become tighter along the way. Such a modified bound
would show that as the gap \textit{from Gallager's rate bound}
converges to zero, the number of iterations must diverge to
infinity. However, it would permit double-exponential improvements in
the probability of error as the number of iterations increased.

More recently, in~\cite[Pg. 69]{khandekarthesis} and \cite{mceliece},
Khandekar and McEliece conjectured that for all sparse-graph codes,
the number of iterations must scale either multiplicatively as 
\begin{equation} \label{eqn:mceliece1}
\Omega\left(\lo{\frac{1}{\pe}}\frac{1}{gap}\right),
\end{equation}
or additively as
\begin{equation} \label{eqn:mceliece2}
\Omega\left(\frac{1}{gap}+\lo{\frac{1}{\pe}}\right)
\end{equation}
in the near neighborhood of capacity. Here we use the $\Omega$
notation to denote lower-bounds in the order sense of
\cite{CLR89}. This conjecture is based on a graphical argument for the
message-passing decoding of sparse-graph codes over the BEC. The
intuition was that the bound should also hold for general memoryless
channels, since the BEC is the channel with the simplest decoding.

Recently, the authors in~\cite{sason} were able to formalize and prove
a part of the Khandekar-McEliece conjecture for three important
families of sparse-graph codes, namely the LDPC codes, the
Accumulate-Repeat-Accumulate (ARA) codes, and the Irregular-Repeat
Accumulate (IRA) codes. Using some remarkably simple bounds, the
authors demonstrate that the number of iterations usually scales as
$\Omega(\frac{1}{gap})$ for Binary Erasure Channels (BECs). If,
however, the fraction of degree-$2$ nodes for these codes converges to
zero, then the bounds in~\cite{sason} become trivial. The authors note
that all the known traditionally capacity-achieving sequences of these
code families have a non-zero fraction of degree-$2$ nodes. 

In addition, the bounds in \cite{sason} do not imply that the number
of decoding iterations must go to infinity as $\pe \rightarrow 0$. So
the conjecture is not yet fully resolved. We can observe however that
both of the conjectured bounds on the number of decoding iterations
have only a singly-exponential dependence of the probability of error
on the number of iterations. The multiplicative bound
\eqref{eqn:mceliece2} behaves like a block or convolutional code with
an error-exponent of $K\times gap$ and so, by the arguments of
Section~\ref{sec:firstmagic}, is not compatible with such codes being
weakly capacity achieving in our sense. However, it turns out that the
additive bound \eqref{eqn:mceliece1} {\em is} compatible with being
weakly capacity achieving. This is because the main role of the
double-exponential in our derivation is to allow a second logarithm to
be taken that decoupled the term depending on the transmit power from
the one that depends on the probability of error. The conjectured
additive bound \eqref{eqn:mceliece1} has that form already.

\subsection{`Gap' to capacity}
\label{sec:gap}

In the vicinity of capacity, the complication is that for any finite
probability of bit-error, it is in principle possible to communicate
at rates {\bf above} the channel capacity. Before transmission, the
$k$ bits could be lossily compressed using a source code to $\approx
(1-h_b(\pe))k$ bits. The channel code could then be used to protect
these bits, and the resulting codeword transmitted over the
channel. After decoding the channel code, the receiver could in
principle use the source decoder to recover the message bits with an
acceptable average probability of bit error. Therefore, for fixed
$\pe$, the maximum achievable rate is $\frac{C}{1-h_b(\pe)}$.

Consequently, the appropriate \textit{total gap} is
$\frac{C}{1-h_b(\pe)}-R$, which can be broken down as sum of two
`gap's
\begin{equation}
\label{eq:gaps}
\frac{C}{1-h_b(\pe)}-R=\left\{\frac{C}{1-h_b(\pe)}-C\right\}+\{C-R\}
\end{equation}
The first term goes to zero as $\pe\rightarrow 0$ and the second term
is the intuitive idea of gap to capacity. 

The traditional approach of error exponents is to study the behavior
as the gap is fixed and $\pe \rightarrow 0$. Considering the
error exponent as a function of the gap reveals something about how
difficult it is to approach capacity. However, as we have seen in the
previous section, our bounds predict double-exponential improvements
in the probability of error with the number of iterations. In that
way, our bounds share a qualitative feature with the trivial bound of
Section~\ref{sec:baselinebound}. 

It turns out that the bounds of Theorems~\ref{thm:basicBSCbound} and
\ref{thm:basicAWGNbound} do not give very interesting results if we
fix $\pe > 0$ and let $R \rightarrow C$. We need $\pe \rightarrow 0$
alongside $R \rightarrow C$. To capture the intuitive idea of gap,
which is just the second term in \eqref{eq:gaps}, we want to be able
to assume that the effect of the second term dominates the first. This
way, we can argue that the decoding complexity increases to infinity
as $gap\rightarrow 0$ and not just because $\pe \rightarrow 0$. For
this, it suffices to consider $\pe=gap^\beta$ for $\beta > 1$. Our
proof actually gives a result for $\pe=gap^\beta$ for any $\beta>0$.

\subsection{Lower bound on iterations for regular decoding in the
  vicinity of capacity} \label{sec:capacityvicinity}

Theorems~\ref{thm:basicBSCbound} and \ref{thm:basicAWGNbound} can be
expanded asymptotically in the vicinity of capacity to see the order
scaling of the required neighborhood size with the gap to
capacity. Essentially, this shows that the neighborhood size must grow
at least proportional to $\frac{1}{gap^2}$ unless the average
probability of bit error is dropping so slowly with $gap$ that the
dominant gap is actually the $\left(\frac{C}{1-h_b(\pe)}-C\right)$
term in \eqref{eq:gaps}.

\begin{theorem} \label{thm:lbn}
For the problem as stated in Section~\ref{sec:sysmod}, we obtain the
following lower bounds on the required neighborhood size $n$ for $\pe
= gap^{\beta}$ and $gap \rightarrow 0$.  

For the BSC,
\begin{itemize}
\item For $\beta<1$, $n= \Omega\left(\frac{\lo{1/gap}}{gap^{2\beta}}\right)$.
\item For $\beta\geq 1$, $n= \Omega\left(\frac{\lo{1/gap}}{gap^2}\right)$.
\end{itemize}
For the AWGN channel,
\begin{itemize}
\item For $\beta<1$, $n= \Omega\left(\frac{1}{gap^{2\beta}}\right)$.
\item For $\beta\geq 1$, $n= \Omega\left(\frac{1}{gap^2}\right)$.
\end{itemize}
\end{theorem}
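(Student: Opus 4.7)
The plan is to invert the probability-of-error bounds of Theorems~\ref{thm:basicBSCbound} and~\ref{thm:basicAWGNbound} in the small-$gap$ limit by Taylor-expanding the various quantities around the true channel parameters. I will work out the BSC case in detail and indicate where the AWGN case differs.

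Parametrize the hypothetical BSC channel by $g = p + \Delta$ with $\Delta > 0$. The constraint $C(G) < R = C(P) - gap$ linearizes to $\Delta \geq gap/h_b'(p)$, where $h_b'(p) = \log_2((1-p)/p)$. Local expansions around $g = p$ give $D(g\|p) = \Theta(\Delta^2)$, $\log_2\bigl(\tfrac{g(1-p)}{p(1-g)}\bigr) = \Theta(\Delta)$, and $K(g) = \Theta(1)$, while $\delta(G) = \Theta(\Delta - gap/h_b'(p))$; the small-argument asymptotic $h_b^{-1}(y) \sim y/\log_2(1/y)$ as $y \downarrow 0$ controls both $h_b^{-1}(\delta(G))$ and $\epsilon = \Theta(\sqrt{\log(1/\delta(G))})$. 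Taking the base-2 logarithm of \eqref{eq:peip} rearranges it into the quadratic-in-$\sqrt{n}$ inequality
\begin{equation*}
\log_2\!\Bigl(\tfrac{h_b^{-1}(\delta(G))}{2\pe}\Bigr) \;\leq\; n\,D(g\|p) + \epsilon\sqrt{n}\,\log_2\!\tfrac{g(1-p)}{p(1-g)},
\end{equation*}
which is solved for $n$ and then maximized over the choice of $\Delta$.

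For $\beta \geq 1$, set $\Delta = c\,gap$ for a fixed $c > 1/h_b'(p)$. Then $D(g\|p) = \Theta(gap^2)$ and $h_b^{-1}(\delta(G)) = \Theta(gap/\log(1/gap))$, so the left-hand side above is $\Theta((\beta-1)\log(1/gap))$, positive for $\beta>1$ (the $\beta=1$ boundary is handled by a logarithmic margin). A size check shows that the linear-in-$n$ term dominates the $\sqrt{n}$ correction at this scale, yielding $n = \Omega(\log(1/gap)/gap^2)$. For $\beta < 1$, this constant-$c$ choice forces $h_b^{-1}(\delta(G)) < 2\pe$ and makes the bound trivial, so one instead takes $\Delta = \Theta(gap^{s})$ for any $s \in (0,\beta)$. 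Then $D = \Theta(gap^{2s})$, $h_b^{-1}(\delta(G)) = \Theta(gap^{s}/\log(1/gap))$, and the left-hand side becomes $\Theta((\beta-s)\log(1/gap))$, giving $n = \Omega(\log(1/gap)/gap^{2s})$; sending $s \uparrow \beta$ inside the $\Omega$ recovers the claim.

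The AWGN argument follows the same template by taking $\sigma_G^2 = \sigma_P^2(1+\Delta)$ in \eqref{eq:lbnawgn}: expansions give $D(\sigma_G^2\|\sigma_P^2) = \Theta(\Delta^2)$ and $\delta(G) = \Theta(\Delta - \Theta(gap))$, matching the BSC scalings. The structural difference is that the $\sqrt{n}$ coefficient in \eqref{eq:lbnawgn} carries an explicit factor $\ln(2/h_b^{-1}(\delta(G)))\cdot(\sigma_G^2/\sigma_P^2 - 1) = \Theta(\log(1/gap))\cdot\Delta$, rather than the smaller $\sqrt{\log(1/gap)}\cdot\Delta$ that appears in the BSC bound via $\epsilon$. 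This heavier correction effectively costs one $\log(1/gap)$ factor in the final bound and is what produces the cleaner $\Omega(1/gap^{2\beta})$ and $\Omega(1/gap^2)$ statements for the Gaussian case.

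The main obstacle is managing the $nD$ versus $\epsilon\sqrt{n}B$ balance: at the optimising $\Delta$ these two contributions are nearly matched, so careless inversion of the quadratic can silently drop (or spuriously introduce) a $\log(1/gap)$ factor. Checking that the correct term dominates in each $(\beta,s)$ regime, and verifying that the resulting bound is not destroyed by the $h_b^{-1}$ prefactor or the integer constraints on $n$, is the delicate bookkeeping that has to be done carefully.
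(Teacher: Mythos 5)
Your proposal follows essentially the same route as the paper's proof: take logs of the sphere-packing bound, rearrange into a quadratic in $\sqrt{n}$, choose a test channel at distance $\Delta \propto gap^{s}$ (with $s$ bounded by $\min\{\beta,1\}$) from the true one, Taylor-expand the coefficients, and invert the quadratic. The case split you make between $\beta\geq 1$ and $\beta<1$, and the observation that the AWGN bound loses a $\log(1/gap)$ factor because its $\sqrt{n}$ coefficient carries $\ln(2/h_b^{-1}(\delta(G)))$ rather than the milder $\sqrt{\log(2/h_b^{-1}(\delta(G)))}$, are both sound and capture what actually happens in the paper's Appendices~\ref{app:Taylorexpansions} and~\ref{app:approxawgn}.

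One step in your ``size check'' is not quite right, although the conclusion survives. For the BSC with $\Delta=\Theta(gap^{r})$ you assert that the linear-in-$n$ term $nD(g\|p)$ dominates the $\sqrt{n}$ term at the optimizing scale. In fact, with $a=D(g\|p)=\Theta(gap^{2r})$, $b=\epsilon\cdot\log_2\frac{g(1-p)}{p(1-g)}=\Theta\bigl(\sqrt{\log(1/gap)}\,gap^{r}\bigr)$, and $c=\Theta\bigl(-(\beta-r)\log(1/gap)\bigr)$, one has $b^2$ and $-4ac$ of the \emph{same} order $\Theta\bigl(gap^{2r}\log(1/gap)\bigr)$, so the two contributions are genuinely balanced at the root of the quadratic --- this is precisely the paper's ``delicate bookkeeping.'' The reason your answer is still correct is that in this particular regime both the linear-dominated solution $n\approx -c/a$ and the square-root-dominated solution $\sqrt{n}\approx -c/b$ give the same order $\Theta(\log(1/gap)/gap^{2r})$, so the order bound is insensitive to which term wins; but the dominance claim as stated is false and would mislead you if the expansions were slightly different. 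A related soft spot is the $\beta=1$ boundary: with $\Delta=c\,gap$ the quantity $\log_2\bigl(h_b^{-1}(\delta(G))/(2\pe)\bigr)=-(\log_2\log_2(1/gap))+O(1)$ is eventually negative, so the bound degenerates; ``handled by a logarithmic margin'' needs to be made precise. The paper avoids this by working with $r<\min\{\beta,1\}$ strictly (via the auxiliary parameter $d$ in Lemma~\ref{lem:hbbound}) and taking a supremum, accepting that the implied constant degrades as $r\uparrow 1$. Your treatment is otherwise consistent with the paper's proof.
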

\begin{proof}
We give the proof here in the case of the BSC with some details
relegated to the Appendix. The AWGN case follows analogously, with
some small modifications that are detailed in
Appendix~\ref{app:approxawgn}. 

Let the code for the given BSC $P$ have rate $R$. Consider BSC
channels $G$, chosen so that $C(G)<R<C(P)$, where $C(\cdot{})$ maps a
BSC to its capacity in bits per channel use. Taking $\lo{\cdot{}}$ on
both sides of~\eqref{eq:peip} (for a fixed $g$),
\begin{equation}
\label{eq:afterlog}
\lo{\pe_P} \geq  \lo{h_b^{-1}\left(\delta(G)\right)}-1-nD\left(g||p\right)-\epsilon \sqrt{n}\log_2\left(\frac{g(1-p)}{p(1-g)}\right).
\end{equation}
Rewriting~\eqref{eq:afterlog},
\begin{equation}
\label{eq:afterlog2}
nD\left(g||p\right)+\epsilon \sqrt{n}\log_2\left(\frac{g(1-p)}{p(1-g)}\right)+\lo{\pe_P} - \lo{h_b^{-1}\left(\delta(G)\right)}+1\geq 0.
\end{equation}
This equation is quadratic in $\sqrt{n}$. The LHS potentially has two
roots. If both the roots are not real, then the expression is always
positive, and we get a trivial lower bound of $\sqrt{n}\geq
0$. Therefore, the cases of interest are when the two roots are
real. The larger of the two roots is a lower bound on $\sqrt{n}$.  

Denoting the coefficient of $n$ by $a = D\left(g||p\right)$, that of
$\sqrt{n}$ by $b = \epsilon\log_2\left(\frac{g(1-p)}{p(1-g)}\right)$, and
the constant terms by $c = \lo{\pe_P} -
\lo{h_b^{-1}\left(\delta(G)\right)}+1$ in~\eqref{eq:afterlog2}, the
quadratic formula then reveals
\begin{equation}
\label{eq:quad}
\sqrt{n}\geq \frac{-b+\sqrt{b^2-4ac}}{2a}.
\end{equation}
Since the lower bound holds for all $g$ satisfying $C(G)<R=C-gap$, we
substitute $g^*=p+gap^r$, for some $r<1$ and small $gap$. This choice
is motivated by examining Figure~\ref{fig:gopt}. The constraint $r<1$
is imposed because it ensures $C(g^*)<R$ for small enough $gap$.
\begin{lemma}
\label{lem:rlessthan1}
In the limit of $gap\rightarrow 0$, for $g^*=p+gap^r$ to satisfy $C(g^*)<R$, it suffices that $r$ be less than 1.
\end{lemma}
\begin{proof}
\begin{eqnarray*}
C(g^*) &=& C(p+gap^r)\\
 &=& C(p) + gap^r\times C'(p) + o(gap^r)\\
 &\leq & C(p) - gap=R,
\end{eqnarray*}
for small enough $gap$ and $r<1$. The final inequality holds since
$C(p)$ is a monotonically-decreasing concave-$\cap$ function for a BSC
with $p < \frac{1}{2}$ whereas $gap^r$ increases faster than any
linear function of $gap$ when $gap$ is small enough.
\end{proof}

\vspace{0.1in}

\begin{figure}[htb]
\begin{center}
\includegraphics[scale=0.7]{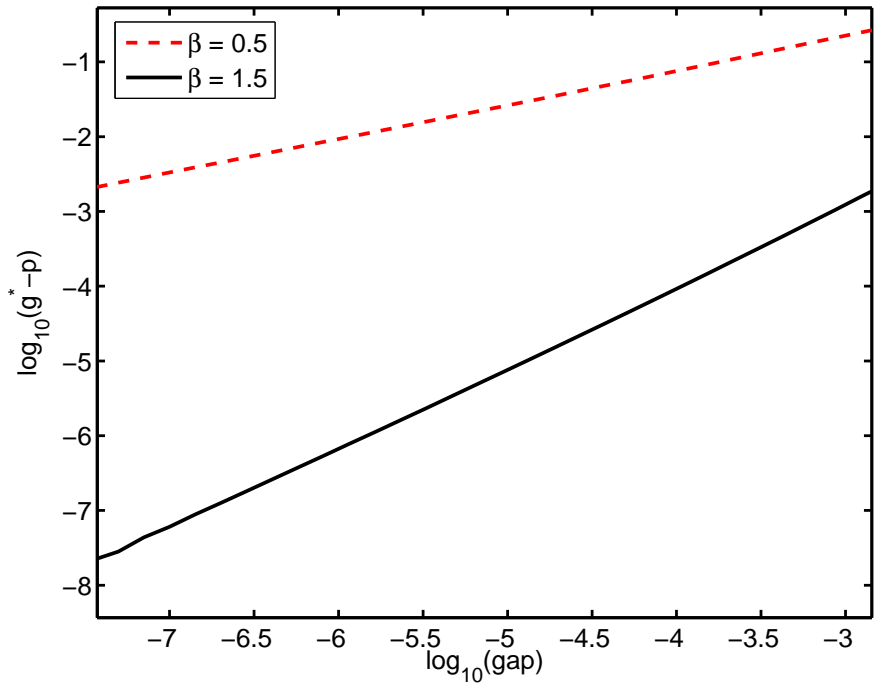}
\caption{The behavior of $g^*$, the optimizing value of $g$ for the
  bound in Theorem~\ref{thm:basicBSCbound}, with $gap$. We plot
  $\log(g_{\mbox{opt}}-p)$ vs $\log(gap)$. The resulting straight
  lines inspired the substitution of $g^*=p+gap^r$.}  
\label{fig:gopt}
\end{center}
\end{figure}

In steps, we now Taylor-expand the terms on the LHS of
\eqref{eq:afterlog2} about $g=p$. 


\begin{lemma}[Bounds on $h_b(p)$ and $h_b^{-1}(p)$ from \cite{HariPersonalComOct}]
\label{lem:hbbound}
For all $d > 1$, and for all $x\in [0,\frac{1}{2}]$ and $y\in [0,1]$
\begin{eqnarray}
h_b(x) &\geq&  2 x \label{eqn:binentropylowerbound}\\
h_b(x) &\leq&  2 x^{1-1/d} d/\ln(2) \\
h_b^{-1}(y) &\geq& y^{\frac{d}{d-1}}\left(\frac{\ln(2)}{2d}
\right)^{\frac{d}{d-1}} \label{eqn:invbinentropylowerbound} \\
h_b^{-1}(y) &\leq& \frac{1}{2} y. \label{eqn:invbinentropyupperbound}
\end{eqnarray}
\end{lemma}
\begin{proof}
See Appendix \ref{app:lemhbbound}.
\end{proof}
\vspace{0.1in}

\begin{lemma}
\label{lem:approxloghbinvdeltag}
\begin{equation} \label{eqn:approxloghbinvdeltag}
\frac{d}{d-1}r\;\lo{gap} -1 + K_1  + o(1)
\leq
\lo{h_b^{-1}\left(\delta(g^*)\right)}
\leq
r\;\lo{gap} -1 + K_2 + o(1) 
\end{equation}
where $K_1= \frac{d}{d-1}\left(\lo{\frac{h_b'(p)}{C(p)}} +
\lo{\frac{\ln(2)}{d}}\right)$ where $d > 1$ is arbitrary and 
$K_2 = \log_2(\frac{h_b'(p)}{C(p)})$. 
\end{lemma}
\begin{proof}
See Appendix~\ref{app:lemapproxloghbinvdeltag}.
\end{proof}
\vspace{0.1in}

\begin{lemma}
\label{lem:approxdiv}
\begin{equation}
D(g^*||p) = \frac{gap^ {2r}}{2 p(1-p)\ln(2)}(1 + o(1)).
\end{equation}
\end{lemma}
\begin{proof}
See Appendix~\ref{app:lemapproxdiv}.
\end{proof}
\vspace{0.1in}

\begin{lemma}
\label{lem:approxlogfrac}
\begin{eqnarray*}
\log_2\left(\frac{g^*(1-p)}{p(1-g^*)}\right)=
\frac{gap^r}{p(1-p)\ln(2)}(1 + o(1)).
\end{eqnarray*}
\end{lemma}
\begin{proof}
See Appendix~\ref{app:lemapproxlogfrac}.
\end{proof}
\vspace{0.1in}

\begin{lemma}
\label{lem:approxepsilon}
\begin{eqnarray*}
\sqrt{\frac{r }{K(p)}} \sqrt{\lo{\frac{1}{gap}}}
(1 + o(1))
\leq \epsilon \leq 
\sqrt{\frac{r d }{(d-1)K(p)}} \sqrt{\lo{\frac{1}{gap}}}
(1 + o(1))
\end{eqnarray*}
where $K(p)$ is from \eqref{eqn:Kdef}. 
\end{lemma}
\begin{proof}
See Appendix~\ref{app:lemapproxepsilon}.
\end{proof}
\vspace{0.1in}

If $c< 0$, then the bound \eqref{eq:quad} is guaranteed to be
positive. For $\pe_P = gap^\beta$, the condition $c<0$ is equivalent
to 
\begin{equation}
\label{eq:clesszero}
\beta\lo{gap}-\lo{h_b^{-1}\left(\delta(g^*)\right)}+1<0
\end{equation}
Since we want~\eqref{eq:clesszero} to be satisfied for all small
enough values of $gap$, we can use the approximations in
Lemma~\ref{lem:approxloghbinvdeltag}--\ref{lem:approxepsilon} and ignore
constants to immediately arrive at the following sufficient condition 
\begin{eqnarray*}
\beta\lo{gap}-\frac{d}{d-1}r\lo{gap}& <& 0\\
\text{i.e. }\;r&<&\frac{\beta (d-1)}{d},
\end{eqnarray*}
where $d$ can be made arbitrarily large. 
Now, using the approximations in
Lemma~\ref{lem:approxloghbinvdeltag} and Lemma~\ref{lem:approxlogfrac}, and
substituting them into~\eqref{eq:quad}, we can evaluate the solution
of the quadratic equation. 

As shown in Appendix~\ref{app:nbsc}, this gives us the following lower
bound on $n$.
\begin{equation}
n \geq \Omega\left(\frac{\lo{1/gap}}{gap^{2r}}\right)
\end{equation}
for any $r<\min\{\beta,1\}$. Theorem~\ref{thm:lbn} follows.
\end{proof}
\vspace{0.1in}

The lower bound on neighborhood size $n$ can immediately be converted
into a lower bound on the minimum number of computational iterations
by just taking $\log_\alpha (\cdot{})$. Note that this is not a
comment about the degree of a potential sparse graph that defines the
code. This is just about the maximum degree of the decoder's
computational nodes and is a bound on the number of computational
iterations required to hit the desired average probability of error.

It turns out to be easy to show that the upper bound of
Theorem~\ref{thm:upperbd} gives rise to the same $\frac{1}{gap^2}$
scaling on the neighborhood size. This is because the random-coding
error exponent in the vicinity of the capacity agrees with the
sphere-packing error exponent which just has the quadratic term coming
from the KL divergence. However, when we translate it from
neighborhoods to iterations, the two bounds asymptotically differ by a
factor of $2$ that comes from \eqref{eqn:looseiterationbound}.

The lower bounds are plotted in Figure~\ref{fig:variousbeta} for
various different values of $\beta$ and reveal a $\log \frac{1}{gap}$
scaling to the required number of iterations when the decoder has
bounded degree for message passing. This is much larger than the
trivial lower bound of $\log \log \frac{1}{gap}$ but is much smaller
than the Khandekar-McEliece conjectured $\frac{1}{gap}$ or
$\frac{1}{gap}\lo{\frac{1}{gap}}$ scaling for the number of iterations
required to traverse such paths toward certainty at capacity.

\begin{figure}[htb]
 \begin{center}
 \includegraphics[scale=0.7]{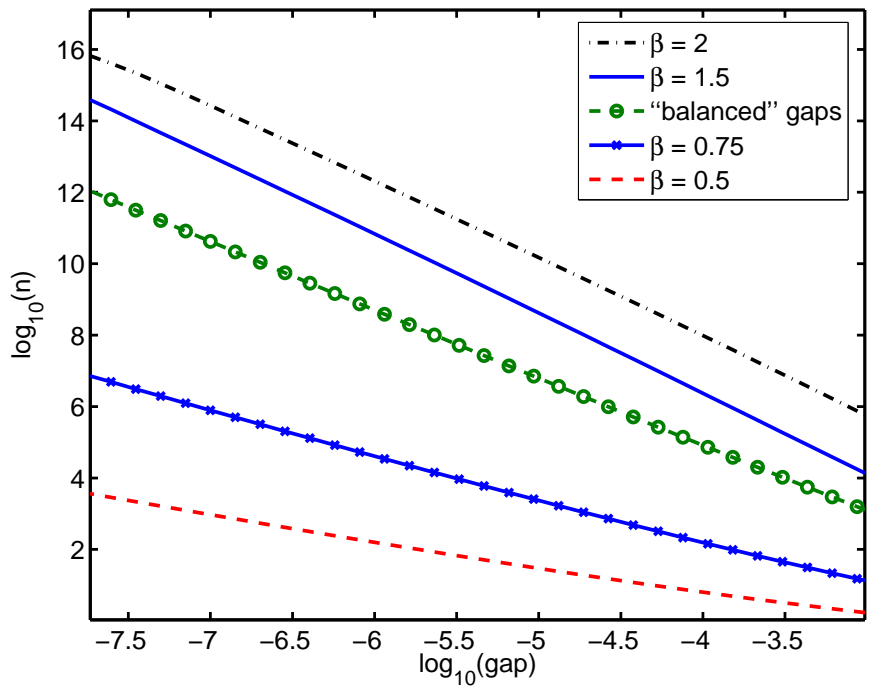}
 \caption{Lower bounds for neighborhood size vs the gap to capacity for
   $\pe=gap^\beta$ for various values of $\beta$. The curve titled
   ``balanced'' gaps shows the behavior for
   $\frac{C}{1-h_b(\pe)}-C=C-R$, that is, the two `gaps' are equal. The
   curves are plotted by brute-force optimization of~\eqref{eq:peip},
   but reveal slopes that are as predicted in Theorem~\ref{thm:lbn}.}
 \label{fig:variousbeta}
 \end{center}
\end{figure}

\section{Conclusions and future work}
\label{sec:conclusions}

In this work, we use the inherently local nature of message-passing
decoding algorithms to derive lower bounds on the number of
iterations. It is interesting to note that with so few assumptions on
the decoding algorithm and the code structure, the number of
iterations still diverges to infinity as $gap\rightarrow 0$. As
compared to~\cite{grover} where a similar approach is adopted, the
bounds here are stronger, and indeed tight in an order-sense for the
decoding model considered. To show the tightness (in order) of these
bounds, we derived corresponding upper bounds that behave similar to
the lower bounds, but these exploit a loophole in our complexity
model. Our model only considers the limitations induced by the
internal communication structure of the decoder --- it does not
restrict the computational power of the nodes within the decoder. Even
so, there is still a significant gap between our upper and lower
bounds in terms of the constants and we suspect this is largely
related to the known looseness of the sphere-packing bound
\cite{sasonshamai}, as well as our coarse bounding of the required
graph diameter. Our model also does not address the power requirements
of encoding.

Because we assume little about the code structure, the bounds here are
much more optimistic than those in~\cite{sason}. However, it is
unclear to what extent the optimism of our bound is an artifact. After
all, \cite{Lentmaier05} does get double-exponential reductions in
probability of error with additional iterations, but for a family of
codes that does not seem to approach capacity. This suggests that an
investigation into expander codes might help resolve this question
since expander codes can approach capacity, be decoded using a circuit
of logarithmic depth (like our iterations), and achieve error
exponents with respect to the overall block length
\cite{BargZemov}. It may very well be that expanders or expander-like
codes can be shown to be weakly capacity achieving in our sense.

For any kind of capacity-achieving code, we conjecture that the
optimizing transmit power will be the sum of three terms
$$P_T^* = C^{-1}(R) + \mbox{Tech}(\vec{\xi}, \alpha, E_{node}, R) \pm
A(\pe, R,\vec{\xi}, \alpha, E_{node}).$$
\begin{itemize} 
\item $C^{-1}(R)$ is the prediction from Shannon's capacity.
\item $\mbox{Tech}(\vec{\xi}, \alpha, E_{node}, R)$ is the minimum
  extra transmit power that needs to be used asymptotically to help
  reduce the difficulty of encoding and decoding for the given
  application and implementation technology. Solving
  \eqref{eqn:poweroffsetequation} and subtracting $C^{-1}(R)$ gives a
  heuristic target value to aim for, but it remains an open problem to
  get a tight estimate for this term.
\item $A(\pe, R,\vec{\xi}, \alpha, E_{node})$ is an amount by which we
  should increase or reduce the transmit power because we are willing
  to tolerate some finite probability of error and the non-asymptotic
  behavior is still significant. This term should go to zero as $\pe
  \rightarrow 0$.
\end{itemize}

Understanding the second term $\mbox{Tech}(\vec{\xi}, \alpha,
E_{node}, R)$ above is what is needed to give principled answers
regarding how close to capacity should the transmitter operate.

The results here indicate that strongly capacity-achieving coding
systems are not possible if we use the given model of iterative
decoding. There are a few possibilities worth exploring.
\begin{enumerate}
\item Our model of iterative decoding left out some real-world
   computational capability that could be exploited to dramatically
   reduce the required power consumption. There are three natural
   candidates here. 
\begin{itemize}
   \item {\em Selective and adaptive sleep:} In the current model,
     all computational nodes are actively consuming power for all the
     iterations. If there was a way for computational nodes to
     adaptively turn themselves off and use {\bf no power} while
     sleeping, then the results might change. We suspect that bounding
     the performance of such systems will require some sort of
     neighborhood-oriented analogies to the bounds for variable-block-length
     coding \cite{ForneyErasure, Burnashev}.

  \item {\em Dynamically reconfigurable circuits:} In the current
    model, the connectivity structure of computational nodes is
    fixed and considered as unchanging wiring. If there was a way for
    computational nodes to dynamically rewire who their neighbors
    are (for example by moving themselves in the combined spirit of
    \cite{TseMobility, RoseCosmic, manishAllerton07}), this might
    change the results. 

  \item {\em Feedback:} In \cite{OurUpperboundPaper}, a general scheme
    is presented that achieves an infinite computational error
    exponent by exploiting noiseless channel-output feedback as well
    as an infinite amount of common randomness. If such a scheme could
    be implemented, it would presumably be strongly capacity achieving
    as both the transmission and processing power could remain finite
    while having arbitrarily low average probability of bit
    error. However, we are unaware if either this scheme or any of the
    encoding strategies that claim to deliver ``linear-time'' encoding
    and decoding with an error exponent (e.g.~\cite{BargZemov,
      GuruswamiIndyk}) are actually implementable in a way that uses
    finite total power.

\end{itemize}
\item Strong or even weakly capacity-achieving communication systems
  may be possible using infallible computational entities but may be
  impossible to achieve using unreliable computational nodes that must
  burn more power (i.e.~raise the voltages) to be more reliable
  \cite{SomeLavPaper}. 

\item Either strongly or weakly capacity-achieving communication
  systems might be impossible on thermodynamic grounds. Decoding in
  some abstract sense is related to the idea of cooling a part of a
  system \cite{RujanTemperature}. Since an implementation can be
  considered a collection of Maxwell Daemons, this might be useful
  to rule out certain models of computation as being aphysical.
\end{enumerate}

Finally, the approach here should be interesting if extended to a
multiuser context where the prospect of causing interference makes it
less easy to improve reliability by just increasing the transmit
power. There, it might give some interesting answers as to what kind
of computational efficiency is needed to make it asymptotically worth
using multiterminal coding theory.

\appendices

\section{Proof of Theorem~\ref{thm:basicBSCbound}: lower bound on $\pe_P$ for the BSC}
\label{app:peip}

The idea of the proof is to first show that the average probability of
error for any code must be significant if the channel were a much
worse BSC. Then, a mapping is given that maps the probability of an
individual error event under the worse channel to a lower-bound on its
probability under the true channel. This mapping is shown to be
convex-$\cup$ in the probability of error and this allows us to use
this same mapping to get a lower-bound to the average probability of
error under the true channel. We proceed in steps, with the lemmas
proved after the main argument is complete.

\begin{proof} 
Suppose we ran the given encoder and decoder over a test channel $G$
instead. 
\begin{lemma}[Lower bound on $\pe$ under test channel $G$.] 
\label{lem:lbg} If a rate-$R$ code is used over a channel $G$ with
$C(G) < R$, then the average probability of bit error satisfies
\begin{equation}
\label{eq:peig}
\pe_G \geq h_b^{-1}\left(\delta(G)\right)
\end{equation}
where $\delta(G) = 1 - \frac{C(G)}{R}$. This holds for any channel
model $G$, not just BSCs.
\end{lemma}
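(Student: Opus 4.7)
The plan is to establish this lower bound via a standard Fano-style converse argument, specialized so that the Fano inequality is applied bit-by-bit rather than to the whole message block, and then tightened via Jensen's inequality to land on $\pe_G$ rather than on a sum over individual bit-error probabilities.

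First, I would write $k = H(\mathbf{B}_1^k)$ since the information bits are i.i.d.\ fair coin tosses, and decompose $H(\mathbf{B}_1^k) = I(\mathbf{B}_1^k; \mathbf{Y}_1^m) + H(\mathbf{B}_1^k \mid \mathbf{Y}_1^m)$. The mutual information term is upper-bounded by $mC(G)$ using the standard argument that the test channel is memoryless with capacity $C(G)$ bits per channel use (this holds regardless of whether the encoder is deterministic or randomized, since conditioning on any shared randomness only reduces mutual information). For the equivocation term, I would use the chain rule and the fact that conditioning reduces entropy to bound $H(\mathbf{B}_1^k \mid \mathbf{Y}_1^m) \leq \sum_{i=1}^{k} H(B_i \mid \mathbf{Y}_1^m)$, and then apply Fano's inequality to each bit individually to obtain $H(B_i \mid \mathbf{Y}_1^m) \leq h_b(\pei[G])$ (the extra $\log(|\mathcal{X}|-1)$ term in the usual Fano inequality vanishes because $B_i$ is binary).

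Combining these pieces and dividing through by $k$ gives
\begin{equation*}
1 - \frac{mC(G)}{k} \;\leq\; \frac{1}{k}\sum_{i=1}^{k} h_b(\pei[G]),
\end{equation*}
which, after recognizing $m/k = 1/R$, becomes $\delta(G) \leq \frac{1}{k}\sum_i h_b(\pei[G])$. Next, the concavity of $h_b(\cdot)$ together with Jensen's inequality yields $\frac{1}{k}\sum_i h_b(\pei[G]) \leq h_b\!\left(\frac{1}{k}\sum_i \pei[G]\right) = h_b(\pe_G)$. Hence $\delta(G) \leq h_b(\pe_G)$, and applying $h_b^{-1}$ (well-defined on $[0,1]$ when restricted to $\pe_G \in [0,\tfrac{1}{2}]$, the regime of interest since $\pe_G \geq \tfrac{1}{2}$ would make the bound trivial) gives $\pe_G \geq h_b^{-1}(\delta(G))$.

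The only subtle point, which is more a matter of careful bookkeeping than a real obstacle, is verifying that nothing in the derivation depends on the channel $G$ being a BSC; the only properties used are that $G$ has a well-defined per-use capacity $C(G)$ and that the channel is memoryless, so the same statement applies to the AWGN test channel used later in Theorem~\ref{thm:basicAWGNbound}. I would also note that randomization in the encoder/decoder is handled by conditioning the mutual-information bound on the shared random seed and then averaging, which preserves all inequalities.
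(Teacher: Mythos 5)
Your proof is correct and follows essentially the same route as the paper's: both lower-bound the equivocation $H(\mathbf{B}_1^k \mid \mathbf{Y}_1^m)$ by $k\bigl(1-C(G)/R\bigr)$ via $I(\mathbf{B}_1^k;\mathbf{Y}_1^m) \leq m\, C(G)$, pass to per-bit entropies, and finish with Jensen's inequality on the concave $h_b$. The only cosmetic difference is that you invoke Fano's inequality bit-by-bit to get $H(B_i \mid \mathbf{Y}_1^m) \leq h_b(\pei_G)$, whereas the paper works with the error variable $\widetilde{B}_i = B_i \oplus \widehat{B}_i$, for which $H(\widetilde{B}_i) = h_b(\pei_G)$ holds with equality; both yield the same chain of inequalities, and your note that the whole argument should be carried out conditioned on the shared randomness is the correct way to handle the randomized encoder/decoder.
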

\begin{proof}
See Appendix~\ref{app:peig}.
\end{proof}
\vspace{0.1in}

Let  $\mk{b}$ denote the entire message, and let $\m{x}$ be the
corresponding codeword. Let the common randomness available to the
encoder and decoder be denoted by the random variable $U$, and its
realizations by $u$. 

Consider the $i$-th message bit $B_i$. Its decoding is performed by
observing a particular decoding neighborhood\footnote{For any given
  decoder implementation, the size of the decoding neighborhood might
  be different for different bits $i$. However, to avoid unnecessary
  complex notation, we assume that the neighborhoods are all the same
  size $n$ corresponding to the largest possible neighborhood
  size. This can be assumed without loss of generality since smaller
  decoding neighborhoods can be supplemented with additional channel
  outputs that are ignored by the decoder.} of channel outputs
$\nbd{y}$. The corresponding channel inputs are denoted by $\nbd{x}$,
and the relevant channel noise by $\nbd{z}=\nbd{x}\oplus \nbd{y}$
where $\oplus$ is used to denote modulo $2$ addition. The decoder just
checks whether the observed $\nbd{y} \in \mathcal{D}_{y,i}(0, u)$ to
decode to $\widehat{B}_i = 0$ or whether $\nbd{y} \in
\mathcal{D}_{y,i}(1,u)$ to decode to $\widehat{B}_i = 1$.

For given $\nbd{x}$, the error event is equivalent to $\nbd{z}$
falling in a decoding region $\mathcal{D}_{z,i}(\nbd{x}, \mk{b}, u) =
\mathcal{D}_{y,i}(1 \oplus b_i, u) \oplus \nbd{x}$. Thus by the linearity
of expectations, \eqref{eq:peig} can be rewritten as:
\begin{equation} \label{eqn:initialconstraint}
\frac{1}{k} \sum_i \frac{1}{2^{k}} \sum_{\mk{b}} \sum_u \Pr(U = u) 
\Pr_G(\nbd{Z} \in \mathcal{D}_{z,i}(\nbd{x}(\mk{b},u), \mk{b}, u))
\geq h_b^{-1}\left(\delta(G)\right).
\end{equation}

The following lemma gives a lower bound to the probability of an event
under channel $P$ given a lower bound to its probability under channel
$G$. 

\begin{lemma} \label{lem:bscmapping}
Let $A$ be a set of BSC channel-noise realizations $\mn{z}$ such that
$\Pr_G(A) = \delta$. Then 
\begin{equation}
\Pr_P(A) \geq f\left(\delta \right)
\end{equation}
where
\begin{equation}
f(x) = \frac{x}{2}\;2^{-nD(g||p)}\left(\frac{p(1-g)}{g(1-p)}\right)^{\epsilon(x) \sqrt{n}}
\end{equation}
is a convex-$\cup$ increasing function of $x$ and 
\begin{equation}
\epsilon(x) = \sqrt{\frac{1}{K(g)}\lo{\frac{2}{x}}}.
\end{equation} 
\end{lemma}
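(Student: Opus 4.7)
The proof plan splits into two parts: establishing the main inequality via a change of measure on a typical noise set, and then verifying that $f$ is convex-$\cup$ and increasing. I would first write out the pointwise likelihood ratio for a BSC noise sequence $\mn{z}$ of Hamming weight $w$,
\begin{equation*}
\frac{\Pr_P(\mn{z})}{\Pr_G(\mn{z})} = \left(\frac{p}{g}\right)^w\left(\frac{1-p}{1-g}\right)^{n-w},
\end{equation*}
and observe that in the intended regime $p<g$ this is a strictly \emph{decreasing} function of $w$. The lower bound on $\Pr_P(A)$ must therefore come from restricting attention to sequences of moderate weight, since that is where the likelihood ratio is well controlled.

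Define the typical set $\mathcal{T} = \{\mn{z}: w(\mn{z}) \le gn + \epsilon\sqrt{n}\}$ and note that under $G$ the weight is binomial$(n,g)$. The Chernoff bound combined with the definition of $K(g)$ in~\eqref{eqn:Kdef} gives $\Pr_G(\mathcal{T}^c) \le 2^{-nD(g+\epsilon/\sqrt{n}||g)} \le 2^{-K(g)\epsilon^2}$. Choosing $\epsilon = \epsilon(\delta)$ exactly so that $2^{-K(g)\epsilon^2} = \delta/2$ ensures $\Pr_G(A\cap\mathcal{T}) \ge \delta/2$. On $\mathcal{T}$ the likelihood ratio is minimized at the largest allowed weight $w = gn + \epsilon\sqrt{n}$, and splitting the exponent into a ``center'' piece at $w=gn$ and a ``deviation'' piece of length $\epsilon\sqrt{n}$ produces
\begin{equation*}
\frac{\Pr_P(\mn{z})}{\Pr_G(\mn{z})} \ge 2^{-nD(g||p)}\left(\frac{p(1-g)}{g(1-p)}\right)^{\epsilon\sqrt{n}}.
\end{equation*}
Summing this pointwise bound over $A \cap \mathcal{T}$ against $\Pr_G$ immediately gives $\Pr_P(A) \ge f(\delta)$.

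For the second part, the cleanest route to monotonicity and convexity is the substitution $t = \sqrt{\log_2(2/x)}$, i.e.\ $x = 2^{1-t^2}$. Writing $c := p(1-g)/(g(1-p)) < 1$ and folding in the $n$-dependent constants, this recasts $f$ as a Gaussian-like exponential in $t$ of the form $(\text{const})\cdot 2^{-t^2} c^{\sqrt{n}\,t}$. Both $dx/dt$ and $df/dt$ are negative (the latter because $\ln c < 0$ and $-2t\ln 2 < 0$ for $t > 0$), so $df/dx > 0$ and $f$ is increasing. For convexity, compute $df/dx$ explicitly from the chain rule, differentiate once more in $t$, and observe that the result factors as a positive bracket multiplied by $\ln c < 0$; dividing by the negative $dx/dt$ then gives $d^2 f/dx^2 > 0$.

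The main obstacle is aligning the two halves of the first part: the $\epsilon$ used in the Chernoff bound and the $\epsilon$ used to control the worst-case likelihood ratio must be identical, and must be calibrated so that only a factor of $1/2$ (and nothing $n$-dependent) is lost outside. This forces the precise definition $\epsilon(x) = \sqrt{\log_2(2/x)/K(g)}$ appearing in the statement. After that alignment, the convexity verification is bookkeeping with no analytic surprises.
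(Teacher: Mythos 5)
Your proof is correct and follows essentially the same route as the paper's: restrict to the typical set defined by Hamming weight at most $gn + \epsilon\sqrt{n}$, bound $\Pr_G(\mathcal{T}^c)$ via the $K(g)$-parametrized Chernoff bound, calibrate $\epsilon$ so only a factor of $1/2$ is lost, and then lower-bound the pointwise likelihood ratio at the worst-case weight. The convexity verification is the same sign analysis the paper carries out; your substitution $t=\sqrt{\log_2(2/x)}$ is a cosmetic variant of the paper's choice and leads to the identical factorization into a positive bracket times $\ln c<0$.
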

\begin{proof} See Appendix~\ref{app:convexitybsc}.
\end{proof}
\vspace{0.1in}
Applying Lemma~\ref{lem:bscmapping} in the style of \eqref{eqn:initialconstraint}
tells us that:

\begin{eqnarray}
\pe_P 
 & = &
\frac{1}{k} \sum_i \frac{1}{2^{k}} \sum_{\mk{b}} \sum_u \Pr(U = u) 
\Pr_P\left(\nbd{Z} \in \mathcal{D}_{z,i}(\nbd{x}(\mk{b},u), \mk{b}, u)\right) \nonumber \\
 & \geq &
\frac{1}{k} \sum_i \frac{1}{2^{k}} \sum_{\mk{b}} \sum_u \Pr(U = u) 
f(\Pr_G\left(\nbd{Z} \in \mathcal{D}_{z,i}(\nbd{x}(\mk{b},u), \mk{b}, u)\right)). \label{eqn:almostbscobjective}
\end{eqnarray}

But the increasing function $f(\cdot)$ is also convex-$\cup$ and thus
\eqref{eqn:almostbscobjective} and \eqref{eqn:initialconstraint} imply
that 
\begin{eqnarray*}
\pe_P 
 & \geq &
f(\frac{1}{k} \sum_i \frac{1}{2^{k}} \sum_{\mk{b}} \sum_u \Pr(U = u) 
\Pr_G\left(\nbd{Z} \in \mathcal{D}_{z,i}(\nbd{x}(\mk{b},u), \mk{b},
u)\right)) \\
 & \geq & f(h_b^{-1}\left(\delta(G)\right)).
\end{eqnarray*}

This proves Theorem~\ref{thm:basicBSCbound}. 
\end{proof}
\vspace{0.1in} 

At the cost of slightly more complicated notation, by following the
techniques in \cite{OurUpperboundPaper}, similar results can be proved
for decoding across any discrete memoryless channel by using
Hoeffding's inequality in place of the Chernoff bounds used here in
the proof of Lemma~\ref{lem:lbg}. In place of the KL-divergence term
$D(g||p)$, for a general DMC the arguments give rise to a term $\max_x
D(G_x||P_x)$ that picks out the channel input letter that maximizes
the divergence between the two channels' outputs. For output-symmetric
channels, the combination of these terms and the outer maximization
over channels $G$ with capacity less than $R$ will mean that the
divergence term will behave like the standard sphere-packing bound
when $n$ is large. When the channel is not output symmetric (in the
sense of \cite{Gallager}), the resulting divergence term will behave
like the Haroutunian bound for fixed block-length coding over DMCs
with feedback \cite{Haroutunian}.

\subsection{Proof of Lemma~\ref{lem:lbg}: A lower bound on $\pe_G$.}
\label{app:peig}
\begin{proof}
\begin{eqnarray*}
H(\mk{B})-H(\mk{B}|\m{Y}) &=& I(\mk{B};\m{Y}) \leq  I(\m{X};\m{Y}) \leq m C(G).
\end{eqnarray*}
Since the $Ber(\frac{1}{2})$ message bits are iid, $H(\mk{B})=k$. Therefore,
\begin{equation}
\frac{1}{k}H(\mk{B}|\m{Y})\geq 1- \frac{C(G)}{R}.
\end{equation}
Suppose the message bit sequence was decoded to be
$\mk{\widehat{B}}$. Denote the error sequence by
$\mk{\widetilde{B}}$. Then,
\begin{equation}
\mk{B}=\mk{\widetilde{B}} \oplus \mk{\widehat{B}},
\end{equation}
where the addition $\oplus$ is modulo 2. The only complication is the
possible randomization of both the encoder and decoder. However, note
that even with randomization, the true message $\mk{B}$ is independent
of $\mk{\widehat{B}}$ conditioned on $\m{Y}$. Thus,
\begin{eqnarray*}
H(\mk{\widetilde{B}}|\m{Y})&=&H(\mk{\widehat{B}} \oplus \mk{B}|\m{Y})\\
&=& H(\mk{\widehat{B}} \oplus \mk{B}|\m{Y}) + I(\mk{B}; \mk{\widehat{B}}|\m{Y}) \\
&=& H(\mk{\widehat{B}} \oplus \mk{B}|\m{Y}) - H(\mk{B}|\m{Y},\mk{\widehat{B}})+H(\mk{B}|\m{Y})\\
&=& I(\mk{\widehat{B}} \oplus \mk{B};\mk{\widehat{B}}|\m{Y}) +H(\mk{B}|\m{Y})\\
&\geq &H(\mk{B}|\m{Y}) \\
&\geq &k(1- \frac{C(G)}{R}).
\end{eqnarray*}
This implies
\begin{eqnarray}
\label{eq:markov}
\frac{1}{k} \sum_{i=1}^k H(\widetilde{B}_i|\m{Y})&\geq& 1- \frac{C(G)}{R}.
\end{eqnarray}
Since conditioning reduces entropy, $H(\widetilde{B}_i) \geq
H(\widetilde{B}_i|\m{Y})$. Therefore,
\begin{equation}
\label{eq:sum}
\frac{1}{k} \sum_{i=1}^k H(\widetilde{B}_i) \geq 1- \frac{C(G)}{R}.
\end{equation}
Since $\widetilde{B}_i$ are binary random variables,
$H(\widetilde{B}_i)=h_b(\pei_G)$, where $h_b(\cdot{})$ is the binary
entropy function. Since $h_b(\cdot{})$ is a concave-$\cap$ function,
$h_B^{-1}(\cdot)$ is convex-$\cup$ when restricted to output values
from $[0,\frac{1}{2}]$. Thus, \eqref{eq:sum} together with Jensen's
inequality implies the desired result \eqref{eq:peig}.
\end{proof}

\subsection{Proof of Lemma~\ref{lem:bscmapping}: a lower bound on $\pei_P$ as a function of $\pei_G$.}
\label{app:convexitybsc}
\begin{proof}
First, consider a strongly $G-$typical set of $\nbd{z}$, given by 
\begin{equation}
\typical{}= \{\mn{z}\; s.t.\; \sum_{i=1}^n z_{i} - ng \leq \epsilon \sqrt{n}\}.
\end{equation}
In words, $\typical{}$ is the set of noise sequences with weights
smaller than $ng + \epsilon\sqrt{n}$. The probability of an event $A$
can be bounded using
\begin{eqnarray*}
\delta & = & \Pr_G(\mn{Z} \in A) \\
&= &
\Pr_G(\mn{Z} \in A \cap \typical{}) + 
\Pr_G(\mn{Z} \in A \cap \typicalc{}) \\ 
&\leq & 
\Pr_G(\mn{Z} \in A \cap \typical{}) + 
\Pr_G(\mn{Z} \in \typicalc{}).
\end{eqnarray*}
Consequently,
\begin{equation} \label{eqn:basicAbound1}
\Pr_G(\mn{Z} \in A \cap \typical{}) \geq \delta -\Pr_G(\typicalc{}).
\end{equation}

\begin{lemma} \label{lem:chernoffBSC} The probability of the atypical
  set of Bernoulli-$g$ channel noise $\{Z_i\}$ is bounded above by
\begin{eqnarray}
\label{eq:chernoffs}
\Pr_G\left(\frac{ \sum_{i}^n Z_i - ng}{\sqrt{n}} > \epsilon\right) \leq
2^{-K(g) \epsilon^2 }
\end{eqnarray}
where $K(g)=\underset{0<\eta \leq 1-g}{\inf}\frac{D(g+\eta ||g)}{\eta^2}$. 
\end{lemma}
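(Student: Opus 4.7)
The plan is to apply the standard Chernoff bound for sums of iid Bernoulli random variables and then convert the resulting divergence expression into the quadratic form $K(g)\epsilon^2$ by invoking the definition of $K(g)$.

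First I would write $S_n = \sum_{i=1}^n Z_i$ where $Z_i$ are iid Bernoulli$(g)$, and apply the classical Chernoff bound (equivalently, Sanov's theorem for binary alphabets): for any deviation $\eta \in (0, 1-g]$,
\begin{equation*}
\Pr_G\!\left(S_n \geq n(g+\eta)\right) \;\leq\; 2^{-nD(g+\eta\,||\,g)}.
\end{equation*}
This is routine and follows from optimizing the exponential moment $\mathbb{E}_G[e^{tS_n}]$ over $t \geq 0$.

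Next I would specialize to the deviation required by the statement of the lemma by setting $\eta = \epsilon/\sqrt{n}$, so that $n(g+\eta) = ng + \epsilon\sqrt{n}$. This gives
\begin{equation*}
\Pr_G\!\left(\frac{S_n - ng}{\sqrt{n}} > \epsilon\right) \;\leq\; 2^{-nD(g+\epsilon/\sqrt{n}\,||\,g)}
\;=\; 2^{-\epsilon^2 \cdot \frac{D(g+\epsilon/\sqrt{n}\,||\,g)}{(\epsilon/\sqrt{n})^2}}.
\end{equation*}
The definition $K(g) = \inf_{0<\eta\leq 1-g} D(g+\eta\,||\,g)/\eta^2$ immediately yields
\begin{equation*}
\frac{D(g+\epsilon/\sqrt{n}\,||\,g)}{(\epsilon/\sqrt{n})^2} \;\geq\; K(g),
\end{equation*}
provided that $\epsilon/\sqrt{n} \in (0, 1-g]$. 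Substituting gives the desired bound $2^{-K(g)\epsilon^2}$.

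The only subtlety, which I would address at the end, is the boundary case $\epsilon > (1-g)\sqrt{n}$: there the event $\{(S_n-ng)/\sqrt{n} > \epsilon\}$ requires $S_n > ng + (1-g)n = n$, which is impossible since $S_n \leq n$, so the probability is zero and the bound holds trivially. Thus no real obstacle arises; the argument is essentially a routine Cram\'er-type large-deviations bound together with a convenient reparametrization of the exponent into the quadratic form needed by the surrounding proof of Theorem~\ref{thm:basicBSCbound}.
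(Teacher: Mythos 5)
Your proposal is correct and matches the paper's own argument: both proofs apply the standard Chernoff bound $\Pr_G(S_n \geq n(g+\eta)) \leq 2^{-nD(g+\eta\|g)}$, reparametrize via $\eta = \epsilon/\sqrt{n}$ to rewrite the exponent as $\epsilon^2\cdot D(g+\eta\|g)/\eta^2$, and then invoke the infimum defining $K(g)$; the paper merely derives the Chernoff bound explicitly (by choosing the optimizing tilting parameter $s$) where you cite it, and it makes the same remark that $g+\eta > 1$ is degenerate. No gap.
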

\begin{proof}
See Appendix~\ref{app:ChernoffBSC}.
\end{proof}
\vspace{0.1in}

Choose $\epsilon$ such that
\begin{eqnarray}
\label{eq:lbep}
\nonumber 2^{-K(g)\epsilon^2}& = & \frac{\delta}{2}\\
\text{i.e.}\;\;\epsilon^2 & = &
\frac{1}{K(g)}\log_2\left(\frac{2}{\delta} \right).
\end{eqnarray}

Thus \eqref{eqn:basicAbound1} becomes
\begin{equation} \label{eqn:basicAbound2}
\Pr_G(\mn{Z} \in A \cap \typical{}) \geq \frac{\delta}{2}.
\end{equation}

Let $n_{\mn{z}}$ denote the number of ones in $\mn{z}$. Then, 
\begin{equation}
\Pr_G(\mn{Z} = \mn{z}) =g^{n_{\mn{z}}}(1-g)^{n-n_{\mn{z}}}.
\end{equation} 
This allows us to lower bound the probability of $A$ under channel law
$P$ as follows:
\begin{eqnarray*}
\Pr_P(\mn{Z} \in A) 
&\geq& 
\Pr_P(\mn{Z} \in A \cap \typical{}) \\
&=&
\sum_{\mn{z}\in A \cap \typical{}} \frac{\Pr_P(\mn{z})}{\Pr_G(\mn{z})}\Pr_G(\mn{z})\\
&=&
\sum_{\mn{z}\in A \cap \typical{}}
\frac{p^{n_{\mn{z}}}(1-p)^{n-n_{\mn{z}}}}{g^{n_{\mn{z}}}(1-g)^{n-n_{\mn{z}}}}
\Pr_G(\mn{z}) \\
&\geq & \frac{(1-p)^n}{(1-g)^n} \sum_{\mn{z}\in A \cap \typical{}} \Pr_G(\mn{z})\left(\frac{p(1-g)}{g(1-p)}\right)^{ng+\epsilon \sqrt{n}}\\
&=&
\frac{(1-p)^n}{(1-g)^n}
\left(\frac{p(1-g)}{g(1-p)}\right)^{ng+\epsilon \sqrt{n}}
\Pr_G(A \cap \typical{})\\
& \geq & \frac{\delta}{2} 2^{-nD\left(g||p\right)}\left(\frac{p(1-g)}{g(1-p)}\right)^{\epsilon \sqrt{n}}.
\end{eqnarray*} 
This results in the desired expression:
\begin{equation}
\label{eq:peip3}
f(x) =  \frac{x}{2}\;2^{-nD\left(g||p\right)}\left(\frac{p(1-g)}{g(1-p)}\right)^{\epsilon(x) \sqrt{n}}.
\end{equation}
where $\epsilon(x) = \sqrt{\frac{1}{K(g)}\log_2\left(  \frac{2}{x}
\right)}$. To see the convexity of $f(x)$, it is useful to
apply some substitutions. Let $c_1 = \frac{2^{-nD(g||p)}}{2} > 0$ and let
$\xi = \sqrt{\frac{n}{K(g) \ln 2}} \ln(\frac{p(1-g)}{g(1-p)})$. Notice that
$\xi < 0$ since the term inside the $\ln$ is less than $1$. Then $f(x)
  = c_1 x \exp(\xi \sqrt{\ln 2 - \ln x})$. 

Differentiating $f(x)$ once results in
\begin{equation}
f'(x) =  c_1 \exp\left(\xi \sqrt{\ln(2) + \ln(\frac{1}{x})}\right) (1 + \frac{-\xi}{2 \sqrt{\ln(2) +  \ln(\frac{1}{x})}}).
\end{equation}
By inspection, $f'(x)>0$ for all $0<x<1$ and thus $f(x)$ is a
monotonically increasing function. Differentiating $f(x)$ twice with
respect to $x$ gives
\begin{equation}
\label{eq:doublediff}
f''(x) = -\xi \frac{c_1 \exp\left(\xi \sqrt{\ln(2) + \ln(\frac{1}{x})}\right)}{2 \sqrt{\ln(2) +  \ln(\frac{1}{x})}}
\left(1 + \frac{1}{2 (\ln(2) + \ln(\frac{1}{x}))} 
- \frac{\xi}{2 \sqrt{\ln(2) +  \ln(\frac{1}{x})}} \right).
\end{equation}
Since $\xi<0$, it is evident that all the terms in
\eqref{eq:doublediff} are strictly positive. Therefore, $f(\cdot{})$
is convex-$\cup$. 
\end{proof}

\subsection{Proof of Lemma~\ref{lem:chernoffBSC}: Bernoulli Chernoff bound} \label{app:ChernoffBSC}
\begin{proof}
Recall that $Z_i$ are iid Bernoulli random variables with mean $g \leq 1/2$.
\begin{equation}
\Pr\left(\frac{\sum_i (Z_i - g)}{\sqrt{n}} \geq \epsilon \right) =
\Pr\left(\frac{\sum_i (Z_i - g)}{n} \geq \widetilde{\epsilon}\right)
\end{equation}
where $\epsilon = \sqrt{n} \widetilde{\epsilon}$ and so $n = \epsilon^2/\widetilde{\epsilon}^2$. Therefore,
\begin{equation}
\label{eq:chernoffbinary}
\Pr(\frac{\sum_i (Z_i - g)}{\sqrt{n}} \geq \epsilon) \leq
[((1-g) + g \exp(s))\times \exp(-s(g+\widetilde{\epsilon}))]^n
\;\;\text{for all}\;\; s \geq 0.
\end{equation}
Choose $s$ satisfying 
\begin{equation} \label{eqn:substitutes}
\exp(-s) = \frac{g}{(1-g)} \times  \left(\frac{1}{(g+\widetilde{\epsilon})} - 1\right).
\end{equation}
It is safe to assume that $g+\widetilde{\epsilon} \leq 1$ since otherwise, the
relevant probability is $0$ and any bound will work. Substituting \eqref{eqn:substitutes}
into \eqref{eq:chernoffbinary} gives 
\begin{eqnarray*}
\Pr\left(\frac{\sum_i (Z_i - g)}{\sqrt{n}} \geq \epsilon \right)
&\leq&
2^{-\frac{D(g+\widetilde{\epsilon}||g)}{\widetilde{\epsilon}^2}\epsilon^2}.
\end{eqnarray*}
This bound holds under the constraint
$\frac{\epsilon^2}{\widetilde{\epsilon}^2}=n$. To obtain a bound that holds
uniformly for all $n$, we fix $\epsilon$, and take the supremum over
all the possible $\widetilde{\epsilon}$ values. 
\begin{eqnarray*}
\Pr\left(\frac{\sum_i (Z_i - g)}{\sqrt{n}} \geq \epsilon\right)
&\leq &\sup_{0<\widetilde{\epsilon}\leq 1-g}
\exp(-\ln(2) \frac{D(g+\widetilde{\epsilon}||g)}{\widetilde{\epsilon}^2}\epsilon^2) \\
&\leq &\exp(- \ln(2)\epsilon^2 \inf_{0<\widetilde{\epsilon}\leq 1-g} \frac{D(g+\widetilde{\epsilon}||g)}{\widetilde{\epsilon}^2}),
\end{eqnarray*}
giving us the desired bound.
\end{proof}

\section{Proof of Theorem~\ref{thm:basicAWGNbound}: lower bound on $\pe_P$ for AWGN channels}
\label{app:peiawgn}

The AWGN case can be proved using an argument almost identical to the
BSC case. Once again, the focus is on the channel noise $Z$ in the
decoding neighborhoods \cite{ChengPersonalComNov}. Notice that
Lemma~\ref{lem:lbg} already applies to this channel even if the power
constraint only has to hold on average over all codebooks and
messages. Thus, all that is required is a counterpart to
Lemma~\ref{lem:bscmapping} giving a convex-$\cup$ mapping from the
probability of a set of channel-noise realizations under a Gaussian
channel with noise variance $\sigma_G^2$ back to their probability
under the original channel with noise variance $\sigma_P^2$.

\begin{lemma} \label{lem:awgnmapping}
Let $A$ be a set of Gaussian channel-noise realizations $\mn{z}$ such that
$\Pr_G(A) = \delta$. Then 
\begin{equation}
\Pr_P(A) \geq f\left(\delta \right)
\end{equation}
where
\begin{equation} \label{eqn:GaussianAsymptoticFdef}
f(\delta) = \frac{\delta}{2}\;
\exp(-nD(\sigma_G^2||\sigma_P^2) - \sqrt{n}
(\frac{3}{2}+2\lon{\frac{2}{\delta}}) \left(\frac{\sigma_G^2}{\sigma_P^2}- 1\right)).
\end{equation}
Furthermore, $f(x)$ is a convex-$\cup$ increasing function in $\delta$
for all values of $\sigma_G^2 \geq \sigma_P^2$. 

In addition, the following bound is also convex whenever $\sigma_G^2 >
\sigma_P^2 \mu(n)$ with $\mu(n)$ as defined in \eqref{eqn:mudef}.
\begin{equation} \label{eqn:GaussianNonAsymptoticFdef}
f_L(\delta) = \frac{\delta}{2}\;
\exp(-nD(\sigma_G^2||\sigma_P^2)  
- \frac{1}{2} \phi(n, \delta) \left(\frac{\sigma_G^2}{\sigma_P^2}-1\right))
\end{equation}
where $\phi(n, \delta)$ is as defined in \eqref{eqn:phidef}.
\end{lemma}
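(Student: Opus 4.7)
The approach mirrors Lemma~\ref{lem:bscmapping}, with the Gaussian likelihood ratio replacing the Bernoulli one and chi-squared concentration replacing the Bernoulli Chernoff bound. Throughout, write $r = \sigma_G^2/\sigma_P^2 \geq 1$ and $D := D(\sigma_G^2\|\sigma_P^2) = \tfrac{1}{2}(r - 1 - \ln r)$. I would first introduce a typical set $\typical = \{\mn{z} : \sum_i z_i^2 \leq \sigma_G^2(n + t)\}$ and select the threshold $t$ so that $\Pr_G(\typicalc) \leq \delta/2$, which immediately gives $\Pr_G(A\cap\typical) \geq \delta/2$ exactly as in the BSC proof. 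A Chernoff bound on $\sum Z_i^2/\sigma_G^2 \sim \chi^2_n$ via the MGF $(1-2s)^{-n/2}$, optimized at $s = (\tau-1)/(2\tau)$, yields $\Pr_G(\sum Z_i^2 > n\sigma_G^2\tau) \leq \exp\bigl(-\tfrac{n}{2}(\tau-1-\ln\tau)\bigr)$. Setting this equal to $\delta/2$ and rearranging produces $\tau e^{-\tau} = e^{-1}(\delta/2)^{2/n}$, whose $\tau > 1$ root is $\tau = -W_L(-\exp(-1)(\delta/2)^{2/n}) = 1 + \phi(n,\delta)/n$, so the sharp threshold is $t = \phi(n,\delta)$.

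On the typical set, the Gaussian density ratio
\begin{equation*}
\frac{dP_P}{dP_G}(\mn{z}) = \Bigl(\frac{\sigma_G}{\sigma_P}\Bigr)^{\!n}\exp\!\left(-\tfrac{1}{2}\bigl(\tfrac{1}{\sigma_P^2}-\tfrac{1}{\sigma_G^2}\bigr)\sum_i z_i^2\right)
\end{equation*}
has a nonnegative quadratic coefficient since $r\geq 1$, so the typical-set upper bound on $\sum_i z_i^2$ lower-bounds the ratio. Combining $(\sigma_G/\sigma_P)^n = \exp(\tfrac{n}{2}\ln r)$ with the exponential gives ratio $\geq \exp\bigl(-nD - \tfrac{1}{2}\phi(n,\delta)(r-1)\bigr)$, which integrated against $\Pr_G(A\cap\typical)\geq \delta/2$ yields $f_L(\delta)$ as in \eqref{eqn:GaussianNonAsymptoticFdef}. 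The asymptotic form $f(\delta)$ follows by upper-bounding $\phi$: the branch-point expansion of $W_L$ gives $\phi(n,\delta) = 2\sqrt{n\ln(2/\delta)} + O(\ln(2/\delta))$, and combining this with the elementary inequality $\sqrt{x}\leq(1+x)/2$ produces a clean bound of the form $\phi(n,\delta)\leq \sqrt{n}(3 + 4\ln(2/\delta))$, which is exactly what is needed for \eqref{eqn:GaussianAsymptoticFdef}.

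Convexity of the asymptotic $f$ is essentially free: absorbing $\delta$-independent terms into $C>0$ gives
\begin{equation*}
f(\delta) = C\,(\delta/2)^{1+2\sqrt{n}(r-1)},
\end{equation*}
a pure power of $\delta/2$ with exponent at least $1$ whenever $r\geq 1$, hence convex-$\cup$ and increasing. For the sharp bound $f_L$, the argument is more delicate since $\phi(n,\delta)$ itself depends on $\delta$. Implicit differentiation of the Lambert identity $W_L'(u) = W_L(u)/\bigl(u(1+W_L(u))\bigr)$ produces the clean relation $d\phi/d\delta = -2(\phi+n)/(\phi\delta)$. Substituting this into $g'(\delta) := (\ln f_L)'(\delta) = 1/\delta - \tfrac{1}{2}(r-1)\phi'(\delta)$ simplifies to $g'(\delta) = \bigl(r\phi + (r-1)n\bigr)/(\phi\delta)$, after which $f_L'' = f_L\bigl((g')^2 + g''\bigr)$ becomes a rational function of $r$, $\phi(n,\delta)$, and $n$.

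The main obstacle I expect is verifying that this last rational expression is nonnegative \emph{uniformly in $\delta\in(0,2]$} precisely when $r > \mu(n)$. After clearing positive denominators, $f_L''\geq 0$ reduces to a quadratic inequality in $r$ whose coefficients are polynomial in $\phi(n,\delta)$ and $n$; one must locate the $\delta$ that minimizes the discriminant and match the resulting critical ratio to the closed form $\mu(n)$ in \eqref{eqn:mudef} by inverting the $T(n)$ substitution in \eqref{eqn:Tdef}. This bookkeeping, rather than any conceptual step, is where the specific shape of $\mu(n)$ enters, and it is the only part of the argument that does not mechanically follow from the BSC template.
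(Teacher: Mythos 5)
Your proposal follows the paper's architecture closely---same typical set, same change-of-measure argument via the Gaussian likelihood ratio, same Lambert-$W$ characterization of the exact threshold $\phi$, and the same $f(\delta) = C(\delta/2)^{1+2\sqrt{n}(r-1)}$ observation for convexity of the asymptotic form. Your intermediate identities $d\phi/d\delta = -2(\phi+n)/(\phi\delta)$ and $g'(\delta) = (r\phi + (r-1)n)/(\phi\delta)$ are correct. But there are two genuine gaps.

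First, your route to the asymptotic bound $f$ is not rigorous as stated. You derive $\phi(n,\delta)\approx 2\sqrt{n\ln(2/\delta)}$ from the branch-point expansion of $W_L$ and then claim that combining with $\sqrt{x}\leq(1+x)/2$ ``produces a clean bound $\phi(n,\delta)\leq\sqrt{n}(3+4\ln(2/\delta))$.'' A Taylor expansion is not an inequality, and passing from the approximation to a uniform bound in $n$ and $\delta$ is precisely the missing step. The paper avoids this entirely by proving a second, independent Chernoff bound: it establishes $\sqrt{1+\widetilde\epsilon}\leq\exp\bigl((\tfrac{1}{2}-\tfrac{1}{4\sqrt n})\widetilde\epsilon\bigr)$ for all $\widetilde\epsilon\geq 3/\sqrt n$ (by a third-order Taylor lower bound on $\ln(1+3/\sqrt n)$ plus a finite check for $n<9$), which gives $\Pr(\typicalc)\leq\exp(-\sqrt n\,\widetilde\epsilon/4)$ directly, so that $\widetilde\epsilon=(3+4\ln(2/\delta))/\sqrt n$ suffices without ever inverting $W_L$. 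If you want to go your route, you would need to turn the branch-point estimate into a provable two-sided inequality for $W_L$; as written it is a plausibility argument, not a proof.

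Second, the convexity of $f_L$ is left unfinished and your framing of what remains is slightly off. After computing $f_L''=f_L((g')^2+g'')$ and clearing denominators, the paper reduces the sign condition to the quadratic $(4v+1)n\widetilde\epsilon^2 + 4(vn-1)\widetilde\epsilon - 2\geq 0$ in $\widetilde\epsilon=\phi/n$, where $v=(r-1)/2$. There is no discriminant to minimize: the quadratic is convex-$\cup$ in $\widetilde\epsilon$, negative at $\widetilde\epsilon=0$, hence has a single positive root and is increasing beyond it; since $\widetilde\epsilon$ is a decreasing function of $\delta$ with range $[T(n),\infty)$ as $\delta$ runs over $(0,1]$, nonnegativity over the whole range is equivalent to nonnegativity at the single endpoint $\delta=1$, i.e.\ at $\widetilde\epsilon=T(n)$. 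Solving that one inequality for $v$ and converting back to $\sigma_G^2/\sigma_P^2 = 2v+1$ is what produces the closed form $\mu(n)$ in \eqref{eqn:mudef}. Your instinct that the $T(n)$ substitution enters here is right, but the reduction to a boundary check is the key simplification you would need to supply.
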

\begin{proof}
See Appendix~\ref{app:AWGNmappingProof}.
\end{proof}
\vspace{0.1in}

With Lemma~\ref{lem:awgnmapping} playing the role of
Lemma~\ref{lem:bscmapping}, the proof for
Theorem~\ref{thm:basicAWGNbound} proceeds identically to that of
Theorem~\ref{thm:basicBSCbound}. 

It should be clear that similar arguments can be used to prove similar
results for any additive-noise models for continuous output
communication channels. However, we do not believe that this will
result in the best possible bounds. Instead, even the bounds for the
AWGN case seem suboptimal because we are ignoring the possibility of a
large deviation in the noise that happens to be locally aligned to the
codeword itself.

\subsection{Proof of Lemma~\ref{lem:awgnmapping}: a lower bound on
  $\pei_P$ as a function of $\pei_G$} \label{app:AWGNmappingProof}
\begin{proof}
Consider the length-$n$ set of $G$-typical additive noise given by 
\begin{equation}
\typical{}  =\left\{\mn{z}: \frac{||\mn{z}||^2 -n\sigma_G^2}{n} \leq
\epsilon \right\}.
\end{equation}

With this definition, \eqref{eqn:basicAbound1} continues to hold in the
Gaussian case. 

There are two different Gaussian counterparts to
Lemma~\ref{lem:chernoffBSC}. They are both expressed in the following
lemma. 
\begin{lemma}
\label{lem:AWGNChernoff}
For Gaussian noise $Z_i$ with variance $\sigma_G^2$, 
\begin{equation} \label{eqn:mainboundraw}
\Pr(\frac{1}{n} \sum_{i=1}^n \frac{Z_i^2}{\sigma_G^2} > 1 +
\frac{\epsilon}{\sigma_G^2})
\leq \left((1 + \frac{\epsilon}{\sigma_G^2})
  \exp(-\frac{\epsilon}{\sigma_G^2})
   \right)^{\frac{n}{2}}.
\end{equation}
Furthermore 
\begin{equation} \label{eqn:mainbound}
\Pr(\frac{1}{n} \sum_{i=1}^n \frac{Z_i^2}{\sigma_G^2} > 1 +
\frac{\epsilon}{\sigma_G^2})
\leq 
\exp(-\frac{\sqrt{n} \epsilon}{4 \sigma_G^2})
\end{equation}
for all $\epsilon \geq \frac{3 \sigma_G^2}{\sqrt{n}}$.
\end{lemma}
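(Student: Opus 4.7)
The plan is to treat $W = \sum_{i=1}^n Z_i^2$ as a (scaled) chi-squared random variable and apply a standard Chernoff argument, then specialize the resulting exponent into the cleaner square-root form in the regime $\epsilon \geq 3\sigma_G^2/\sqrt{n}$.

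For \eqref{eqn:mainboundraw}, since the $Z_i$ are iid $\mathcal{N}(0,\sigma_G^2)$, the moment generating function of $Z_i^2$ is $E[\exp(s Z_i^2)] = (1-2s\sigma_G^2)^{-1/2}$ for $s < 1/(2\sigma_G^2)$. Writing $u = \epsilon/\sigma_G^2$ and applying Markov's inequality to $\exp(sW)$ gives
\begin{equation*}
\Pr\!\left(\sum_{i=1}^n \frac{Z_i^2}{\sigma_G^2} > n(1+u)\right) \leq (1-2s\sigma_G^2)^{-n/2} \exp\bigl(-sn\sigma_G^2(1+u)\bigr)
\end{equation*}
for every admissible $s$. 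Differentiating the exponent in $s$ yields the optimizing choice $s\sigma_G^2 = u/(2(1+u))$, at which $(1-2s\sigma_G^2)^{-n/2} = (1+u)^{n/2}$ and $\exp(-sn\sigma_G^2(1+u)) = \exp(-nu/2)$, producing exactly the claimed bound $((1+u)e^{-u})^{n/2}$.

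For \eqref{eqn:mainbound}, I take the logarithm of the bound just derived and reduce the claim to
\begin{equation*}
\frac{n}{2}\bigl(u - \ln(1+u)\bigr) \geq \frac{\sqrt{n}\,u}{4}
\end{equation*}
whenever $u \geq 3/\sqrt{n}$. The workhorse is the elementary inequality $u - \ln(1+u) \geq u^2/(2(1+u))$ for $u \geq 0$, which one can verify by noting that $\phi(u) := (1+u)(u-\ln(1+u)) - u^2/2$ satisfies $\phi(0)=0$ and $\phi'(u) = u - \ln(1+u) \geq 0$. Plugging this in reduces the target inequality to $u/(1+u) \geq 1/\sqrt{n}$, equivalently $u(\sqrt{n}-1) \geq 1$. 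For the range of $n$ in which $3/\sqrt{n}$ exceeds $1/(\sqrt{n}-1)$ (i.e.\ for all $n\geq 3$, which is the only regime where the bound is non-vacuous), the hypothesis $u \geq 3/\sqrt{n}$ delivers this directly; the handful of remaining small-$n$ cases can be checked by hand.

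The main obstacle is this second step: converting the natural Chernoff exponent $u - \ln(1+u)$ into a clean $\sqrt{n}\,u$ form while respecting the threshold $u \geq 3/\sqrt{n}$. The quadratic lower bound $u^2/(2(1+u))$ is the bridge between the two, and the constant $3$ in the hypothesis is precisely what is needed to make the implication $u/(1+u) \geq 1/\sqrt{n}$ automatic. Once this routing is set up, both halves of the lemma follow essentially mechanically.
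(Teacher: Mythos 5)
Your proof of the first inequality is essentially the same Chernoff argument as the paper's, with the same optimizing exponent $s\sigma_G^2 = u/(2(1+u))$, so there is nothing to say there. For the second inequality you take a genuinely different route. The paper's proof, working with $\widetilde{\epsilon} = \epsilon/\sigma_G^2$ and $K = \frac{1}{2} - \frac{1}{4\sqrt{n}}$, sandwiches the polynomial factor $\sqrt{1+\widetilde{\epsilon}}$ directly under an exponential $\exp(K\widetilde{\epsilon})$ on $\widetilde{\epsilon} \geq 3/\sqrt{n}$, verified by a third-order Taylor lower bound on $\ln(1+3/\sqrt{n})$ for $n \geq 9$ plus direct numerical checks for $n = 1,\ldots,8$. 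You instead work at the level of the Chernoff exponent, lower bounding $u - \ln(1+u)$ by $u^2/(2(1+u))$ via the monotone auxiliary function $\phi(u) = (1+u)(u - \ln(1+u)) - u^2/2$, which reduces the target to the clean algebraic condition $u(\sqrt{n}-1) \geq 1$. This is tidier and more self-contained, and because $\frac{3}{\sqrt{n}}(\sqrt{n}-1) \geq 1$ as soon as $n \geq 3$, you are left checking only $n=1,2$ by hand rather than the paper's eight cases. One small correction to your parenthetical: the bound is not vacuous for $n < 3$ --- the right-hand side $\exp(-\sqrt{n}\epsilon/(4\sigma_G^2))$ is strictly less than $1$ for every $n \geq 1$ whenever $\epsilon > 0$ --- the issue is simply that your quadratic lower bound is too weak to conclude at $n=1,2$ (at $n=1$ it yields the unsatisfiable $u \cdot 0 \geq 1$), and those two cases must be, and indeed are, verified directly from \eqref{eqn:mainboundraw}.
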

\begin{proof}
See Appendix~\ref{app:ChernoffGaussian}.
\end{proof}
\vspace{0.1in}

To have $\Pr(\typicalc) \leq \frac{\delta}{2}$, it suffices
to pick any $\epsilon(\delta,n)$ large enough. 

So \begin{eqnarray} 
\Pr_P(A) & \geq & \int_{\mn{z}\in A \cap \typical{}}
f_P(\mn{z})d\mn{z} \nonumber\\
&=&  \int_{\mn{z}\in A \cap \typical{} }
\frac{f_P(\mn{z})}{f_G(\mn{z})}f_G(\mn{z})d\mn{z}. \label{eq:chg2}
\end{eqnarray}
Consider the ratio of the two pdf's for $\mn{z} \in \typical{}$
\begin{eqnarray}
\nonumber\frac{f_P(\mn{z})}{f_G(\mn{z})}
&=&
\left(\sqrt\frac{\sigma_G^2}{\sigma_P^2}\right)^n
\exp\left(-\|\mn{z}\|^2\left(\frac{1}{2\sigma_P^2}-\frac{1}{2\sigma_G^2}\right)\right)\\
\nonumber &\geq& 
\exp\left(-(n\sigma_G^2+n\epsilon(\delta,n))\left(\frac{1}{2\sigma_P^2}-\frac{1}{2\sigma_G^2}\right) 
 +n\ln\left(\frac{\sigma_G}{\sigma_P}\right)\right)\\
& = &\exp\left(-\frac{\epsilon(\delta,n) n}{2\sigma_G^2}\left(\frac{\sigma_G^2}{\sigma_P^2}-1\right)-nD(\sigma_G^2||\sigma_P^2)\right)
\label{eq:ratio}
\end{eqnarray}
where $D(\sigma_G^2||\sigma_P^2)$ is the KL-divergence between two
Gaussian distributions of variances $\sigma_G^2$ and $\sigma_P^2$
respectively. Substitute~\eqref{eq:ratio} back
in~\eqref{eq:chg2} to get
\begin{eqnarray}
\Pr_P(A)&\geq& \exp\left(-\frac{\epsilon(\delta,n)
  n}{2\sigma_G^2}\left(\frac{\sigma_G^2}{\sigma_P^2}-1\right)
-nD(\sigma_G^2||\sigma_P^2)\right)
\int_{\mn{z}\in A \cap \typical{}}f_G(\mn{z})d\mn{z} \nonumber \\
& \geq & \frac{\delta}{2} 
\exp\left(-nD(\sigma_G^2||\sigma_P^2) - \frac{\epsilon(\delta,n)
  n}{2\sigma_G^2}\left(\frac{\sigma_G^2}{\sigma_P^2}-1\right)\right).
\label{eqn:almostrawgaussianratiobound}
\end{eqnarray}

At this point, it is necessary to make a choice of
$\epsilon(\delta,n)$. If we are interested in studying the asymptotics
as $n$ gets large, we can use \eqref{eqn:mainbound}. This reveals that
it is sufficient to choose $\epsilon \geq \sigma_G^2
\max(\frac{3}{\sqrt{n}}, -4\frac{\ln(\delta) - \ln(2)}{\sqrt{n}})$. A
safe bet is $\epsilon = \sigma_G^2 \frac{3 +
  4\ln(\frac{2}{\delta})}{\sqrt{n}}$ or $n\epsilon(\delta,n) =
\sqrt{n}(3 + 4\ln(\frac{2}{\delta}))\sigma_G^2$. Thus
\eqref{eqn:basicAbound2} holds as well with this choice of
$\epsilon(\delta,n)$.

Substituting into \eqref{eqn:almostrawgaussianratiobound} gives 
$$\Pr_P(A) \geq \frac{\delta}{2} \exp\left(-nD(\sigma_G^2||\sigma_P^2)
-\sqrt{n}(\frac{3}{2}+2\lon{\frac{2}{\delta}})
(\frac{\sigma_G^2}{\sigma_P^2}-1) \right).$$ This establishes the
desired $f(\cdot)$ function from
\eqref{eqn:GaussianAsymptoticFdef}. To see that this function $f(x)$
is convex-$\cup$ and increasing in $x$, define $c_1 =
\exp(-nD(\sigma_G^2||\sigma_P^2) - \sqrt{n}(\frac{3}{2} + 2
\lon{2})\left(\frac{\sigma_G^2}{\sigma_P^2}-1\right) - \ln(2))$ and
$\xi = 2\sqrt{n}\left(\frac{\sigma_G^2}{\sigma_P^2}-1\right) >
0$. Then $f(\delta) = c_1 \delta \exp(\xi \ln(\delta)) = c_1
\delta^{1+\xi}$ which is clearly monotonically increasing and
convex-$\cup$ by inspection.

Attempting to use \eqref{eqn:mainboundraw} is a little more
involved. Let $\widetilde{\epsilon} = \frac{\epsilon}{\sigma_G^2}$ for
notational convenience. Then we must solve
$(1+\widetilde{\epsilon})\exp(-\widetilde{\epsilon}) =
(\frac{\delta}{2})^{\frac{2}{n}}$. Substitute
$u=1+\widetilde{\epsilon}$ to get $u\exp(-u+1) =
(\frac{\delta}{2})^{\frac{2}{n}}$. This immediately simplifies to
$-u\exp(-u) = -\exp(-1)(\frac{\delta}{2})^{\frac{2}{n}}$. At this
point, we can immediately verify that
$(\frac{\delta}{2})^{\frac{2}{n}} \in [0,1]$ and hence by the
definition of the Lambert W function in \cite{LambertWRef}, we get $u
= - W_L(-\exp(-1)(\frac{\delta}{2})^{\frac{2}{n}})$. Thus 
\begin{equation} \label{eqn:epsilonUglyDef}
\widetilde{\epsilon}(\delta,n) = -
W_L(-\exp(-1)(\frac{\delta}{2})^{\frac{2}{n}}) - 1.
\end{equation}

Substituting this into \eqref{eqn:almostrawgaussianratiobound}
immediately gives the desired expression
\eqref{eqn:GaussianNonAsymptoticFdef}. All that remains is to verify
the convexity.  Let $v = \frac{1}{2}
\left(\frac{\sigma_G^2}{\sigma_P^2}-1\right)$. As above, $f_L(\delta)
= \delta c_2 \exp(-nv \widetilde{\epsilon}(\delta,n))$. The derivatives can 
be taken using very tedious manipulations involving the relationship
$W_L'(x) = \frac{W_L(x)}{x(1 + W_L(x))}$ from \cite{LambertWRef} and
can be verified using computer-aided symbolic calculation. In our case
$-\widetilde{\epsilon}(\delta,n) = (W_L(x) + 1)$ and so this allows the
expressions to be simplified.
\begin{equation} \label{eqn:firstGaussianDeriv}
f_L'(\delta) = c_2 \exp(-nv \widetilde{\epsilon}) (2v + 1 + \frac{2v}{\widetilde{\epsilon}}). 
\end{equation}
Notice that all the terms above are positive and so the first
derivative is always positive and the function is increasing in
$\delta$. Taking another derivative gives
\begin{equation} \label{eqn:doubleGaussianDeriv}
f_L''(\delta) = c_2 
\frac{2 v(1 + \widetilde{\epsilon}) \exp(-nv \widetilde{\epsilon})}{\delta \widetilde{\epsilon}}
\left[1 + 4v + \frac{4 v}{\widetilde{\epsilon}}
             - \frac{4}{n \widetilde{\epsilon}} 
             - \frac{2}{n \widetilde{\epsilon}^2} 
             \right]. 
\end{equation}
Recall from \eqref{eqn:epsilonUglyDef} and the properties of the
Lambert $W_L$ function that $\widetilde{\epsilon}$ is a monotonically
decreasing function of $\delta$ that is $+\infty$ when $\delta = 0$
and goes down to $0$ at $\delta = 2$. Look at the term in brackets
above and multiply it by the positive $n \widetilde{\epsilon}^2$. This
gives the quadratic expression
\begin{equation} \label{eqn:keyquadratic}
(4v+1)n \widetilde{\epsilon}^2 + 4(vn - 1)\widetilde{\epsilon} - 2.
\end{equation}
This \eqref{eqn:keyquadratic} is clearly convex-$\cup$ in
$\widetilde{\epsilon}$ and negative at $\widetilde{\epsilon} = 0$.
Thus it must have a single zero-crossing for positive
$\widetilde{\epsilon}$ and be strictly increasing there. This also
means that the quadratic expression is implicitly a strictly {\em
  decreasing} function of $\delta$. It thus suffices to just check the
quadratic expression at $\delta = 1$ and make sure that it is
non-negative. Evaluating \eqref{eqn:epsilonUglyDef} at $\delta = 1$
gives $\widetilde{\epsilon}(1,n) = T(n)$ where $T(n)$ is defined in
\eqref{eqn:Tdef}.

It is also clear that \eqref{eqn:keyquadratic} is a strictly
increasing linear function of $v$ and so we can find the minimum value
for $v$ above which \eqref{eqn:keyquadratic} is guaranteed to be
non-negative. This will guarantee that the function $f_L$ is
convex-$\cup$. The condition turns out to be $v \geq \frac{2+4 T - n
  T^2}{4 n T (T+1)}$ and hence $\sigma_G^2 = \sigma_P^2 (2 v + 1) \geq
\frac{\sigma_G^2}{2} (1 + \frac{1}{T+1} + \frac{4T + 2}{nT(T+1)})$.
This matches up with \eqref{eqn:mudef} and hence the Lemma is proved.
\end{proof}

\subsection{Proof of Lemma~\ref{lem:AWGNChernoff}: Chernoff bound for
  Gaussian noise} \label{app:ChernoffGaussian}
\begin{proof}
The sum $\sum_{i=1}^n \frac{Z_i^2}{\sigma_G^2}$ is a standard $\chi^2$ random
variables with $n$ degrees of freedom. 
\begin{eqnarray}
& & \Pr(\frac{1}{n} \sum_{i=1}^n \frac{Z_i^2}{\sigma_G^2} > 1 +
  \frac{\epsilon}{\sigma_G^2}) \nonumber \\ 
& \leq_{(a)} & \inf_{s > 0} \left(\frac{\exp(-s(1 + \frac{\epsilon}{\sigma_G^2}))}{\sqrt{1 - 2s}}\right)^n \nonumber \\
& \leq_{(b)} & \left(\sqrt{1 + \frac{\epsilon}{\sigma_G^2}}
  \exp(-\frac{\epsilon}{2\sigma_G^2})
   \right)^n 
\label{eqn:basicChiChernoffBound} \\
& = & \left((1 + \frac{\epsilon}{\sigma_G^2})
  \exp(-\frac{\epsilon}{\sigma_G^2})
   \right)^{\frac{n}{2}}
\end{eqnarray}
where (a) follows using standard moment generating functions for
$\chi^2$ random variables and Chernoff bounding arguments and (b)
results from the substitution $s = \frac{\epsilon}{2(\sigma_G^2 +
  \epsilon)}$. This establishes \eqref{eqn:mainboundraw}.

For tractability, the goal is to replace
\eqref{eqn:basicChiChernoffBound} with a exponential of an affine
function of $\frac{\epsilon}{\sigma_G^2}$. For notational convenience,
let $\widetilde{\epsilon} = \frac{\epsilon}{\sigma_G^2}$. The idea is to bound
the polynomial term $\sqrt{1+\widetilde{\epsilon}}$ with an exponential as long
as $\widetilde{\epsilon} > \epsilon^*$.

Let $\epsilon^* = \frac{3}{\sqrt{n}}$ and let $K = \frac{1}{2} -
  \frac{1}{4 \sqrt{n}}$. Then it is clear that
\begin{equation} \label{eqn:polydominated}
 \sqrt{1 + \widetilde{\epsilon}} \leq \exp(K \widetilde{\epsilon})
\end{equation}
as long as $\widetilde{\epsilon} \geq \epsilon^*$. First, notice that the two
agree at $\widetilde{\epsilon} = 0$ and that the slope of the concave-$\cap$
function $\sqrt{1 + \widetilde{\epsilon}}$ there is $\frac{1}{2}$. Meanwhile, the
slope of the convex-$\cup$ function $\exp(K \widetilde{\epsilon})$ at $0$ is $K <
\frac{1}{2}$. This means that $\exp(K \widetilde{\epsilon})$ starts out below
$\sqrt{1 + \widetilde{\epsilon}}$. However, it has crossed to the other side by
$\widetilde{\epsilon} = \epsilon^*$. This can be verified by taking the logs of
both sides of \eqref{eqn:polydominated} and multiplying them both by
$2$. Consider the LHS evaluated at $\epsilon^*$ and lower-bound it by
a third-order power-series expansion
\begin{eqnarray*}
\ln(1 + \frac{3}{\sqrt{n}})
& \leq &
\frac{3}{\sqrt{n}}
- \frac{9}{2 n}
+ \frac{9}{n^{3/2}}.
\end{eqnarray*}
meanwhile the RHS of \eqref{eqn:polydominated} can be dealt with exactly:
\begin{eqnarray*}
2 K \epsilon^*
& = &
(1 - \frac{1}{2\sqrt{n}})\frac{3}{\sqrt{n}} \\
& = &
\frac{3}{\sqrt{n}}
- \frac{3}{2 n}.
\end{eqnarray*}

For $n \geq 9$, the above immediately establishes
\eqref{eqn:polydominated} since $\frac{9}{2 n} - \frac{3}{2 n} =
\frac{3}{n} \geq \frac{9}{n\sqrt{9}}$. The cases $n=1,2,3,4,5,6,7,8$
can be verified by direct computation. Using
\eqref{eqn:polydominated}, for $\widetilde{\epsilon} > \epsilon^*$ we have:
\begin{eqnarray}
\Pr(\typicalc) & \leq &
[\exp(K \widetilde{\epsilon})\exp(-\frac{1}{2} \widetilde{\epsilon})]^n \nonumber \\
& = & \exp(-\frac{\sqrt{n}}{4} \widetilde{\epsilon}). \label{eqn:combinedaffinishbound}
\end{eqnarray}
\end{proof}

\section{Approximation analysis for the BSC} \label{app:Taylorexpansions}

\subsection{Lemma \ref{lem:hbbound}} \label{app:lemhbbound}

\begin{proof}
\eqref{eqn:binentropylowerbound} and
\eqref{eqn:invbinentropyupperbound} are obvious from the
concave-$\cap$ nature of the binary entropy function and its values at
$0$ and $\frac{1}{2}$.

\begin{eqnarray*}
h_b(x)& = &x \lo{1/x} + (1-x) \lo{1/(1-x)}\\
&\leq_{(a)} & 2 x \lo{1/x} =  2 x \ln(1/x)/\ln(2)\\
&\leq_{(b)} & 2 x d(\frac{1}{x^{1/d}} - 1)/\ln(2)\;\;\;\;\forall d>1\\
& \leq & 2 x^{1-1/d} d/\ln(2).
\end{eqnarray*}
Inequality $(a)$ follows from the fact that $x^x<(1-x)^{1-x}$ for $x\in (0,\frac{1}{2})$.
For inequality $(b)$, observe that $\ln(x)\leq x-1$. This implies
$\ln(x^{1/d})\leq x^{1/d} -1$. Therefore, $\ln(x)\leq d(x^{1/d}-1)$ for
all $x>0$ since $\frac{1}{d} \leq 1$ for $d \geq 1$.

The bound on $h_b^{-1}(x)$ follows immediately by identical arguments.
\end{proof}

\subsection{Lemma \ref{lem:approxloghbinvdeltag}} \label{app:lemapproxloghbinvdeltag}
\begin{proof}

First, we investigate the small gap asymptotics for $\delta(g^*)$, where
$g^*=p+gap^r$ and $r < 1$.
\begin{eqnarray}
\delta(g^*)& = &1- \frac{C(g^*)}{R} \nonumber \\
&=& 1 -\frac{C(p+gap^r)}{C(p)-gap} \nonumber \\
&=& 1 -\frac{C(p)-gap^rh_b'(p)+o(gap^r)}{C(p)(1-gap/C(p))} \nonumber \\
&=& 1 - (1 - \frac{h_b'(p)}{C(p)}gap^r +o(gap^r)) \times (1+gap/C(p) + o(gap)) \nonumber \\ 
&=& \frac{h_b'(p)}{C(p)}gap^r  + o(gap^r). \label{eqn:deltaExpansion}
\end{eqnarray}

Plugging \eqref{eqn:deltaExpansion} into
\eqref{eqn:invbinentropyupperbound} and using Lemma~\ref{lem:hbbound}
gives 
\begin{eqnarray}
\lo{h_b^{-1}\left(\delta(g^*)\right)}
&\leq&\lo{\frac{h_b'(p)}{2 C(p)}gap^r  + o(gap^r)} \\
&=&\lo{\frac{h_b'(p)}{2 C(p)}} + r\lo{gap} + o(1) \\
&=& r\lo{gap} -1 + \lo{\frac{h_b'(p)}{C(p)}}  + o(1)
\end{eqnarray}
and this establishes the upper half of \eqref{eqn:approxloghbinvdeltag}. 

To see the lower half, we use \eqref{eqn:invbinentropylowerbound}:
\begin{eqnarray*}
\lo{h_b^{-1}\left(\delta(g^*)\right)} 
&\geq& 
\frac{d}{d-1}\left(\lo{\delta(g^*)}+\lo{\frac{\ln{2}}{2d}} \right) \\
&=&
\frac{d}{d-1}\left(
\log_2\left(\frac{h_b'(p)}{C(p)}gap^r +o(gap^r)\right)
+\lo{\frac{\ln{2}}{2d}} \right) \\
&=&
\frac{d}{d-1}\left(
r\lo{gap}+\lo{\frac{h_b'(p)}{C(p)}} + o(1) 
+\lo{\frac{\ln{2}}{2d}} \right) \\
&=& \frac{d}{d-1}r\;\lo{gap} - 1 + K_1  + o(1)
\end{eqnarray*}
where $K_1 = \frac{d}{d-1}\left(\lo{\frac{h_b'(p)}{C(p)}} +
\lo{\frac{\ln(2)}{d}}\right)$ and $d > 1$ is arbitrary.
\end{proof}

\subsection{Lemma \ref{lem:approxdiv}} \label{app:lemapproxdiv}
\begin{proof} 
\begin{eqnarray*}
D(g^*||p) &=& D(p+gap^r||p)\\
&=& 0 + 0 \times gap^r + \frac{1}{2} \frac{gap^{2r}}{p(1-p)\ln(2)} +o(gap^{2r})
\end{eqnarray*}
since $D(p||p) = 0$ and the first derivative is also zero. Simple
calculus shows that the second derivative of $D(p+x||p)$ with respect
to $x$ is $\frac{\lo{e}}{(p+x)(1-p-x)}$. 
\end{proof}

\subsection{Lemma \ref{lem:approxlogfrac}} \label{app:lemapproxlogfrac}
\begin{proof}
\begin{eqnarray*}
\log_2\left(\frac{g^*(1-p)}{p(1-g^*)}\right)
& = & \lo{\frac{1-p}{p}}+\lo{\frac{g^*}{1-g^*}}\\
& = & \lo{\frac{1-p}{p}}+\lo{g^*} - \lo{1-g^*}\\
& = & \lo{\frac{1-p}{p}}+\lo{p+gap^r} - \lo{1-p-gap^r}\\
& = & \lo{\frac{1-p}{p}}+\lo{p} + \lo{1+\frac{gap^r}{p}}
      -\lo{1-p} - \lo{1-\frac{gap^r}{1-p}}\\
& = & \frac{gap^r}{p\ln(2)} + \frac{gap^{r}}{(1-p)\ln(2)} + o(gap^r)\\
& = & \frac{gap^r}{p(1-p)\ln(2)}+o(gap^r) \\
& = & \frac{gap^r}{p(1-p)\ln(2)}(1 +o(1)).
\end{eqnarray*}
\end{proof}

\subsection{Lemma \ref{lem:approxepsilon}} \label{app:lemapproxepsilon}
\begin{proof}
Expand \eqref{eqn:epsilondef}:
\begin{eqnarray*}
\epsilon 
&= & \sqrt{\frac{1}{K(p + gap^r)}}
 \sqrt{\lo{\frac{2}{h_b^{-1}(\delta(G))}}} \\
&= & \sqrt{\frac{1}{\ln(2) K(p + gap^r)}}
 \sqrt{\ln(2) - \ln(h_b^{-1}(\delta(G)))} \\
&\geq & \sqrt{\frac{1}{\ln(2) K(p + gap^r)}}
 \sqrt{\ln(2) - r\ln(gap) +\ln(2) - K_2\ln(2) + o(1)} \\
& = & \sqrt{\frac{1}{\ln(2) K(p + gap^r)}}
 \sqrt{r\ln(\frac{1}{gap}) + (2 - K_2)\ln(2) + o(1)} \\
& = & \sqrt{\frac{1}{\ln(2) K(p + gap^r)}}
 \sqrt{r\ln(\frac{1}{gap})} (1 + o(1)).
\end{eqnarray*} 
and similarly
\begin{eqnarray*}
\epsilon 
& = & \sqrt{\frac{1}{\ln(2) K(p + gap^r)}}
 \sqrt{\ln(2) - \ln(h_b^{-1}(\delta(G)))} \\
& \leq & \sqrt{\frac{1}{\ln(2) K(p + gap^r)}}
\sqrt{(2 - K_2)\ln(2) + \frac{d}{d-1}r\lon{\frac{1}{gap}} + o(1)}  \\
&=& \sqrt{\frac{r d}{\ln(2) (d-1)K(p + gap^r)}} \sqrt{\lon{\frac{1}{gap}}} 
 (1 + o(1)).
\end{eqnarray*}  
All that remains is to show that $K(p+gap^r)$ converges to $K(p)$ as
$gap\rightarrow 0$. Examine \eqref{eqn:Kdef}. The continuity of
$\frac{D(g+\eta ||g)}{\eta^2}$ is clear in the interior $\eta \in
(0,1-g)$ and for $g \in (0,\frac{1}{2})$. All that remains is to check
the two boundaries. $\lim_{\eta \rightarrow 0} \frac{D(g+\eta
  ||g)}{\eta^2} = \frac{1}{g(1-g)\ln 2}$ by the Taylor expansion of
$D(g+\eta ||g)$ as done in the proof of
Lemma~\ref{lem:approxdiv}. Similarly, $\lim_{\eta \rightarrow 1-g} \frac{D(g+\eta
  ||g)}{\eta^2} = D(1||g) = \lo{\frac{1}{1-g}}$. Since $K$ is a
minimization of a continuous function over a compact set, it is itself
continuous and thus the limit $\lim_{gap \rightarrow 0} K(p+gap^r) =
K(p)$.

Converting from natural logarithms to base 2 completes the proof.
\end{proof}

\subsection{Approximating the solution to the quadratic formula}
\label{app:nbsc}
In \eqref{eq:quad}, for $g=g^*=p+gap^r$,
\begin{eqnarray*}
a &=& D(g^*||p)\\
b &=& \epsilon \log_2\left( \frac{g^*(1-p)}{p(1-g^*)}   \right)\\
c &=& \lo{\pe_P} - \lo{h_b^{-1}(\delta(g^*))} +1.
\end{eqnarray*}
The first term, $a$, is approximated by Lemma~\ref{lem:approxdiv} so
\begin{equation} \label{eqn:aApprox}
a = gap^{2r}(\frac{1}{2 p(1-p)\ln(2)}+o(1)).
\end{equation}
Applying Lemma~\ref{lem:approxlogfrac} and
Lemma~\ref{lem:approxepsilon} reveals 
\begin{eqnarray} 
b & \leq & \sqrt{\frac{r d }{(d-1)K(p)}} \sqrt{\lo{\frac{1}{gap}}}
\frac{gap^r}{p(1-p)\ln(2)}(1+o(1)) \nonumber \\
& = & 
\frac{1}{p(1-p) \ln(2)} \sqrt{\frac{r d}{ (d-1)K(p)}}
\sqrt{gap^{2r} \lo{\frac{1}{gap}}}(1 + o(1))
 \label{eqn:bApproxUpper} \\
b & \geq & 
\frac{1}{p(1-p)  \ln(2)} \sqrt{\frac{r}{K(p)}}
\sqrt{gap^{2r} \lo{\frac{1}{gap}}}(1 + o(1)).
 \label{eqn:bApproxLower}
\end{eqnarray}
The third term, $c$, can be bounded similarly using
Lemma~\ref{lem:approxloghbinvdeltag} as follows,  
\begin{eqnarray}
c 
&=&   \beta \lo{gap} - \lo{h_b^{-1}(\delta(g^*))} +1 \nonumber \\
&\leq& (\frac{d}{d-1} r - \beta) \lo{\frac{1}{gap}} +
K_3 + o(1) \label{eqn:cApproxUpper} \\
&\geq& (r - \beta) \lo{\frac{1}{gap}} +
K_4 + o(1). \label{eqn:cApproxLower}
\end{eqnarray}
for a pair of constants $K_3,K_4$. Thus, for $gap$ small enough and
$r < \frac{\beta (d-1)}{d}$, we know that $c<0$.  

The lower bound on $\sqrt{n}$ is thus
\begin{eqnarray}
\nonumber\sqrt{n}&\geq & \frac{\sqrt{b^2-4ac}-b}{2a}\\
&=&\frac{b}{2a}\left(\sqrt{1-\frac{4ac}{b^2}}-1\right).
\label{eq:sqrtnbd}
\end{eqnarray}

Plugging in the bounds \eqref{eqn:aApprox} and
\eqref{eqn:bApproxLower} reveals that
\begin{equation}
\label{eq:bby2a}
\frac{b}{2a} 
\geq 
\frac{\sqrt{\lo{\frac{1}{gap}}}}{gap^r}
\sqrt{\frac{r}{K(p)}}
(1 + o(1))
\end{equation}
Similarly, using \eqref{eqn:aApprox}, \eqref{eqn:bApproxLower},
\eqref{eqn:cApproxUpper}, we get
\begin{eqnarray}
\frac{4ac}{b^2}
&\leq &
\frac{4gap^{2r} \left(\frac{1}{p(1-p)\ln(2)}\right) \times
  \left[ (\frac{d}{d-1}r - \beta) \lo{\frac{1}{gap}} + K_3\right](1+o(1))
}{(\frac{1}{p(1-p) \ln(2)})^2 \frac{r}{K(p)} gap^{2r}
  \lo{\frac{1}{gap}} (1 + o(1)) } \nonumber \\
&=& 4 p(1-p) K(p) \ln(2) \left[ \frac{d}{d-1} - \frac{\beta}{r}\right]  +
o(1). \label{eq:bby2b}
\end{eqnarray}
This tends to a negative constant since $r < \frac{\beta (d-1)}{d}$. 

Plugging \eqref{eq:bby2a} and \eqref{eq:bby2b} into
\eqref{eq:sqrtnbd} gives:

\begin{eqnarray}
n
&\geq&
[\sqrt{\frac{r}{K(p)}}
\frac{\lo{\frac{1}{gap}}}{gap^r} (1 + o(1))
\left(\sqrt{1 + 4 p(1-p) \ln(2) K(p) \left[ \frac{\beta}{r} - \frac{d}{d-1}
    \right] +o(1)}
-1\right)]^2 \nonumber \\
& = & [\frac{\lo{\frac{1}{gap}}}{gap^r}]^2
\frac{1}{K(p)}
\left(\sqrt{r + 4 p(1-p) \ln(2) K(p) \left[ \beta - \frac{r d}{d-1} \right]} - \sqrt{r}
\right)^2 (1 + o(1)) \nonumber \\
& = & \Omega \left( \frac{(\lo{1/gap})^2}{gap^{2r}}  \right)
\end{eqnarray}
for all $r\leq \min\{\frac{\beta}{\frac{d}{d-1}},1\}$. By taking $d$ arbitrarily large, we arrive at Theorem~\ref{thm:lbn} for the BSC.

\section{Approximation analysis for the AWGN channel}
\label{app:approxawgn}
Taking logs on both sides of~\eqref{eq:lbnawgn} for a fixed test channel $G$,
\begin{equation}
\ln(\pe_P) \geq \ln(h_b^{-1}(\delta(G))) - \ln(2) -
nD(\sigma_G^2||\sigma_P^2) - \sqrt{n}(\frac{3}{2} + 2\ln 2 - 2
\ln(h_b^{-1}(\delta(G)))\left(\frac{\sigma_G^2}{\sigma_P^2}-1\right), 
\end{equation}
Rewriting this in the standard quadratic form using 
\begin{eqnarray}
a &=& D(\sigma_{G^*}^2||\sigma_P^2), \label{eq:aGaussDef}\\
b &=& (\frac{3}{2} + 2\ln 2 - 2
\ln(h_b^{-1}(\delta(G))))\left(\frac{\sigma_G^2}{\sigma_P^2}-1\right),
\label{eq:bGaussDef} \\
c &= & \ln(\pe_P) - \ln(h_b^{-1}(\delta(G))) +\ln(2) \label{eq:cGaussDef}.
\end{eqnarray}
it suffices to show that the terms exhibit behavior as $gap\rightarrow
0$ similar to their BSC counterparts. 
 
For Taylor approximations, we use the channel $G^*$, with
corresponding noise variance $\sigma_{G^*}^2 = \sigma_P^2 + \zeta$,
where  
\begin{equation}
\label{eq:zeta}
\zeta = gap^r \left(\frac{2 \sigma_P^2 (P_T+\sigma_P^2)}{P_T}\right).
\end{equation}
\begin{lemma}
For small enough $gap$, for $\zeta$ as in~\eqref{eq:zeta}, if $r<1$ then $C({G^*})<R$.
\end{lemma}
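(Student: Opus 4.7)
The plan is to mimic the BSC argument of Lemma~\ref{lem:rlessthan1}: namely, do a first-order Taylor expansion of the AWGN capacity $C(\sigma^2) = \tfrac{1}{2}\log_2(1+P_T/\sigma^2)$ around $\sigma^2 = \sigma_P^2$, check that the specific choice of $\zeta$ in \eqref{eq:zeta} is precisely engineered so that $C(G^*)$ decreases by roughly $gap^r$ (up to a benign constant), and then use $r<1$ to conclude that this drop is strictly larger than $gap$ for all sufficiently small $gap$.

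First, I compute the derivative
\begin{equation*}
\left. \frac{dC}{d\sigma^2}\right|_{\sigma^2=\sigma_P^2}
= -\frac{P_T}{2\ln(2)\,\sigma_P^2(\sigma_P^2+P_T)}.
\end{equation*}
Substituting $\zeta$ from \eqref{eq:zeta} into the Taylor expansion $C(\sigma_P^2+\zeta) = C(\sigma_P^2) + \zeta\cdot \frac{dC}{d\sigma^2}\big|_{\sigma_P^2} + o(\zeta)$ gives, after the obvious cancellation of $P_T$ and $\sigma_P^2(\sigma_P^2+P_T)$,
\begin{equation*}
C(G^*) \;=\; C(P) \;-\; \frac{gap^r}{\ln 2} \;+\; o(gap^r).
\end{equation*}
The choice of $\zeta$ is precisely arranged so that the messy prefactor in the derivative cancels, leaving a clean $gap^r/\ln 2$ drop in capacity.

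It then remains to compare $C(G^*)$ with $R = C(P) - gap$. The inequality $C(G^*) < R$ reduces to
\begin{equation*}
\frac{gap^r}{\ln 2} - o(gap^r) \;>\; gap,
\end{equation*}
or equivalently $gap^{r-1}/\ln 2 > 1 + o(1)$. Since $r<1$ and $gap\to 0$, we have $gap^{r-1}\to\infty$, so this inequality holds for all sufficiently small $gap$. (The $o(gap^r)$ error term is negligible compared to $gap^r$ by definition, and $gap$ is itself negligible compared to $gap^r$ by the same argument.)

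I do not anticipate any real obstacle here; the statement is essentially a sanity check on the definition of $\zeta$, parallel to Lemma~\ref{lem:rlessthan1}. The only thing worth writing carefully is the Taylor expansion and the cancellation, because the somewhat baroque form of $\zeta$ in \eqref{eq:zeta} is what makes the resulting expansion clean. Once that is done, the monotonic comparison between $gap^r$ and $gap$ for $r<1$ closes the argument.
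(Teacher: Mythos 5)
Your proposal is correct and takes essentially the same approach as the paper: a first-order expansion of the AWGN capacity around $\sigma_P^2$, cancellation of the prefactor by the chosen form of $\zeta$, and then the observation that $gap^r$ dominates $gap$ for $r<1$ as $gap\to 0$. The paper works with explicit inequalities $\ln(1+x)\le x$ and $\ln(1+x)\ge c_s x$ in place of your $o(\cdot)$ error terms, but the argument is the same in substance.
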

\begin{proof}
Since $C(P)-gap=R>C({G^*})$, we must satisfy
\begin{eqnarray}
gap &\leq&
\frac{1}{2}\lo{1+\frac{P_T}{\sigma_P^2}}-\frac{1}{2}\lo{1+\frac{P_T}{\sigma_P^2+\zeta}}.
\nonumber 
\end{eqnarray}
So the goal is to lower bound the RHS above to show that
\eqref{eq:zeta} is good enough to guarantee that this is bigger than
the gap. 
So
\begin{eqnarray}
& = & \frac{1}{2}\left(\lo{1 + \frac{\zeta}{\sigma_P^2}} - \lo{1 +
  \frac{\zeta}{\sigma_P^2 + P_T}} \right) \nonumber \\
& = & \frac{1}{2}
\left(\lo{1 + 2 gap^r(1 + \frac{\sigma_P^2}{P_T})}
    - \lo{1 + 2 gap^r \frac{\sigma_P^2}{P_T}} \right) \nonumber \\
& \geq & \frac{1}{2}\left(\frac{c_s}{\ln(2)} 2gap^r(1 + \frac{\sigma_P^2}{P_T}) 
    - \frac{1}{\ln(2)} 2 gap^r \frac{\sigma_P^2}{P_T} \right) \nonumber \\
& = & gap^r \frac{1}{\ln(2)} \left(c_s - ( 1-c_s)\frac{\sigma_P^2}{P_T} \right).
\label{eq:satisfy}
\end{eqnarray}
For small enough $gap$, this is a valid lower bound as long as $c_s <
1$. Choose $c_s$ so that $1 < c_s <
\frac{\sigma_P^2}{P_T + \sigma_P^2}$. For $\zeta$ as
in~\eqref{eq:zeta}, the LHS is $gap^r K$ and thus clearly having $r<1$
suffices for satisfying~\eqref{eq:satisfy} for small enough
$gap$. This is because the derivative of $gap^r$ tends to infinity as
$gap \rightarrow 0$. 
\end{proof}
\vspace{0.1in}

In the next Lemma, we perform the approximation analysis for the terms
inside~\eqref{eq:aGaussDef}, \eqref{eq:bGaussDef} and
\eqref{eq:cGaussDef}.

\begin{lemma}  \label{lem:gaussianApproximations}
Assume that $\sigma_{G^*}^2 = \sigma_P^2 + \zeta$ where $\zeta$ is
defined in \eqref{eq:zeta}.
\begin{itemize}
\item[(a)] \begin{equation}
  \frac{\sigma_{G^*}^2}{\sigma_P^2}-1 = 
  gap^r \left(\frac{2 (P_T + \sigma_P^2)}{P_T} \right). \end{equation} 
\item[(b)] \begin{equation}
 \ln(\delta({G^*})) = r\ln(gap) + o(1)-\ln(C(P)). \end{equation}
\item[(c)] \begin{equation}
\ln(h_b^{-1}(\delta({G^*}))) \geq
  \frac{d}{d-1}r\ln(gap) + c_2,\end{equation} 
  for some constant $c_2$ that is a function of $d$.
\begin{equation}\ln(h_b^{-1}(\delta({G^*}))) \leq
  r\ln(gap) + c_3,\end{equation} for some constant $c_3$.
\item[(d)] \begin{equation}
D(\sigma_{G^*}^2||\sigma_P^2) =
 \frac{(P_T + \sigma_P^2)^2}{P_T^2} gap^{2r}(1 + o(1)).\end{equation} 
\end{itemize}
\end{lemma}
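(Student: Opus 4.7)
The plan is to handle the four parts sequentially; each reduces to a Taylor expansion in the small parameter $gap^r$, and the proof mirrors the BSC approximation analysis of Lemmas \ref{lem:approxloghbinvdeltag}--\ref{lem:approxepsilon}. Part (a) is immediate from the definition of $\zeta$ in \eqref{eq:zeta}: $\sigma_{G^*}^2/\sigma_P^2 - 1 = \zeta/\sigma_P^2$, and the $\sigma_P^2$ in the denominator cancels with the $\sigma_P^2$ factor in the numerator of $\zeta$, leaving exactly $gap^r \cdot 2(P_T+\sigma_P^2)/P_T$.

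For part (b), I would first Taylor expand the AWGN capacity $C(\sigma^2) = \frac{1}{2}\lo{1+P_T/\sigma^2}$ about $\sigma^2=\sigma_P^2$. A first-order expansion in $\zeta$ yields
\begin{equation}
C({G^*}) = C(P) - \frac{\lo{e}}{2}\cdot\frac{P_T}{\sigma_P^2(P_T+\sigma_P^2)}\zeta + O(gap^{2r}).
\end{equation}
Substituting the explicit $\zeta$ from \eqref{eq:zeta} and then dividing by $R=C(P)-gap$ in $\delta({G^*})=1-C({G^*})/R$ (and expanding $1/R = (1/C(P))(1+gap/C(P)+\cdots)$ exactly as in the BSC case) gives $\delta({G^*}) = K_0\cdot gap^r/C(P) + o(gap^r)$ for an explicit constant $K_0$. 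Taking natural logarithms delivers part (b), with the $-\ln(C(P))$ piece dropping out of the constant. Part (c) then follows by sandwiching $h_b^{-1}$ using \eqref{eqn:invbinentropylowerbound} and \eqref{eqn:invbinentropyupperbound} from Lemma \ref{lem:hbbound}; taking logs converts the polynomial bounds into affine bounds in $\ln(\delta({G^*}))$, and inserting the expansion from part (b) absorbs all the bounded pieces into the constants $c_2,c_3$.

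Part (d) is the one place where a small cancellation matters. Writing $u=\sigma_G^2/\sigma_P^2-1$, the Gaussian KL divergence becomes $D = \frac{1}{2}[u - \lon{1+u}]$. Since $\lon{1+u} = u - u^2/2 + O(u^3)$, the linear $u$-terms cancel and the leading behavior is $D = u^2/4 + O(u^3)$. Substituting $u = gap^r\cdot 2(P_T+\sigma_P^2)/P_T$ from part (a) gives $D = gap^{2r}(P_T+\sigma_P^2)^2/P_T^2 + O(gap^{3r})$, which is the claim. The only subtle point throughout is this vanishing of the linear term in the KL divergence; it mirrors the analogous cancellation in the BSC proof of Lemma \ref{lem:approxdiv} and is precisely what makes the quadratic-in-$\sqrt{n}$ structure of the lower bound go through in the AWGN case. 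Everything else is a routine first-order expansion.
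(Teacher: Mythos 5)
Your proposal is correct and follows essentially the same route as the paper: part~(a) is definitional, parts~(c) and~(d) use exactly the paper's steps (the sandwich via Lemma~\ref{lem:hbbound} for~(c), and the cancellation of the linear term in $\frac{1}{2}[u-\ln(1+u)]$ for~(d)), and for part~(b) your direct first-order Taylor expansion of $C(\sigma^2)$ is an equivalent computation to the paper's algebraic reorganization of the capacity difference into $\frac{1}{2}\lo{1+\zeta/\sigma_P^2}-\frac{1}{2}\lo{1+\zeta/(P_T+\sigma_P^2)}$ followed by log expansion. One small point worth noting: your coefficient $K_0$ works out to $1/\ln 2$ rather than $1$, so $\ln(\delta(G^*))$ carries an extra $-\ln(\ln 2)$ beyond $-\ln C(P)$; the paper's own derivation of~(b) appears to drop a $\ln 2$ at the step approximating $\lo{1+x}\approx x$, but since the constant is immediately absorbed into $c_2,c_3$ in part~(c), this is immaterial to the lemma's use.
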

\begin{proof}
\textit{(a) } Immediately follows from the definitions and \eqref{eq:zeta}.

\textit{(b) }We start with simplifying $\delta({G^*})$
\begin{eqnarray*}
\delta({G^*})&=&1-\frac{C({G^*})}{R}\\
&=&\frac{C-gap-\frac{1}{2} \lo{1+\frac{P_T}{\sigma_{G^*}^2}}}{C-gap}\\
&=&\frac{\frac{1}{2}\lo{1+\frac{P_T}{\sigma_P^2}}-\frac{1}{2}\lo{1+\frac{P_T}{\sigma_P^2+\zeta}}-gap}{C-gap}\\
&=&\frac{\frac{1}{2}\lo{(\frac{\sigma_P^2 +
      P_T}{\sigma_P^2})(\frac{\sigma_P^2+\zeta}{P_T + \sigma_P^2+\zeta})}-gap}{C-gap}\\
&=&\frac{\frac{1}{2}\lo{1+\frac{\zeta}{\sigma_P^2}}-\frac{1}{2}\lo{1+\frac{\zeta}{P_T
      +\sigma_P^2}}-gap}{C-gap} \\
&=&\frac{\frac{1}{2}\frac{\zeta}{\sigma_P^2}-\frac{1}{2}\frac{\zeta}{P_T
    +\sigma_P^2}+o(\zeta)-gap}{C-gap}\\
&=& \frac{\frac{1}{2}\left( \frac{\zeta P_T}{\sigma_P^2(P_T+\sigma_P^2)} +o(\zeta)\right)-gap}{C-gap}\\
&=& \frac{1}{C}(\frac{1}{2}\left( gap^r \frac{2 \sigma_P^2  (P_T+\sigma_P^2)}{P_T} \frac{P_T}{\sigma_P^2(P+\sigma_P^2)}
+o(gap^r)\right)-gap)(1 - \frac{gap}{C} + o(gap)) \\
&=& \frac{gap^r}{C}(1 +o(1)).
\end{eqnarray*}
Taking $\ln(\cdot{})$ on both sides, the result is evident.

\textit{(c)} follows from \textit{(b)} and Lemma~\ref{lem:hbbound}.

\textit{(d)} comes from the definition of
$D(\sigma_{G^*}^2||\sigma_P^2)$ followed immediately by the expansion 
$\ln(\sigma_{G^*}^2/\sigma_P^2)=\ln(1+\zeta/\sigma_P^2) = 
\frac{\zeta}{\sigma_P^2} - \frac{1}{2} (\frac{\zeta}{\sigma_P^2})^2 +
o(gap^{2r})$. All the constant and first-order in $gap^r$ terms
cancel since $\frac{\sigma_{G^*}^2}{\sigma_P^2} = 1 +
\frac{\zeta}{\sigma_P^2}$. This gives the result immediately. 
\end{proof}

\vspace{0.1in}

Now, we can use Lemma~\ref{lem:gaussianApproximations} to approximate
\eqref{eq:aGaussDef}, \eqref{eq:bGaussDef} and \eqref{eq:cGaussDef}.

\begin{eqnarray}
a & = & \frac{(P_T + \sigma_P^2)^2}{P_T^2} gap^{2r}(1 + o(1)) \label{eqn:aGaussApprox} \\
b & = &
\left(\frac{3}{2} + 2\ln 2 - 2\ln(h_b^{-1}\left(\delta(G)\right)) \right)
gap^r \frac{2 (P_T + \sigma_P^2)}{P_T} \nonumber \\
& \leq & 
\frac{2 d (P_T+\sigma_P^2)}{(d-1)P_T}  r\ln(\frac{1}{gap}) gap^r (1 + o(1)) \label{eqn:bGaussUpper} \\
b &\geq & 
\frac{2(P_T+\sigma_P^2)}{P_T} r\ln(\frac{1}{gap}) gap^r (1 + o(1)) \label{eqn:bGaussLower} \\
c & \leq &(\frac{d}{d-1}r -\beta) \ln(\frac{1}{gap})(1 + o(1)) \label{eqn:cGaussUpper}\\
c & \geq &(r -\beta) \ln(\frac{1}{gap})(1 + o(1)). \label{eqn:cGaussLower}
\end{eqnarray}

Therefore, in parallel to \eqref{eq:bby2a}, we have for the AWGN bound
\begin{equation} \label{eq:bbG2a}
\frac{b}{2a} \geq \frac{r P_T}{(P_T + \sigma_P^2)} \left(\frac{\ln(\frac{1}{gap})}{gap^r}\right)(1
+ o(1)).
\end{equation}
Similarly, in parallel to \eqref{eq:bby2b}, we have for the AWGN bound
\begin{eqnarray*}
\frac{4ac}{b^2} &\leq& 
(1 + o(1)) 
\frac{1}{r^2}(\frac{d}{d-1}r -\beta) \frac{1}{\ln(\frac{1}{gap})}.
\end{eqnarray*}
This is negative as long as $r < \frac{\beta (d-1)}{d}$, and so for
every $c_S < \frac{1}{2}$ for small enough $gap$, we know that 
\begin{eqnarray*}
\sqrt{1-\frac{4ac}{b^2}} -1&\geq& c_s \frac{1}{r^2}(\beta -
\frac{d}{d-1}r) \frac{1}{\ln(\frac{1}{gap})} (1 + o(1)).
\end{eqnarray*} 
Combining this with \eqref{eq:bbG2a} gives the bound:
\begin{eqnarray}
n & \geq & (1+o(1))[c_s \frac{1}{r^2}(\beta -
\frac{d}{d-1}r) \frac{1}{\ln(\frac{1}{gap})}  
\frac{r P_T}{P_T + \sigma_P^2}
\left(\frac{\ln(\frac{1}{gap})}{gap^r}\right)]^2 \\
& = & (1+o(1))[c_s \frac{P_T}{r (P_T + \sigma_P^2)}
(\beta - \frac{d}{d-1}r)
\left(\frac{1}{gap^r}\right)]^2.
\end{eqnarray}
Since this holds for all $0 < c_s < \frac{1}{2}$ and all $r <
\min(1,\frac{\beta (d-1)}{d})$ for all $d > 1$, Theorem~\ref{thm:lbn}
for AWGN channels follows. 

\section*{Acknowledgments}
Years of conversations with colleagues in the Berkeley Wireless
Research Center have helped motivate this investigation and informed
the perspective here. Cheng Chang was involved with the discussions
related to this paper, especially as regards the AWGN case. Sae-Young
Chung (KAIST) gave valuable feedback at an early stage of this
research and Hari Palaiyanur caught many typos in early drafts of this
manuscript. Funding support from NSF CCF 0729122, NSF ITR 0326503, NSF
CNS 0403427, and gifts from Sumitomo Electric.

\bibliographystyle{IEEEtran}
\bibliography{IEEEabrv,MyMainBibliography}

\end{document}